\newtheorem{theorem}{Theorem}[section]
\newtheorem{definition}{Definition}[section]
\newtheorem{proposition}[theorem]{Proposition}
\newtheorem{remark}[theorem]{Remark}
\def\ad{\text{ad}}
\numberwithin{equation}{section}
\begin{document}

\title[Networks of coadjoint orbits]{ Networks of coadjoint orbits:\\ from geometric to statistical mechanics\\ 
\vspace{10mm} {\it T\MakeLowercase{o} D\MakeLowercase{arryl} H\MakeLowercase{olm, for his 70th birthday}}}

\author[A. Arnaudon]{Alexis Arnaudon}
\author[S. Takao]{So Takao}
\address{AA, ST: Department of Mathematics, Imperial College, London SW7 2AZ, UK}

\maketitle

\begin{abstract}
A class of network models with symmetry group $G$ that evolve as a Lie-Poisson system is derived from the framework of geometric mechanics, which generalises the classical Heisenberg model studied in statistical mechanics.
We considered two ways of coupling the spins: one via the momentum and the other via the position and studied in details the equilibrium solutions and their corresponding nonlinear stability properties using the energy-Casimir method.
We then took the example $G=SO(3)$ and saw that the momentum-coupled system reduces to the classical Heisenberg model with massive spins and the position-coupled case reduces to a new system that has a broken symmetry group $SO(3)/SO(2)$ similar to the heavy top. In the latter system, we numerically observed an interesting synchronisation-like phenomenon for a certain class of initial conditions.
%An interesting phenomenon of partial synchronisation for a certain class of initial conditions.
Adding a type of noise and dissipation that preserves the coadjoint orbit of the network model, we found that the invariant measure is given by the Gibbs measure, from which the notion of temperature is defined. 
We then observed a surprising `triple-humped' phase transition in the heavy top-like lattice model, where the spins switched from one equilibrium position to another before losing magnetisation as we increased the temperature.
This work is only a first step towards connecting geometric mechanics with statistical mechanics and several interesting problems are open for further investigation.

\end{abstract}

\setcounter{tocdepth}{1}
\tableofcontents

\section{Introduction}
The main purpose of this paper is to provide a link between geometric mechanics and statistical mechanics following recent advances in stochastic geometric mechanics \cite{lazaro2008stochastic,arnaudon2014stochastic,holm2015variational,arnaudon2016noise,arnaudon2016stochastic}, where the inclusion of a structure-preserving noise in Lagrangian or Hamiltonian systems have been considered.
Geometric mechanics provides a mathematical framework for describing Hamiltonian mechanical systems, but with the addition of a structure-preserving noise and dissipation, a theory of statistical geometric mechanics becomes possible, as suggested in \cite{arnaudon2016noise}. 
Here, we take a step further to include lattices, or more generally, networks of these systems and study their stability and phase transitions, similar to the well-known Ising or Heisenberg models. 

The Ising model is one the most well-studied systems in statistical mechanics, which is a simplified model of ferromagnetism. Its significance in statistical mechanics comes from the fact that it is the simplest model that exhibits {\em phase transition}, which is ubiquitous in the theory of matter, such as in the transition from liquids to gases or in the magnetisation of materials. See for example Chandler \cite{chandler1987introduction} for a classic reference in statistical mechanics. Relaxing the discrete nature of the Ising model to a system of interacting unit vectors in $\mathbb R^3$ on a lattice, we obtain the classical Heisenberg model, which, again is a simple model of ferromagnetism but now with a continuous symmetry group $SO(3)$, instead of the discrete $C_2$-symmetry of the Ising model. This model exhibits a phase transition, albeit one with a different universality class due to the difference in the symmetry group.

On the other hand, geometric mechanics is a mathematical discipline that deals with the dynamics of a `few-body' mechanical system with symmetry and mainly concerned with issues such as symmetry reduction, integrability and stability of equilibria. For basic references in geometric mechanics, see for example Arnold \cite{arnold89mechanics}, Marsden and Ratiu \cite{marsden1999book} and Holm \cite{holm2008geometric}. Although the two fields are very different, recent progress in the theory of stochastic geometric mechanics has shed some light to bridge the gap between the two. Initiated in the work by Bismut \cite{bismut1982mecanique} for canonical systems and extended to general Hamiltonian systems in L\'azaro-Cam\'i and Ortega \cite{lazaro2008stochastic}, stochastic geometric mechanics have found applications in areas as vast as simple mechanical systems \cite{arnaudon2016noise2}, fluid dynamics \cite{holm2015variational}, geometric integrators \cite{bourabee2009stochastic}, stability theory \cite{arnaudon2017stability} and shape analysis \cite{arnaudon2016stochastic}.
In particular, Holm \cite{holm2015variational} introduced stochastic processes at the level of the variational principle for fluid systems in such a way that the momentum map is preserved, and this introduced the idea of {\em structure preserving noise}, where the added noise does not destroy the essential geometric feature of the system.
In a similar spirit, adding a dissipative mechanism in Hamiltonian systems that preserve the basic geometric structure has been considered by Bloch et al. \cite{bloch1996euler}, where a type of dissipative bracket that dissipates energy while preserving the coadjoint orbit is introduced. A similar type of bracket that dissipates a chosen conserved quantity while preserving another, called the {\em selective decay mechanism}, is considered in Gay-Balmaz and Holm \cite{gaybalmaz2013selective,gaybalmaz2014geometric}.

Our work is based on Arnaudon et al. \cite{arnaudon2016noise} where both noise and dissipation is introduced in a general finite-dimensional Hamiltonian system with configuration group $G$, phase space $T^*G$ and symmetry group $G$, in such a way that it preserves the coadjoint orbits, where the reduced dynamics take place. A remarkable consequence of adding noise and dissipation in this way is that, not only is the geometric structure of the equation preserved but also the invariant measure for the stochastic dissipative system is {\em exactly given by the Gibbs measure} restricted to the coadjoint orbit, provided that we choose an isotropic noise.  The emergence of the Gibbs measure in this system is key to connecting our theory with statistical mechanics, where it arises naturally as the invariant measure in a canonical ensemble, which is a system of particles in statistical equilibrium that is coupled to an external heat bath. This strongly suggests that we can approach this system from the point of view of statistical mechanics and vice versa.

 Our first goal of this paper is to construct a simple lattice model that has a general symmetry Lie group $G$, or more generally on a network, using symmetry reduction techniques in geometric mechanics and study their deterministic properties. By considering a $G$-invariant canonical Hamiltonian system on phase space $T^*G$ at each node of a given graph, we regard the momentum map $J: T^*G \rightarrow \mathfrak g^*$ as the {\em spins} in our model, which interact with other spins according to the structure of the graph, where the interaction takes place if they are connected by an edge. The coupling between the spins can be taken in two ways: (1) at the reduced space $\mathfrak g^*$ or (2) directly on the group $G$. In the former, which is the easier case, we introduce the coupling between two neighbouring spins in the {\em reduced} Hamiltonian {\em after} performing symmetry reduction. We call this approach {\em momentum coupling} since the variables that are coupled is the momentum map.
We will see that this approach generalises the classical Heisenberg model to include general symmetry groups. For the latter, we consider a representation of $G$ on a vector space $V$ at each node and couple the neighbours in the {\em unreduced} Lagrangian that depend on vectors in $V^*$ that are acted on by $G$. Since the neighbours are coupled with the `positions' of a given vector in $V^*$ that is rotated around by the $G$-action, we call this approach {\em position coupling}. The vector that is taken here for position coupling breaks the full $G$-symmetry of our system and hence we apply the semi-direct product reduction theorem given in Holm et al. \cite{holm1998euler} to obtain the corresponding Lie-Poisson system on the semi-direct product Lie algebra $\mathfrak g^* \,\circledS \, V^*$ at each node. This gives rise to a new system that finds no analogues in classical lattice models. In the special case where the Lie algebra $\mathfrak g = \text{Lie}(G)$ is compact and semi-simple, we were able to obtain the equilibrium solutions for both the momentum-coupled and position-coupled systems as the eigenvectors of a generalised graph Laplacian, which we construct from the underlying graph. Furthermore, we were able to classify these stationary eigenvectors into ferromagnetic and anti-ferromagnetic states, which is consistent with known equilibrium solutions of the Ising model or the Heisenberg model and we also investigated their corresponding stability properties using the Energy-Casimir method. Relaxing the condition that each spin must have unit length, many of the equilibrium solutions that we find here are new. 

Our second goal is to investigate the statistical mechanical properties of the lattice model constructed above with noise and dissipation of the type considered in Arnaudon et al. \cite{arnaudon2016noise}. In particular, we study the phase transition exhibited in the two examples that we consider here: the rigid body lattice and the heavy top lattice which are the simplest systems that can be obtained via momentum-coupling and position-coupling respectively, by taking $G=SO(3)$. 
The invariant measure is found to be the Gibbs measure restricted to the coadjoint orbit for both of these systems, which allows us to introduce the notion of temperature and hence study their respective temperature phase transition behaviours.
For the rigid body network, we observe a standard second-order phase transition in both the mean-field simulation and the direct simulation as expected. However, in the direct simulation of the heavy top network, we observe a new type of phase transition behaviour where the spins become aligned to two intermediate metastable states as we decrease the temperature before settling down to the lowest energy state, instead of jumping directly to the lowest energy state. This `triple-humped' phase transition is not captured in our mean-field simulation, which follows a standard 'single-humped' phase transition. 
In this study, we only looked at numerical simulations of the phase transition but much work needs to be done in the analysis to understand this phenomenon.

\subsection{Structure of the paper}

In section \ref{background}, we begin with a quick review of statistical mechanics and stochastic geometric mechanics and study the statistics of a single coadjoint orbit system in section \ref{single}. Then, we introduce the notion of momentum coupling in section \ref{SS-section} to construct a lattice model with symmetry group $G$ and study in detail the equilibrium solutions and their stability properties. We will then add noise and dissipation to the system and show that the invariant measure is given by the Gibbs measure.
In section \ref{section-RB}, we study an example of this system by taking $G=SO(3)$, which we call the rigid body network.
In section \ref{SD-section}, we introduce the idea of position-coupling and derive the corresponding network Lie-Poisson model using semi-direct product reduction. Noise and dissipation are then added to the system in a similar way. This theory is then illustrated concretely in section \ref{section-HT} by considering the case $G=SO(3)$, which we call the heavy top network. In section \ref{PT-section}, we study the mean-field approximations and the phase transition behaviours of the rigid body network and the heavy top network. Finally, we give a conclusion and discuss further work in section \ref{conclusion}.

\subsection{List of symbols}
This work contains notations from three topics: geometric mechanics, graph theory and stochastic analysis. For clarity, we summarize the main notations we will be using here.  
\begin{itemize}
  \item[] $G\qquad $  a Lie group
  \item[] $g\qquad $  an element in a Lie group
\item[] $\mathfrak g\qquad $  Lie algebra corresponding to $G$
  \item[] $\mu\qquad $  an element in $\mathfrak g^*$
\item[] $\xi\qquad $  an element in $\mathfrak g$
  \item[] $k\qquad $  dimension of $\mathfrak g$
  \item[] $\mathcal O\qquad $ coadjoint orbit of $G$ in $\mathfrak g^*$
  \item[] $\mathcal N\qquad $  connected, undirected graph
  \item[] $N\qquad $  number of vertices in $\mathcal N$
\item[] $d_i\qquad $  number of edges stemming from vertex $i$ in $\mathcal N$
\item[] $A\qquad $  adjacency matrix
\item[] $D\qquad $  degree matrix
  \item[] $L_0\qquad $ the graph Laplacian
  \item[] $L\qquad $ the symmetric normalised graph Laplacian
\item[] $\mathbb A\qquad $  the extended adjacency matrix
\item[] $\mathbb D\qquad $  the extended degree matrix
  \item[] $\mathbb L\qquad $ the extended normalised Laplacian
\item[] $\mathbb 1\qquad $  the identity matrix
\item[] $\mathbb d\qquad $  stochastic time increment 
\item[] $\mathbb P\qquad $  probability distribution
\item[] $\Braket {\cdot }\qquad $ averaging 
\item[] $dW_t\qquad\hspace{-0.17in} $  standard Wiener process
\item[] $\sigma\qquad $ noise amplitude
\item[] $\theta \qquad $ dissipation amplitude
\item[] $\beta\qquad $ inverse temperature
\end{itemize}

%%%%%%%%%%%%%%%%%%%%%%%%%%%%%%%%%%%%%%%%%%%%%
\section{Background} \label{background}

\subsection{Statistical mechanics}

Statistical mechanics is a physical theory used to describe the macroscopic properties, such as temperature and entropy of a many-particle system (e.g. gas or lattice) evolving as a canonical Hamiltonian system that fluctuates around a mean state. This theory is useful for systems with a large degree of freedom where first of all, solving the full system is computationally expensive and second of all, the information that we need is independent of the motion of individual particles in the system. 
In equilibrium statistical mechanics, one deals with a system that is in {\em statistical equilibrium}, that is, the particles are distributed according to an invariant measure of the underlying system. The system of particles are often assumed to be in statistical equilibrium according to the following three thermodynamic ensembles

\begin{enumerate}
\item {\em Micro-canonical ensemble}: the system is isolated and there is no energy exchange or particle exchange with the exterior. Each state is equally probable to occur.
\item {\em Canonical ensemble}: the system is coupled to a heat bath of fixed temperature and fixed average energy and there is no particle exchange. The states are distributed according to a Gibbs measure.
\item {\em Grand-canonical ensemble}: the system is coupled to a heat bath of a fixed temperature and is subject to particle exchange.
\end{enumerate}

For systems with inter-particle interactions such as in lattice models, a phenomenon called {\em phase transition} is often observed as the temperature is varied, which describes an abrupt change of state in a system. A classic example of this is the transition from liquid to gas at the boiling point. Now, consider a simple 2-dimenstional lattice model with $N$ nodes that take the values {\em up} ($+1$) or {\em down} ($-1$) at each node. The energy of this system at a particular state is given by
\begin{align}
H = - \sum_{i \sim j}^N J_{ij} s_i s_j, \quad s_i, s_j \in \{1, -1\}\, ,
\end{align}
for constants $J_{ij}>0$ and $i \sim j$ means that nodes $i$ and $j$ are adjacent on the lattice. We denote by $\Omega := \{1, -1\}^N$ for the sample space of all possible configurations. Assuming that the system is in statistical equilibrium in a canonical ensemble with temperature $T$, the probability of selecting a configuration $\boldsymbol s = (s_1, \ldots, s_N) \in \Omega$ with a given energy level $H_0$ is given by the {\em Gibbs distribution}
\begin{align}
\mathbb P(X = \boldsymbol s \,|\,T) = Z_\beta^{-1} e^{-\beta H_0}\, , \quad H(\boldsymbol s) = H_0\, ,
\end{align}
where the normalising constant $Z_\beta$ is called the {\em partition function} and $\beta$ is the {\em inverse temperature}, defined by $\beta := \frac{1}{k_B T}$, where $k_B$ is the {\em Boltzmann constant}. It is well known that there exists some $T = T_c$, called the {\em critical temperature} such that the {\em average magnetisation} defined by
\begin{align}
\langle M \rangle = \frac1N \sum_{\boldsymbol s \in \Omega} \sum_{i=1}^N s_i \, \mathbb P(X = \boldsymbol s \,|\,T)
\end{align}
vanishes for $T > T_c$ and becomes strictly positive for $T < T_c$. This is an example of a so-called {\em second-order phase transition} and the model that we just considered is the {\em 2D Ising model} which is the simplest known system that admits a phase transition. Now, instead of taking the values $\{1, -1\}$, one can generalise this system so that the spins $s_i$ are unit vectors in $\mathbb R^3$ and we obtain the {\em classical Heisenberg model} which also exhibits a similar second-order phase transition only if the interaction is anisotropic, in which case it is often called the $XYZ$, or $XXZ$-model.

The ideas that we discussed here are elementary in statistical mechanics and can be found in most textbooks on the topic. Here, we will refer to Chandler \cite{chandler1987introduction} for a more thorough physical exploration of this topic and to \cite{ruelle2004thermodynamic} for a rigorous mathematical formulation of statistical mechanics.

\subsubsection{Other developments in geometric statistical mechanics}

Remarkably, in the early days of modern geometric mechanics, one of the founders of the field, Jean-Marie Souriau, had already attempted to apply ideas from geometric mechanics to statistical mechanics. 
His original idea was to impose natural symmetries on the Gibbs distribution and obtain a vector of inverse temperatures $\beta\in \mathfrak g$ in such a way that the symmetry Lie group $G$ corresponding to $\mathfrak g$ survives in the construction of statistical quantities, such as the entropy.
Some of his works in this direction can be found in \cite{souriau1966definition,souriau1974mecanique,souriau1969structure}.
However, this theory attracted very little attention at the time until only recently, where several authors modernized the original work of Souriau. 
We refer to \cite{barbaresco2014koszul,barbaresco2015symplectic,marle2016tools} as well as to \cite{barbaresco2016geometric} for a complete review on the history of Souriau's theory.

In parallel to this, there has also been recent work by Gay-Balmaz and Yoshimura \cite{gay2017lagrangianI,gay2017lagrangianII} based on infinite dimensional geometric mechanics and a certain class of dynamical constraints to describe various systems in the theory of non-equilibrium thermodynamics.

\subsection{Stochastic geometric mechanics}

A systematic theory for introducing stochastic processes into mechanics started with the work of Bismut \cite{bismut1982mecanique}, where the noise was introduced at the level of the variational principle and the corresponding stochastic Euler-Lagrange equations were derived. 
This theory has seen a resurgence recently and has been extended by several authors such as L\'azaro-Cam\'i and Ortega \cite{lazaro2008stochastic} and Bou-Rabee and Owhadi \cite{bourabee2009stochastic} based on more modern geometric mechanical techniques.
Also recently, Arnaudon, Chen and Cruzeiro \cite{arnaudon2014stochastic} and Holm \cite{holm2015variational} added noise in the variational principle that is compatible with symmetry reduction to obtain a corresponding stochastic Euler-Poincar\'e equation.
In the former, the noise was introduced at the level of the Lie group and using symmetry reduction together with taking an expectation, they obtained a dissipative deterministic equation with applications in infinite dimensional systems, such as the Navier-Stokes equation in fluid dynamics. 
In the latter, which we will base our work on, the noise is introduced directly into the reconstruction relation in the variational principle and followed by symmetry reduction, a fully stochastic Euler-Poincar\'e equation is derived. 
Based on this work, Arnaudon, De Castro and Holm \cite{arnaudon2016stochastic} studied in detail the finite dimensional analogue of this construction together with double bracket dissipation, and the implication of noise on the nonlinear stability of relative equilibria was investigated in Arnaudon, Ganaba and Holm \cite{arnaudon2017stability}.
We refer to \cite{arnaudon2016stochastic} and references therein for more details on other related works. 

In the present text, we will use the finite dimensional equations derived in Arnaudon, De Castro and Holm \cite{arnaudon2016stochastic}, hence we will refer to this work for more details about the derivations of the stochastic equations used here. 

For a dynamical system with configuration group $G$, phase space $TG$ and a $G$-invariant Lagrangian $L(g, \dot g)$, where $(g, \dot g) \in TG$, we can apply the theory of reduction by symmetry to show that the Euler-Lagrange equation on $TG$ associated to this Lagrangian is equivalent to the Euler-Poincar\'e equation 
\begin{align}
    \frac{d}{dt}\frac{\partial l}{\partial \xi} + \mathrm{ad}^*_\xi \frac{\partial l}{\partial \xi}=0\, , 
\end{align}
 on $\mathfrak g$, where $\xi= g^{-1} \dot g\in \mathfrak g$, the reduced Lagrangian is $l(\xi) := L(g, \dot g)$ and $\mathrm{ad}^*$ is the dual of the adjoint action. 
From the definition of the reduced variable $\xi$, it is possible to reconstruct the solution on the original configuration manifold $G$, using the formula
\begin{align}
    \dot g = g\xi\, , 
\end{align}
called the reconstruction relation. 
The introduction of noise appears at this level in the theory of reduction by symmetry by replacing the above relation by a stochastic process
\begin{align}
    \mathbb d g = g\xi \, dt + \sum_{l=1}^K g \sigma_l \circ dW_t^l\, , 
    \label{sto-rr}
\end{align}
where $W_t^l$ are $K$ independent standard Wiener processes, $\mathbb d$ is the stochastic evolution operator and $\sigma_l\in \mathfrak g$ are given Lie algebra elements which represents the amplitude and direction of the noise. 
We interpret this as a Stratonovich SDE, represented by the symbol $\circ$ so that we can use the standard rules of calculus. 

From the above stochastic reconstruction relation, one can compute the stochastic variations 
\begin{align}
    \delta \xi = \mathbb d\eta + \mathrm{ad}_\xi\eta \,dt + \sum_{l=1}^K \mathrm{ad}_{\sigma_l}\eta \circ dW_t^l\, ,     \quad \eta := g^{-1} \delta g\, , 
\end{align}
and the stochastic Euler-Poincar\'e equation will follow from the same variational problem exactly as in the deterministic case. 

As we mentioned in the introduction, the link between geometric mechanics and statistical mechanics requires a mechanism of dissipation to balance the energy input of the noise and hope for the system to reach a statistical equilibrium. 
As given in Arnaudon et al. \cite{arnaudon2016stochastic}, we use the double bracket dissipation introduced by Bloch et al, \cite{bloch1996euler} and extended in Gay-Balmaz and Holm \cite{gaybalmaz2013selective, gaybalmaz2014geometric} to obtain the following stochastic Lie-Poisson equation with dissipation 
\begin{align}        
    \mathbb d\mu &+ \mathrm{ad}^*_\frac{\partial h}{\partial \mu} \mu\, dt 
    + \theta\, \mathrm{ad}^*_\frac{\partial C}{\partial \mu} \left [ \frac{\partial C}{\partial \mu}, \frac{\partial h}{\partial \mu} \right ]^\flat \, dt + \sum_l\mathrm{ad}^*_{\sigma_l} \mu \circ dW_t^l = 0 \,,
    \label{SEP-Diss}
\end{align}
where $h:\mathfrak g^*\to \mathbb R$ is the reduced Hamiltonian, $C$ is a given Casimir function of the Lie-Poisson structure and the momentum variable $\mu\in \mathfrak g^*$ is conjugate to the reduced velocity $\xi \in \mathfrak g$ via the Legendre transform. 
We denote by $\flat:\mathfrak g \to \mathfrak g^*$ to be the canonical isomorphism between $\mathfrak g$ and its dual given an inner product, and $\theta \in \mathbb R$ parametrizes the strength of dissipation. 
This equation can be derived from a variational principle where the dissipation is inserted as a force in the Lagrange-d'Alembert framework, see \cite{arnaudon2016stochastic} for more details. 

One can see that the dissipative term and the noise preserves the coadjoint orbit
\begin{align}
    \mathcal O_\mu = \{ \mathrm{Ad}^*_g \mu, \quad  \forall g \in G\} \, , 
    \label{coadj-def}
\end{align}
where $\mu= \mu(0)$. 
In the following, we will denote a generic coadjoint orbit by $\mathcal O$, discarding the foot point. 
In particular, the Casimir functions $C:\mathfrak g^*\to \mathbb R$ (in general, the system admits several Casimir functions, but we only select one here) is conserved whereas the energy decays due to the dissipation and fluctuate due to the noise. 

The statistical information of the process can be captured via the Fokker-Planck equation, which describes the time evolution of the probability distribution of the process $\mathfrak \mu$. 
We refer to \cite{arnaudon2016stochastic} for the derivation of the Fokker-Planck equation for the process $\mathfrak \mu$ in geometric form, given by
\begin{align}
    \frac{d}{dt} \mathbb P(\mu) + \{h,\mathbb P\}  +\theta\, \left \langle\left [\frac{\partial \mathbb P}{\partial \mu}, \frac{\partial C}{\partial \mu}\right], \left [ \frac{\partial h}{\partial \mu}, \frac{\partial C}{\partial \mu}\right]^{\flat} \right \rangle - \frac12 \sum_l  \{\Phi_l,\{\Phi_l,\mathbb P\}\}=0\, ,
    \label{FP-Diss}
\end{align}
where $\langle \cdot, \cdot \rangle$ is the natural pairing on $\mathfrak g$, $\{f, h\}(\mu) =\left  \langle \mu, \left [ \frac{\partial f}{\partial \mu},  \frac{\partial h}{\partial \mu}\right ]\right \rangle $ is the Lie-Poisson bracket and $\Phi_l= \langle \sigma_l, \mu\rangle$ are the stochastic potentials.  
We now state the result in \cite{arnaudon2016stochastic} that forms the basis of our present work. 
\begin{theorem}\label{FP-diss-thm}
    The stationary distribution of the stochastic process \eqref{SEP-Diss} with an isotropic noise, that is $\sigma_i = \sigma e_i$ (for $e_i$ a basis of $\mathfrak g$), is the Gibbs measure on the coadjoint orbit, that is 
    \begin{align}
        \mathbb P_\infty(\mu) = Z^{-1} e^{-\beta h(\mu)}\, ,  
    \label{Gibbs-def}    
    \end{align}
    where $\beta= \frac{2\theta}{\sigma^2}= \frac{1}{k_B T}$ is the inverse temperature ($k_B$ is Boltzmann's constant) and $Z$ is the normalisation constant, or partition function 
    \begin{align}
        Z = \int_{\mathcal O}  e^{-\beta h(\mu)} d\mu \, . 
        \label{Z-def}
    \end{align}
\end{theorem}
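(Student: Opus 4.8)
\emph{Proof strategy.} The plan is to verify directly that the Gibbs density \eqref{Gibbs-def} is a stationary solution of the Fokker--Planck equation \eqref{FP-Diss}, i.e.\ that its three spatial terms cancel once we set $\mathbb P=\mathbb P_\infty$ and $\frac{d}{dt}\mathbb P_\infty=0$. Everything hinges on the elementary observation that $\mathbb P_\infty$ depends on $\mu$ only through the energy $h$, so that
\begin{align}
\frac{\partial \mathbb P_\infty}{\partial \mu}=-\beta\,\mathbb P_\infty\,\frac{\partial h}{\partial \mu}\,.
\label{proposal-grad}
\end{align}
I would then treat separately the three contributions of \eqref{FP-Diss}: the Lie--Poisson transport term, the double-bracket dissipation, and the second-order noise term.

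First, the transport term drops out for free: since $\mathbb P_\infty=f(h)$ is a function of $h$ alone, $\{h,\mathbb P_\infty\}=f'(h)\{h,h\}=0$. Equivalently, the Lie--Poisson flow is divergence-free on the orbit (Liouville) and conserves $h$, hence preserves every density of the form $e^{-\beta h}$. Thus only the dissipative and the noise terms remain, and the whole content of the theorem is that these two balance. Substituting \eqref{proposal-grad} into the dissipative term and using bilinearity of the bracket gives
\begin{align}
\theta\left\langle\Big[\tfrac{\partial \mathbb P_\infty}{\partial\mu},\tfrac{\partial C}{\partial\mu}\Big],\Big[\tfrac{\partial h}{\partial\mu},\tfrac{\partial C}{\partial\mu}\Big]^{\flat}\right\rangle
=-\theta\beta\,\mathbb P_\infty\left\langle\Big[\tfrac{\partial h}{\partial\mu},\tfrac{\partial C}{\partial\mu}\Big],\Big[\tfrac{\partial h}{\partial\mu},\tfrac{\partial C}{\partial\mu}\Big]^{\flat}\right\rangle\,,
\end{align}
that is, $-\theta\beta\,\mathbb P_\infty$ times the (nonnegative) squared norm of $[\partial h/\partial\mu,\partial C/\partial\mu]$. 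For the noise term I would use that each $\{\Phi_l,\cdot\}$ is a derivation, so that by \eqref{proposal-grad} one has $\{\Phi_l,\mathbb P_\infty\}=-\beta\,\mathbb P_\infty\,\{\Phi_l,h\}$; expanding $\{\Phi_l,\{\Phi_l,\mathbb P_\infty\}\}$ then produces a quadratic piece proportional to $\beta^2\sum_l\{\Phi_l,h\}^2$ together with a genuinely second-order piece proportional to $\beta\sum_l\{\Phi_l,\{\Phi_l,h\}\}$.

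The conceptual key, and the step I expect to carry the real weight, is to recognise the combined dissipation-plus-noise dynamics as an overdamped Langevin system on the coadjoint orbit $\mathcal O$ of \eqref{coadj-def}, equipped with the normal metric induced by the chosen inner product on $\mathfrak g$. Concretely, for isotropic noise $\sigma_l=\sigma e_l$ with $\{e_l\}$ orthonormal, the generator $\tfrac12\sum_l\{\Phi_l,\{\Phi_l,\cdot\}\}$ should be identified with $\tfrac{\sigma^2}{2}\Delta_{\mathcal O}$, the Laplace--Beltrami operator of this metric, while the double bracket is exactly $\theta$ times the gradient flow of $h$ on $\mathcal O$ in the sense of Bloch et al.\ \cite{bloch1996euler}. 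Granting this, both surviving terms collapse onto the single structure $\mathrm{div}_{\mathcal O}\!\big(\mathbb P_\infty\,\mathrm{grad}_{\mathcal O}\,h\big)$, since $\Delta_{\mathcal O}\mathbb P_\infty=-\beta\,\mathrm{div}_{\mathcal O}(\mathbb P_\infty\,\mathrm{grad}_{\mathcal O}\,h)$, and the stationary equation reduces to
\begin{align}
\Big(\theta-\tfrac{\sigma^2\beta}{2}\Big)\,\mathrm{div}_{\mathcal O}\!\big(\mathbb P_\infty\,\mathrm{grad}_{\mathcal O}\,h\big)=0\,,
\end{align}
which holds for all $\mu$ precisely when $\beta=2\theta/\sigma^2$, as claimed.

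The hard part will be justifying the identification $\tfrac12\sum_l\{\Phi_l,\{\Phi_l,\cdot\}\}=\tfrac{\sigma^2}{2}\Delta_{\mathcal O}$ with no spurious first-order drift. This requires the inner product defining $\flat$ to be $\mathrm{Ad}$-invariant, so that the coadjoint action is by isometries of the normal metric and the fundamental vector fields $\mu\mapsto\mathrm{ad}^*_{e_l}\mu$ are Killing, hence divergence-free with respect to the Riemannian volume; only then does summing their squares reproduce the Laplacian cleanly. A secondary subtlety is the choice of reference measure: the Langevin balance is natural with respect to the Riemannian volume of the normal metric, whereas $d\mu$ in \eqref{Z-def} is the orbit's symplectic (Liouville) measure. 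On a single fixed orbit these differ only by a constant, which is absorbed into the partition function $Z$, so the conclusion is unaffected. Alternatively, one may bypass the geometric identification entirely and carry out the cancellation in a basis using the structure constants and the contraction identities they satisfy for an $\mathrm{Ad}$-invariant inner product; this is more computational but makes the dependence on the relation $\beta=2\theta/\sigma^2$ fully explicit.
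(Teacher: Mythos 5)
Your proposal is correct, and it is worth recording how it relates to the paper's proof, which consists of the single sentence that direct substitution of \eqref{Gibbs-def} into the Fokker--Planck equation \eqref{FP-Diss} yields $\frac{d}{dt}\mathbb P_\infty=0$, with the actual computation deferred to \cite{arnaudon2016stochastic}. Both arguments are verifications of stationarity, so the starting move is identical; what differs is the mechanism by which you close the cancellation. The paper (via its reference) relies on the brute-force bracket computation, whereas you recast the dissipation-plus-noise part as an overdamped Langevin system on the orbit \eqref{coadj-def}: the double bracket is $\theta$ times the gradient flow of $h$ in the normal metric (Bloch et al.\ \cite{bloch1996euler}), and the isotropic-noise generator $\tfrac12\sum_l\{\Phi_l,\{\Phi_l,\cdot\}\}$ is $\tfrac{\sigma^2}{2}\Delta_{\mathcal O}$, the Laplace--Beltrami operator of that metric, so stationarity of $e^{-\beta h}$ collapses to $\bigl(\theta-\tfrac{\sigma^2\beta}{2}\bigr)\,\mathrm{div}_{\mathcal O}\bigl(\mathbb P_\infty\,\mathrm{grad}_{\mathcal O}h\bigr)=0$ and the relation $\beta=2\theta/\sigma^2$ is visibly a fluctuation--dissipation balance rather than the outcome of index gymnastics. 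The two subtleties you flag are genuine and correctly resolved: $\mathrm{Ad}$-invariance of the inner product makes the fields $\mu\mapsto\mathrm{ad}^*_{e_l}\mu$ Killing and divergence-free, so the sum of squares carries no spurious drift, and on a single coadjoint orbit the Riemannian volume of the normal metric and the Liouville measure $d\mu$ are both $G$-invariant measures on a homogeneous space, hence proportional, with the constant absorbed into $Z$ in \eqref{Z-def}. One further point in your favor that you did not advertise: a literal term-by-term substitution into \eqref{FP-Diss} exactly as printed does not quite close. After the terms quadratic in $\beta$ cancel (which already fixes $\beta=2\theta/\sigma^2$), a remainder proportional to $\mathbb P_\infty\,\Delta_{\mathcal O}h$ survives, because the printed dissipative term records only the $\langle\mathrm{grad}\,\mathbb P,V\rangle$ part of the full divergence $\mathrm{div}(\mathbb P V)$ of the double-bracket drift $V$, and $V$ is not divergence-free: $\mathrm{div}_{\mathcal O}V$ is proportional to $\Delta_{\mathcal O}h$, as an explicit check on $\mathfrak{so}(3)^*$ with $C=\tfrac12\langle\mu,\mu\rangle$ confirms. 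This is presumably a transcription slip relative to \cite{arnaudon2016stochastic}, but it means your divergence-form rewriting is not merely a more conceptual presentation --- it is the version of the substitution argument that actually closes as stated, and making the $\mathbb P\,\mathrm{div}_{\mathcal O}V$ term explicit is exactly the repair needed.
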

\begin{proof}
One can prove this by a direct substitution into the Fokker-Planck equation \eqref{FP-Diss} and obtain $\frac{d}{dt} \mathbb P_\infty= 0$. 
\end{proof}

We need to stress an important point here.  
As opposed to the standard theory of statistical mechanics where the partition function is an integral over the full phase space, our partition function \eqref{Z-def} is defined as an integral over the coadjoint orbit and not over the whole space $\mathfrak g^*$. 
This is a result of the nonlinear dissipation and the multiplicative noise and makes the effective phase space compact if the Lie group is compact.
It also has the advantage of being an embedding in $\mathbb R^n$, rather than a general manifold where the definition of stochastic processes is more involved. 
Notice the requirement of isotropic noise, necessary to obtain the simple Gibbs form of the stationary distribution. 
In the non-isotropic case, the distribution would be close but different from the Gibbs distribution, see \cite{arnaudon2016stochastic} for more details. 

Hereafter, we will not use the Fokker-Planck equation as we will assume implicitly that the system is at statistical equilibrium, that is, our system is distributed according to the Gibbs measure. 
We can therefore apply results from the theory of equilibrium statistical mechanics to study our system. 
One of the fundamental results is the fluctuation-dissipation theorem, which states that for a system to be at a statistical equilibrium, the diffusion coefficient must be a function of the dissipation coefficient or vice versa. 
In our case, this relation is given by $\theta = \frac12 \beta \sigma^2$, and is often called {\em Einstein's relation}.
For our system, statistical equilibrium states exist for all pairs $(\theta,\sigma)$ but when the temperature is fixed, one of the variables depends on the other. 
We will not go further into this issue as it is rather simple as we are dealing with a classical system, but this theorem is important especially for quantum systems. 
We refer to Kubo \cite{kubo1966fluctuation} for an early theoretical exposition of this theorem, and the many references therein for more details on this topic.

%%%%%%%%%%%%%%%%%%%%%%%%%%%%%%%%%%%%%%%%%%%%%
\section{Statistics in a single coadjoint orbit}\label{single}

We illustrate our statistical analysis by first considering the statistics for a single coadjoint orbit.
The key object in equilibrium statistical mechanics is the partition function \eqref{Z-def} since most statistical quantities can be derived from it. 
Unfortunately, the partition function cannot be integrated analytically in general, except for a few simple examples.  

The average energy of the system is defined as
\begin{align}
    \Braket{E} := \int_{\mathcal O} h(\mu) \mathbb P_\infty (\mu) d\mu = Z^{-1}\int_\mathcal{O} h(\mu) e^{-\beta h(\mu)} d\mu\, ,
\end{align}
but can be obtained directly from the partition function as
\begin{align}
    \Braket{E} = - \frac{\partial}{\partial \beta} \mathrm{log}(Z)\, . 
\end{align}

In the case of compact coadjoint orbits, the average energy is bounded from above and saturtes rapidly. 
Indeed, as $T\to \infty$, the Gibbs distribution $\mathbb P_\infty(\mu) \to 1$  for all $\mu\in \mathcal O$ and we have the inequality
\begin{align}
    \Braket{E}= \int_\mathcal{O} h(\mu) \mathbb P_\infty(\mu) d\mu \leq \int_\mathcal{O} h(\mu) \, d\mu < \infty\, . 
\end{align}

Similarly, the energy fluctuation  around $\braket{E}$ is defined as 
\begin{align}
    \Delta_2 \Braket{E} := \braket{E^2}- \braket{E}^2 = \frac{\partial^2}{\partial \beta^2} \mathrm{log} Z= - \frac{\partial \Braket{E}}{\partial \beta}\, . 
\end{align}
As for the mean energy, it saturates rapidly as $T\to \infty$. 

We finally arrive at the definition of entropy $S$ on the coadjoint orbit, given by the famous relation
\begin{align}
    S := - k_B \int_\mathcal{O} \mathbb P_\infty(\mu) \mathrm{log} \left (\mathbb P_\infty(\mu)\right ) d\mu \, , 
\end{align}
or, in terms of the partition function,
\begin{align}
    S = k_B \frac{\partial}{\partial T} \left ( T\, \mathrm{log}(Z)\right )\, . 
    \label{S-Z}
\end{align}
If the orbit is compact, the entropy saturates as well for large $T$. 

The entropy is an important quantity as it is maximized by the Gibbs measure, under the constraint that the mean energy $\Braket{E}$ is fixed. 
This is, in a sense the definition of the canonical ensemble in statistical mechanics.  
\begin{theorem}
    The Gibbs distribution \eqref{Gibbs-def} is the distribution that maximizes the entropy \eqref{S-Z} of the system.     
    \label{S-thm}
\end{theorem}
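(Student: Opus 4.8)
The plan is to establish this as an instance of the maximum-entropy principle, proved most cleanly through the non-negativity of the relative entropy (the Gibbs inequality) rather than through an explicit Lagrange-multiplier computation; the former yields a genuine global maximum without a separate second-variation check. The first thing I would make precise is the constraint set, since the statement is false without it: as stressed in the preceding paragraph, the maximisation runs over all probability densities $\mathbb P$ on the orbit $\mathcal O$ obeying the normalisation $\int_{\mathcal O}\mathbb P\,d\mu = 1$ together with the fixed-mean-energy condition $\int_{\mathcal O} h\,\mathbb P\,d\mu = \Braket{E}$, the latter being exactly what characterises the canonical ensemble. Dropping the energy constraint would make the uniform density the maximiser instead, so this condition is essential.

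The central step is to compare an arbitrary admissible $\mathbb P$ against the Gibbs density $\mathbb P_\infty$ through the relative entropy
\begin{align}
    D(\mathbb P \,\|\, \mathbb P_\infty) := \int_{\mathcal O} \mathbb P(\mu)\, \mathrm{log}\frac{\mathbb P(\mu)}{\mathbb P_\infty(\mu)}\, d\mu \, .
\end{align}
Substituting $\mathrm{log}\,\mathbb P_\infty = -\mathrm{log}\,Z - \beta h$ from \eqref{Gibbs-def} and invoking both constraints, the cross term collapses to the constant $\mathrm{log}\,Z + \beta\Braket{E}$, giving
\begin{align}
    D(\mathbb P \,\|\, \mathbb P_\infty) = \int_{\mathcal O}\mathbb P\,\mathrm{log}\,\mathbb P\,d\mu + \mathrm{log}\,Z + \beta \Braket{E} = \tfrac{1}{k_B}\big( S[\mathbb P_\infty] - S[\mathbb P]\big)\, ,
\end{align}
where $S[\mathbb P] = -k_B\int_{\mathcal O} \mathbb P\,\mathrm{log}\,\mathbb P\,d\mu$ and I have used the direct evaluation $S[\mathbb P_\infty] = k_B(\mathrm{log}\,Z + \beta\Braket{E})$, obtained by inserting the Gibbs form into the entropy integral. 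Thus proving $S[\mathbb P]\le S[\mathbb P_\infty]$ is equivalent to showing $D(\mathbb P\,\|\,\mathbb P_\infty)\ge 0$.

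The remaining ingredient is the non-negativity of $D$, which follows from Jensen's inequality applied to the strictly convex map $t\mapsto t\,\mathrm{log}\,t$ (equivalently, from the elementary bound $\mathrm{log}\,x\le x-1$) together with the normalisation of both densities; strict convexity upgrades this to equality if and only if $\mathbb P = \mathbb P_\infty$ almost everywhere, pinning the Gibbs density as the \emph{unique} maximiser. I expect the only real subtlety to be analytic rather than algebraic: one must restrict to densities for which $S[\mathbb P]$, $\int_{\mathcal O} h\,\mathbb P\,d\mu$ and $D$ are all well defined and finite, but on a compact coadjoint orbit $\mathcal O$ with continuous Hamiltonian $h$ these integrability requirements hold automatically, so the argument closes cleanly. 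Should one instead wish to derive the Gibbs form itself, I would extremise $S[\mathbb P]$ under the two constraints with multipliers $\alpha,\gamma$: setting the first variation $-k_B(\mathrm{log}\,\mathbb P + 1) - \alpha - \gamma h$ to zero recovers $\mathbb P_\infty = Z^{-1}e^{-\beta h}$ with $\beta = \gamma/k_B$, and the second variation $-k_B/\mathbb P < 0$ confirms the extremum is a maximum.
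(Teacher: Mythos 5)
Your proposal is correct, but it takes a genuinely different route from the paper. The paper's proof is a one-line constrained-variation argument: it introduces the energy constraint with a Lagrange multiplier $\beta$, sets the first variation of the functional $\mathbb S(\mathbb P)=\int_{\mathcal O}\left[\mathbb P\,\mathrm{log}\,\mathbb P+\beta\left(h\,\mathbb P-E_0\right)\right]d\mu$ to zero to obtain $\mathrm{log}\,\mathbb P+1+\beta h=0$, and then normalizes to recover \eqref{Gibbs-def} --- essentially the computation you relegate to your final aside. As such, the paper only verifies that the Gibbs density is a \emph{critical point} of the entropy on the constraint set; it performs no second-variation or global analysis. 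Your relative-entropy argument is stronger: the identity $D(\mathbb P\,\|\,\mathbb P_\infty)=\tfrac{1}{k_B}\left(S[\mathbb P_\infty]-S[\mathbb P]\right)$, valid for all admissible $\mathbb P$ sharing the normalisation and the mean energy $\Braket{E}$ of $\mathbb P_\infty$, together with $D\geq 0$ from Jensen's inequality, yields a genuine \emph{global} maximum and, via strict convexity of $t\mapsto t\,\mathrm{log}\,t$, uniqueness of the maximiser almost everywhere --- neither of which the paper's proof establishes. You are also right to make the constraint set explicit (the paper leaves the normalisation constraint implicit, absorbing it by rescaling at the end) and to note that on a compact coadjoint orbit with continuous $h$ the integrability issues are vacuous; the one hypothesis worth flagging in your write-up is that the comparison requires $\int_{\mathcal O}h\,\mathbb P\,d\mu$ to equal the Gibbs mean energy at the given $\beta$, i.e.\ the multiplier $\beta$ and the constraint value $E_0$ must be matched, which is exactly the role $\beta$ plays in the paper's version. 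In short: same theorem, but your proof buys global optimality and uniqueness at no extra cost, while the paper's buys brevity and a direct derivation of the Gibbs form.
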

\begin{proof}
We enforce $\Braket{E}=E_0$ with a Lagrange multiplier $\beta$ (this choice will be clear later) and consider the functional
\begin{align}
    \mathbb S(\mathbb P) = \int_\mathcal{O} \left [ \mathbb P(\mu) \mathrm{log}( \mathbb P(\mu)) + \beta \left ( h(\mu ) \mathbb P(\mu)  - E_0\right )\right ] d\mu\, . 
\end{align}
The variation with respect to $\mathbb P$ in the variational principle $\delta \mathbb S = 0$  gives the relation
\begin{align}
    \mathrm{log}\left (\mathbb P(\mu)\right) + 1 + \beta h(\mu)  = 0 \, . 
\end{align}
By normalizing the solution $\mathbb P(\mu)$ to $1$, we obtain the Gibbs measure \eqref{Gibbs-def} with inverse temperature $\beta$. 
\end{proof}

%%%%%%%%%%%%%%%%%%%%%%%%%%%%%%%%%%%%
\section{Network of coadjoint orbits I: Momentum coupling}\label{SS-section}

The two important systems that can be described using the theory of statistical mechanics are gas and lattices. The latter, which includes the Ising model and the Heisenberg model are well studied in statistical physics and exhibit many interesting properties such as phase transition.
In this section, we will construct a lattice of interacting continuous spins, similar to the Heisenberg model, that take values on a coadjoint orbit, then study in details their deterministic properties and later introduce noise and dissipation to the system that preserve the coadjoint orbits.
By coupling the neighbours in the {\em momentum}, we will recover the classical Heisenberg model as the simplest example using the coadjoint orbit of $SO(3)$, that is the sphere $S^2$. 
We will mostly restrict our exposition to compact semi-simple Lie algebras here, but extensions to more general Lie algebras should be possible in some cases.

\subsection{Deterministic equations}

Our aim here is to construct the equations of motion of spins interacting on a general undirected connected network $\mathcal N$. 
At each node, we will assign a coadjoint orbit $\mathcal O_i$ for $i=1,\ldots, N$, associated to the Lie group $G$, which is coupled to its neighbours according to the structure of the graph.

\subsubsection{Coupling in the momentum}
Given an undirected, connected graph $\mathcal N$ and a left (or right) $G$-invariant canonical Hamiltonian system on $T^*G$, we introduce the notion of {\em coupling in the momentum} to construct a Lie-Poisson Hamiltonian system on $(\mathfrak g^*)^{\oplus N}$ such that at each node of $\mathcal N$, the system evolves as a Lie-Poisson system on $\mathfrak g^*$, with an additional interaction term arising from its neighbours. If we suppose for now that at each node, the system evolves independently as a canonical Hamiltonian system on $T^*G$ with left $G$-invariant Hamiltonians $H_i$ at node $i$, then according to the Lie-Poisson reduction theorem, we can collectivise the Hamiltonian as $H_i = h_i \circ \mathbf J_R$ where $\mathbf J_R : T^*G \rightarrow \mathfrak g^*$ is the momentum map corresponding to the right cotangent lift action of $G$ on $T^*G$ and we obtain a Lie-Poisson system on $\mathfrak g^*$ with reduced Hamiltonian $h_i$ at each node $i$. The idea of coupling in the momentum is to take the interaction between the neighbours at the level of the {\em reduced space} $\mathfrak g^*$, that is, we couple the momentum map $\mathbf J_R$ between neighbouring bodies. Introducing the symmetric and positive definite {\em interaction tensor}
\begin{align}
  \mathbb J_{ij} : \mathfrak g^* \to \mathfrak g\, , \quad \forall i,j = 1\, , \ldots, N\,,
\end{align}
we take the {\em momentum-coupled interaction potential energy} to be
\begin{align} \label{int-h}
h^{\text{int}}_{ij}(\mu_i, \mu_j) = -\frac{1}{2\sqrt{d_id_j}} \langle \mu_i, \mathbb J_{ij} \mu_j \rangle_{\mathfrak g^* \times \mathfrak g}, \quad \mu_i, \mu_j \in \mathfrak g^*\,,
\end{align}
where $d_i$ is the number of neighbours at node $i$ on graph $\mathcal N$ and the factor $\sqrt{d_id_j}^{-1}$ is taken to normalise the expression, which will be convenient later in our analysis. We take the total energy of the form
\begin{align}
h(\boldsymbol \mu) = \sum_{i=1}^N h_i(\mu_i) + \sum_{i=1}^N \sum_{j \sim i} h^{int}_{ij}(\mu_i, \mu_j), \quad \boldsymbol \mu = (\mu_1, \ldots, \mu_N) \in (\mathfrak g^*)^{\oplus N}\, , 
\end{align}
where $i \sim j$ means that nodes $i$ and $j$ are adjacent on the graph $\mathcal N$. Taking the $(-)$ Lie- Poisson structure on $(\mathfrak g^*)^{\oplus N}$ which is given by the sum
\begin{align} \label{LP-full}
\{f, g\}^-_{LP}(\boldsymbol \mu) = - \sum_{i=1}^N \left<\mu_i\, , \left[\frac{\delta f}{\delta \mu_i}, \frac{\delta g}{\delta \mu_i} \right] \right>_{\mathfrak g^* \times \mathfrak g}\, ,
\end{align}
we obtain the Lie-Poisson equation
\begin{align} \label{LP-eq-1}
\dot{\mu_i} = \text{ad}^*_{\delta h_i/\delta \mu_i} \mu_i - \sum_{j \sim i} \frac{1}{\sqrt{d_id_j}}\text{ad}^*_{\mathbb J_{ij}\mu_j} \mu_i, \quad \forall i = 1, \ldots, N\, .
\end{align}

\begin{remark}
To be more precise, the interaction tensor $\mathbb J_{ij}$ is a map from $\mathfrak g^*$ at site $i$ to $\mathfrak g$ at site $j$, but since the Lie algebra $\mathfrak g$ is identical at each node, there is no need to distinguish between them.
\end{remark}

\begin{remark}
If we consider a right $G$-invariant Hamiltonian instead, then the reduced variables are the left momentum map $\mathbf J_L$ and we take the $(+)$ Lie-Poisson structure. (See \cite{marsden1999book} for more details about the issue of left vs. right action.)
\end{remark}

For the case where $\mathfrak{g} = \text{Lie}(G)$ is a compact, semi-simple Lie algebra, we can identify $\mathfrak g^*$ with $\mathfrak g$ using the inner product $\langle \cdot, \cdot \rangle = - \kappa(\cdot, \cdot)$ where $\kappa$ is the Killing form, which also satisfies the associativity property $\langle a, [b, c] \rangle = \langle [a, b], c \rangle$. From this, one can check that the quadratic functions
\begin{align}
C_i (\mu_i) = \frac12 \langle \mu_i, \mu_i \rangle, \quad \forall i=1, \ldots, N,
\end{align}
are Casimirs of the Lie-Poisson bracket \eqref{LP-full} and the coadjoint orbits of $G^{\times N}$ on $\mathfrak (g^*)^{\oplus N}$ are contained in their level sets, i.e.
\begin{align}
\mathcal O \subset \left\{\boldsymbol \mu \in (\mathfrak g^*)^{\oplus N} : C_i(\mu_i) = c_i, \, i=1, \ldots, N\right\}\, ,
\end{align}
where $c_1, \ldots, c_N$ are constants, which follows from the $\text{Ad}^*$-invariance of the Killing form. The coadjoint orbits are preserved by the dynamics due to the equivariance of the momentum map $\mathbf J_R$.

We also consider
\begin{align}
  C (\boldsymbol \mu) = \sum_i C_i(\mu_i ) = \frac12 \langle \boldsymbol \mu, \boldsymbol \mu \rangle\,,
\end{align}
which is the sum of all the Casimirs. This is also a Casimir for this system and will be used in our analysis later. We will also use the following shorthand notation for a system of coadjoint orbits that are interconnected by a network.

\begin{definition}
Given a graph $\mathcal N$ with $N$ nodes, a Lie group $G$, and a coadjoint orbit $\mathcal O_{\boldsymbol \mu} = \left\{\text{\bf Ad}^*_{\boldsymbol g} {\boldsymbol \mu} : \boldsymbol g \in G^{\times N} \right\} \subset (\mathfrak g^*)^{\oplus N}$, where $\boldsymbol \mu \in (\mathfrak g^*)^{\oplus N}$ and $\text{\bf Ad}^*$ is the diagonal coadjoint action of $G^{\times N}$ on $(\mathfrak g^*)^{\oplus N}$, we call the triple $(\mathcal N, G, \mathcal O_{\boldsymbol \mu})$ a network of coadjoint orbits and equation \eqref{LP-eq-1} a network Lie-Poisson equation on $(\mathcal N, G, \mathcal O_{\boldsymbol \mu})$.
\end{definition}

Hereafter, we consider a simple mechanical system consisting only of a purely kinetic energy term
\begin{align}
  h^{KE}_i(\mu_i) = \frac12 \left \langle \mu_i, \mathbb I_i^{-1} \mu_i\right \rangle_{\mathfrak g^* \times \mathfrak g}\, , \qquad \forall i = 1, \ldots, N\,,  
\end{align}
where $\mathbb I_i:\mathfrak g\to \mathfrak g^*$ is the moment of inertia tensor assigned to node $i$, which is symmetric and positive definite, and an interaction potential energy term \eqref{int-h}. We take the total Hamiltonian to be
\begin{align} \label{full-h-1}
h(\boldsymbol \mu) &= \sum_i h^{KE}_i(\mu_i) + \sum_i \sum_{j \sim i} h^{\text{int}}_{ij}(\mu_i, \mu_j) \nonumber \\
&= \frac12 \sum_i \left \langle \mu_i, \mathbb I_i^{-1} \mu_i\right \rangle - \frac12 \sum_i \sum_{j \sim i} \frac{1}{\sqrt{d_id_j}} \langle \mu_i, \mathbb J_{ij} \mu_j \rangle.
\end{align}
In fact, this can be expressed in a more compact form using the language of graph theory, which we will review in the next section.

In the special case where the graph is regular, i.e. $d_1=\cdots=d_n=d$ and absorbing the $\frac1d$ facor in the interaction tensors $\mathbb J_{ij}$, we obtain the Hamiltonian  
\begin{align}
  h_\mathrm{reg} (\boldsymbol \mu) = \frac12 \sum_i \left \langle \mu_i, \mathbb I_i^{-1} \mu_i \right \rangle - \frac12 \sum_i\sum_{i\sim j} \left \langle \mu_i ,  \mathbb J_{ij} \mu_j \right\rangle \, .
\end{align}
which, ignoring the kinetic energy term, resembles the Hamiltonian for the Ising model or the classical Heisenberg model. Hence, the network Lie Poisson system we obtained by momentum coupling can be viewed as a generalisation of the Heisenberg model with spins taking values on a general coadjoint orbit and with an additional kinetic energy term at each node.

\begin{remark}[Ferromagnetic vs. anti-ferromagnetic states]
From this Hamiltonian, we see that the minimum energy configuration must be an aligned state, with all the $\mu_i$ pointing in the same direction, and the maximum energy state must be anti-aligned. 
In statistical mechanics, these two states are called ferromagnetic and anti-ferromagnetic states respectively and will be important in our discussion of equilibrium solutions later.
\end{remark}

\subsubsection{Review of graph theory}

The structure of a graph is described by the {\em adjacency matrix}, which for unweighted graphs, is given by 
\begin{align}
A_{ij}= 
  \begin{cases}
    1 & \mathrm{if }\quad   i\sim j\\
    0 & \mathrm{otherwise}\, , 
  \end{cases}
\end{align}
where $i\sim j$ means that the node $i$ and $j$ are adjacent, or share an edge on the graph. 
This matrix is symmetric if the graph is undirected, which is the case here. 
Each node has $d_i = (A \boldsymbol 1_N)_i$ neighbours, where $\boldsymbol 1_N = (1,\ldots 1)\in \mathbb R^N$, and this is called the {\em degree} at node $i$. 
From this, we define the {\em graph Laplacian} $L_0 = D-A$, where $D= \mathrm{diag}(d_1, \ldots, d_N)$ is the {\em degree matrix}, and its normalised version 
\begin{align}
  L= D^{-\frac12} L_0D^{-\frac12}= \mathrm{\mathbb 1}_{N} - D^{-\frac12} A D^{-\frac12} \,, 
\end{align}
which is a symmetric matrix. 
These Laplacians are usually used to define a random walk on a graph, where $\dot{\mathbf p} = \mathbf pL$ for a probability vector $\mathbf p\in \mathbb R^n$ of a random walker, which corresponds to a discrete version of the diffusion equation, where $L$ plays the role of the Laplace-de Rham operator.  

\subsubsection{Network of coadjoint orbits}

We will now extend this theory to write our system of interacting coadjoint orbits of dimension $k$ using the language of graph theory, in particular, using the graph Laplacian. 

We extend our notion of the normalised Laplacian by weighting its components by the inertia tensor and the interaction tensor to get an {\em extended normalised Laplacian}
\begin{align}
  \mathbb L &:= \overline {\mathbb I}^{-1} -\mathbb D^{-\frac12} \mathbb A \mathbb D^{-\frac12}\,\label{ext-Lapl} \\
  \mathrm{where }\quad \mathbb A_{ij} &= \mathbb J_{ij} A_{ij}, \quad \overline {\mathbb I}^{-1}= \mathrm{diag}(\mathbb I^{-1}_1, \ldots, \mathbb I^{-1}_n) \quad \mathrm{and} \quad \mathbb D = \mathrm{diag}(d_1 \mathbb 1_k, \ldots, d_n \mathbb{1}_k)\, . \nonumber 
\end{align}

The Hamiltonian \eqref{full-h-1} of our system can then be expressed compactly as
\begin{align} \label{simplified-h}
  h(\boldsymbol \mu ) =  \frac12 \left \langle \boldsymbol \mu , \mathbb L \boldsymbol \mu \right \rangle_{(\mathfrak g^*)^{\oplus N}\times \mathfrak g^{\oplus N}}\, .
\end{align}

Using this notation, the network Lie-Poisson equation \eqref{LP-eq-1} with Hamiltonian \eqref{simplified-h} can be written in the compact form
\begin{align}
\dot{\boldsymbol \mu} = \mathbf{ad}^*_{\mathbb L \boldsymbol \mu} \, \boldsymbol \mu \label{LP-network}\,, 
\end{align}
where $\mathbf{ ad}^*:   \mathfrak g^{\oplus N}\times (\mathfrak g^*)^{\oplus N}\to (\mathfrak g^*)^{\oplus n}$ is the diagonal coadjoint action $(\xi_1, \ldots, \xi_N) \times (\mu_1, \ldots, \mu_N) \mapsto (\text{ad}^*_{\xi_1} \mu_1, \ldots, \text{ad}^*_{\xi_N} \mu_N)$ where $\ad^*: \mathfrak g \times \mathfrak g^* \rightarrow \mathfrak g^*$ is the standard coadjoint action on $\mathfrak g^*$. 

\subsubsection{Legendre transformation and Lagrangian formulation}

We obtained equation \eqref{LP-network} by coupling the neighbours at the level of the reduced space {\em after} performing Lie-Poisson reduction at each node. It is then natural to ask whether we can recover the same equation by coupling the neighbours at the level of the group and then performing symmetry reduction. 
The answer is yes, except for a few cases where the graph Laplacian is not invertible on the full phase space. 

For $\mathbb L$ to be invertible, one has to ensure that $\text{ker}(\mathbb L) = \emptyset$. 
For the simple case where $\mathbb I_i = \mathbb I $ and $\mathbb J_{ij} = \mathbb J$, this can be reduced to showing that $\mathbb I$ and $\mathbb J$ has no eigenvalues in common.  
Indeed, if they share an eigenvalue, then at least one of the ferromagnetic states will have a $0$ eigenvalue since the extended Laplacian reduces to the standard graph Laplacian $L = D- A$ which has a single $0$ eigenvalue. 
However, choosing $\mathbb J< \mathbb I$, the minimum eigenvalue of $\mathbb L$ becomes strictly positive since the reduced $L$ corresponding to ferromagnetic states become strictly diagonally dominant. 

Assuming that the moment of inertia is invertible on the full space, we obtain the following reduced Lagrangian by the inverse Legendre transform
\begin{align}
  l(\boldsymbol \xi) = \frac12 \langle \boldsymbol \xi , \mathbb L^{-1} \boldsymbol \xi\rangle\, , \qquad  \boldsymbol \xi = \mathbb L \boldsymbol \mu \in \mathfrak g^{\oplus N},
\end{align}
which can also be obtained through the full Lagrangian on $TG^{\times N}$, given by
\begin{align}
L(g, \dot{g}) = \frac12 \left< \dot{\boldsymbol g}, \mathbb L^{-1}_{\boldsymbol g} \, \dot{\boldsymbol g} \right>_{T_{\boldsymbol g}(G^{\times N}) \times T^*_{\boldsymbol g}(G^{\times N})}, \label{Lag-full}
\end{align}
where we defined $\mathbb L^{-1}_{\boldsymbol g} := L^*_{{\boldsymbol g}^{-1}}\mathbb L^{-1}(L_{{\boldsymbol g}^{-1}})_* : T_{\boldsymbol g}(G^{\times N}) \to T^*_{\boldsymbol g}(G^{\times N})$ and $L_{\boldsymbol g}$ denotes left diagonal action of $\boldsymbol g$. We refer to appendix 2 of \cite{arnold89mechanics} for the reconstruction of the full Lagrangian from a reduced Lagrangian of this form. It is easy to check that the Lagrangian \eqref{Lag-full} is left invariant under $G$ so we can apply Euler-Poincar\'e reduction on \eqref{Lag-full} and perform Legendre transformation to obtain our network Lie-Poisson equation \eqref{LP-network}.
However, notice that the matrix $\mathbb L^{-1}$ is far from being sparse, thus the interaction between sites take a complicated nonlocal (more than neighbours interactions) form on the Lagrangian side. 

If one really wishes to interpret the system from the Lagrangian side with singular $\mathbb L$, the Lagrangian can be defined in terms of the Moore-Penrose pseudoinverse $\mathbb L^\dagger$. 
However by doing so, one has to restrict the reduced Lagrangian to the subspace $\text{ker}(\mathbb L^\dagger)^{\perp}$ and recover the reduced Hamiltonian on the subspace $\text{ker}(\mathbb L)^{\perp}$ by Legendre transformation.
We refer to \cite{cendra1998maxwell} for more details on how to obtain an equivalent Lagrangian system for degenerate Hamiltonians, but within the context of plasma physics. 

\subsection{Nonlinear stability results}

In this section, we will look for equilibrium solutions of the momentum-coupled network Lie-Poisson system \eqref{LP-network} on a compact, semi-simple Lie algebra and find their nonlinear stability properties. In particular, we will show that the eigenvectors of the extended graph Laplacian $\mathbb L$ correspond to equilibrium solutions of this system and they are either a ferromagnetic or an antiferromagnetic state. Furthermore, we will show that the eigenvectors with the lowest and highest eigenvalues correspond to nonlinearly stable equilibria.

\subsubsection{Equilibrium solutions}

Since we are on a compact semi-simple Lie algebra $\mathfrak g$, we can take $\text{\bf ad}^* = -\text{\bf ad}$, so equation \eqref{LP-network} becomes
\begin{align}
\dot{\boldsymbol \mu} = [\boldsymbol \mu, \mathbb L \boldsymbol \mu]\, .
\end{align}
In general, the solutions to $\dot{\boldsymbol \mu} = 0$ are given by all $\boldsymbol \mu_e \in \mathfrak g^*$ such that $\mathbb L \boldsymbol \mu_e$ is contained in the centralizer $Z(\boldsymbol \mu_e)$ of $\boldsymbol \mu_e$. However, here we will only consider the case where $\mathbb L \boldsymbol \mu_e = \lambda_e \boldsymbol \mu_e$ for some $\lambda_e \in \mathbb R$, which is clearly contained in $Z(\boldsymbol \mu_e)$. So fixing $a > 0$, a regular value of $C:\mathfrak g^* \rightarrow \mathbb R$, we consider relative equilibrium configurations on the level set $C^{-1}(a)$ that are given by the $kN$ eigenvectors of $\mathbb L$, rescaled appropriately to satisfy $C(\boldsymbol \mu_e) = a$. Now, pairing both sides of $\mathbb L \boldsymbol \mu_e = \lambda_e \boldsymbol \mu_e$ with $\frac12 \boldsymbol \mu_e$, we get
\begin{align}
\lambda_e = \frac{1}{C(\boldsymbol \mu_e)}h(\boldsymbol \mu_e)\, , \label{lambda-energy}
\end{align}
so the eigenvalue of $\mathbb L$ is proportional to the total energy of the equilibrium state.

In the special case where $\mathbb J_{ij} = \mathbb J$ and $\mathbb I_i = \mathbb I$ for all $i,j=1, \ldots, N$, we see that these $kN$ eigenvectors can be categorised into two types: ferromagnetic or anti-ferromagnetic states, as given in the following proposition.
\begin{proposition} \label{equib}
Consider a network Lie Poisson system \eqref{LP-network} on $(\mathcal N, G, \mathcal O)$ where $\mathfrak g = \text{Lie}(G)$ is compact, semi-simple and let $a > 0$ be a regular value of $C : \mathfrak g^* \rightarrow \mathbb R$.
If the interaction tensor is the same for all edges, i.e. $\mathbb J_{ij} = \mathbb J$ and the moment of inertia tensor is the same at all nodes, i.e. $\mathbb I_i = \mathbb I$, then there exists $kN$ linearly independent relative equilibrium configurations $\boldsymbol \mu^e = (\mu_1^e, \cdots, \mu_N^e)$ on the level set $C^{-1}(a)$ such that
\begin{enumerate}
  \item $k$ are ferromagnetic states, i.e. $\mu_i^e = \sqrt{d_i}\mu^e$ for all $i = 1, \ldots, N$, where $\mu^e$ is an eigenvector of the extended inertia matrix $\mathbb I_{\text{ext}} := \mathbb I^{-1} - \mathbb J$, and 
  \item the remaining $(N-1)k$ states are anti-ferromagnetic, i.e. $\sum_{i=1}^N\sqrt{d_i}\mu_i^e  = 0 $. 
\end{enumerate}
\end{proposition}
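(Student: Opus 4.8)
The plan is to exploit the uniformity hypotheses $\mathbb I_i = \mathbb I$ and $\mathbb J_{ij} = \mathbb J$ in order to factor the extended normalised Laplacian \eqref{ext-Lapl} as a Kronecker (tensor) product, which reduces the search for the $kN$ equilibria to two much smaller spectral problems. Recall from the discussion preceding the proposition that a configuration is a relative equilibrium precisely when it is an eigenvector of $\mathbb L$, i.e. $\mathbb L \boldsymbol\mu_e = \lambda_e \boldsymbol\mu_e$. Under the stated hypotheses every block of \eqref{ext-Lapl} becomes $i$- and $j$-independent, so, writing $\hat A := D^{-\frac12} A D^{-\frac12}$ for the scalar normalised adjacency matrix on $\mathbb R^N$, I would rewrite
\[
\mathbb L = \mathbb 1_N \otimes \mathbb I^{-1} - \hat A \otimes \mathbb J \, .
\]
The problem then decouples into a graph part (on $\mathbb R^N$) and a Lie-algebra part (on $\mathfrak g^*$), and crucially this factorisation requires no commutation hypothesis between $\mathbb I^{-1}$ and $\mathbb J$.

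Next I would diagonalise the symmetric matrix $\hat A$, obtaining an orthonormal eigenbasis $v_1, \ldots, v_N$ of $\mathbb R^N$ with eigenvalues $\nu_1 > \nu_2 \geq \cdots \geq \nu_N$. Here I invoke standard spectral graph theory: since $\mathcal N$ is connected, the normalised Laplacian $L = \mathbb 1_N - \hat A$ has a simple zero eigenvalue with positive (Perron) eigenvector $D^{\frac12}\boldsymbol 1_N = (\sqrt{d_1}, \ldots, \sqrt{d_N})^\top$, so $\hat A$ has the simple top eigenvalue $\nu_1 = 1$ with $v_1 \propto (\sqrt{d_1}, \ldots, \sqrt{d_N})^\top$. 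For any eigenpair $(\nu_m, v_m)$ of $\hat A$ and any $w \in \mathfrak g^*$ one computes
\[
\mathbb L (v_m \otimes w) = v_m \otimes \left( \mathbb I^{-1} - \nu_m \mathbb J \right) w \, ,
\]
so $\mathbb L$ leaves each subspace $\mathrm{span}(v_m)\otimes \mathfrak g^*$ invariant and acts there as the symmetric $k\times k$ matrix $\mathbb I^{-1} - \nu_m \mathbb J$. The spectral theorem then furnishes, for each $m$, an orthonormal eigenbasis $w_{m,1}, \ldots, w_{m,k}$ of $\mathbb I^{-1} - \nu_m\mathbb J$, so that the $kN$ vectors $v_m \otimes w_{m,j}$ are eigenvectors of $\mathbb L$.

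The classification then drops out. For $m = 1$ we have $\nu_1 = 1$, so the relevant matrix is $\mathbb I^{-1} - \mathbb J = \mathbb I_{\text{ext}}$, and $\boldsymbol\mu^e = v_1 \otimes \mu^e$ has components proportional to $\sqrt{d_i}\,\mu^e$ with $\mu^e$ an eigenvector of $\mathbb I_{\text{ext}}$; after rescaling these are precisely the $k$ ferromagnetic states $\mu_i^e = \sqrt{d_i}\,\mu^e$. For $m \geq 2$, orthogonality $v_m \perp v_1$ yields $\sum_{i} \sqrt{d_i}\,\mu_i^e = \langle v_1, v_m\rangle\, w = 0$, so all $(N-1)k$ of the remaining states are anti-ferromagnetic. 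Linear independence of the full collection is immediate from the tensor structure, since the $v_m$ are orthogonal across blocks and the $w_{m,j}$ are independent within each block, giving $k + (N-1)k = kN$ independent equilibria. Finally, because the total Casimir is the positive-definite quadratic $C(\boldsymbol\mu) = \tfrac12\langle\boldsymbol\mu,\boldsymbol\mu\rangle$, each eigenvector may be rescaled by a positive constant to meet $C(\boldsymbol\mu^e) = a$ for the prescribed regular value $a > 0$.

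I expect the only delicate points to be bookkeeping rather than conceptual. The first is verifying the Kronecker factorisation of \eqref{ext-Lapl} and confirming that the decoupling genuinely holds \emph{without} assuming $[\mathbb I^{-1}, \mathbb J] = 0$, the whole point being that $\hat A$ and the fibre matrices are diagonalised independently. The second is correctly importing the connectedness/Perron--Frobenius fact that pins $v_1 \propto (\sqrt{d_1}, \ldots, \sqrt{d_N})^\top$ as the \emph{simple} top eigenvector, since this is exactly what identifies the ferromagnetic (all-aligned) direction and forces its orthogonal complement to carry the anti-ferromagnetic condition $\sum_i\sqrt{d_i}\,\mu_i^e = 0$. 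Symmetry of $\mathbb I^{-1} - \nu_m\mathbb J$, inherited from the symmetry of $\mathbb I$ and $\mathbb J$, is what guarantees a full eigenbasis in each block and hence the exact count of $kN$ equilibria.
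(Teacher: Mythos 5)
Your proof is correct, and it takes a genuinely different route from the paper's. The paper argues with a single coarse orthogonal splitting $(\mathfrak g^*)^{\oplus N} = V \oplus V^{\perp}$, where $V$ is the ferromagnetic subspace $\{\mu_i = \sqrt{d_i}\,\mu\}$, checks by direct computation (using only the degree identities $\sum_j A_{ij} = d_i$ and $\sum_i A_{ij} = d_j$) that both summands are $\mathbb L$-invariant, identifies $\mathbb L|_V \equiv \mathbb I_{\text{ext}}$, and then simply invokes symmetry of the block $\mathbb L_2 := \mathbb L|_{V^{\perp}}$ to supply the remaining $(N-1)k$ eigenvectors, leaving $\mathbb L_2$ otherwise unanalysed. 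Your Kronecker factorisation $\mathbb L = \mathbb 1_N \otimes \mathbb I^{-1} - \hat A \otimes \mathbb J$ refines this: your invariant blocks reassemble into the paper's splitting, with $V = \mathrm{span}(v_1) \otimes \mathfrak g^*$ and $V^{\perp} = \bigoplus_{m \geq 2} \mathrm{span}(v_m) \otimes \mathfrak g^*$, but in addition you diagonalise the paper's black-box block fibrewise, so you obtain the complete spectrum of $\mathbb L$ as the union over $m$ of the spectra of the $k \times k$ matrices $\mathbb I^{-1} - \nu_m \mathbb J$. That extra information is not needed for the proposition itself, but it is genuinely useful elsewhere in the paper, e.g. for the invertibility discussion of $\mathbb L$ preceding the Lagrangian formulation and for the multiplicity hypotheses entering Theorem \ref{equib-config} and the linear stability computations. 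The price is a slightly heavier toolkit: you invoke Perron--Frobenius and connectedness to make $\nu_1 = 1$ a simple top eigenvalue, whereas the paper's two-block argument never needs this; in fact simplicity is not required even in your scheme, since it suffices to include $v_1 \propto D^{\frac12} \boldsymbol 1_N$ in the chosen orthonormal eigenbasis of $\hat A$, and orthogonality of the remaining $v_m$ to $v_1$ already forces the anti-ferromagnetic condition (connectedness is a standing assumption of the paper in any case, and it is what guarantees $d_i > 0$ so that $\mathbb D^{-\frac12}$ exists). Your remarks that no commutativity of $\mathbb I^{-1}$ and $\mathbb J$ is needed and that rescaling onto $C^{-1}(a)$ preserves the eigenvector property are both correct; note only that the fibre eigenbases $\{w_{m,j}\}_{j=1}^{k}$ genuinely vary with $m$ unless $[\mathbb I^{-1}, \mathbb J] = 0$, which is harmless because orthogonality of the $v_m$ factors alone yields the linear independence of all $kN$ equilibria.
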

\begin{proof}
Consider the orthogonal decomposition of $(\mathfrak g^*)^{\oplus N} $ into ferromagnetic and anti-ferromagnetic states, i.e. $(\mathfrak g^*)^{\oplus N} = V \oplus V^{\perp}$, where
\begin{align*}
&V = \left\{\boldsymbol \mu = (\mu_1, \cdots, \mu_N) \in (\mathfrak g^*)^{\oplus N} : \mu_i = \sqrt{d_i} \mu, \quad \mu \in \mathfrak{g}^*\right\} \\
&V^{\perp} = \left \{\boldsymbol \mu = (\mu_1, \cdots, \mu_N) \in (\mathfrak g^*)^{\oplus N} :  \sum_i \sqrt{d_i} \mu_i = 0\right \}\,  .
\end{align*}
It is easy to check that $V^{\perp}$ is the orthogonal complement of $V$ with $\text{dim}(V) = k$ and $\text{dim}(V^{\perp}) = (N-1)k$. We claim that $V$ and $V^{\perp}$ are invariant subspaces under the linear operation $\mathbb L$.

To verify the former, take $\boldsymbol \mu \in V$, so $ \mu_i = \sqrt{d_i} \mu$ for all $i = 1, \ldots, N$. Then,
\begin{align*}
(\mathbb L \boldsymbol \mu)_i = \sqrt{d_i} \,\mathbb I^{-1} \mu - \sum_j \frac{A_{ij}}{\sqrt{d_i}} \mathbb J \mu = \sqrt{d_i}\left(\mathbb I^{-1} \mu - \sum_j \frac{A_{ij}}{d_i} \mathbb J \mu \right) \nonumber  = \sqrt{d_i} \,\widehat{\mu}\, ,
\end{align*}
where $\widehat{\mu} :=  \mathbb I_{\text{ext}}\,\mu \in \mathfrak{g} \cong \mathfrak{g}^*$ and we used the fact that $\sum_j A_{ij} = d_i$. So $\mathbb L \boldsymbol \mu \in V$ for all $\boldsymbol \mu \in V$ and therefore $V$ is invariant under $\mathbb L$.

For the other case, take $\boldsymbol \mu \in V^{\perp}$, so that $\sum_i \sqrt{d_i}\mu_i = 0$. 
Then,
\begin{align*}
  \sum_{i=1}^N  \sqrt{d_i} (\mathbb L \boldsymbol \mu)_i  &= \sum_{i=1}^N  \sqrt{d_i}\left(\mathbb I^{-1} \mu_i - \sum_j \frac{A_{ij}}{\sqrt{d_id_j}} \mathbb J \mu_j \right) \\
&= \mathbb I^{-1} \sum_i \sqrt{d_i} \mu_i - \sum_{i,j} \frac{A_{ij}}{d_j} \mathbb J \sqrt{d_j} \mu_j\\
 &=  \mathbb I_{\text{ext}}  \sum_i \sqrt{d_i} \mu_i = 0\, , 
\end{align*}
where we used the fact that $\sum_i A_{ij} = d_j$ and the fact that $\boldsymbol \mu \in V^{\perp}$ in the last line. Hence, $\mathbb L \boldsymbol \mu \in V^{\perp}$ for all $\boldsymbol \mu \in V^{\perp}$ as expected.

This implies that there is an appropriate change of basis such that $\mathbb{L}$ becomes block diagonal,
\begin{align*}
\mathbb{L}\rightarrow 
    \begin{pmatrix}
    \mathbb{L}_1 & 0 \\
    0 & \mathbb{L}_2
    \end{pmatrix}\, ,
\end{align*}
where $\mathbb{L}_1 : V \rightarrow V^*$ and $\mathbb{L}_2: V^{\perp} \rightarrow (V^{\perp})^*$. It is easy to see that $\mathbb L_1 \equiv \mathbb I_{\text{ext}}$. Hence, $\mathbb{L}$ has $kN$ eigenvectors $\{\boldsymbol \mu^e_i\}_{i=1}^{kN}$, which are equilibrium solutions, where $\boldsymbol \mu^e_i= (\mathbf v_i, \mathbf 0_{V^{\perp}})$ for $i=1,\ldots k$ and $\boldsymbol \mu^e_i = (\mathbf 0_{V}, \mathbf w_i)$ for $i = k+1, \ldots, kN$, where $\mathbf v_i \in V$ are the $k$ eigenvectors of $\mathbb{L}_1\equiv \mathbb I_{\text{ext}}$ and $\mathbf w_i \in V^{\perp}$ are the $(N-1)k$ eigenvectors of $\mathbb{L}_2$ and this proves our result.
\end{proof}

\begin{remark}
Rescaling appropriately, all of the ferromagnetic and anti-ferromagnetic equilibrium states given above exist on a level set of the summed Casimir $C$, but not all of them exist if we restrict to a single coadjoint orbit.
\end{remark}

As we will see later with the $SO(3)$ example, most of the eigenvalues $\lambda_e$ of $\mathbb L$ have algebraic multiplicity $n_\lambda = \text{mult}(\lambda_e) > 1$, so if $\boldsymbol \mu^e_1$ and $\boldsymbol \mu^e_2$ are two eigenvectors of $\mathbb L$ sharing the same eigenvalue $\lambda_e$, then their linear combination is also an eigenvector and therefore an equilibrium solution. Hence, every point in the eigenspace $E(\lambda_e) := \text{ker}(\mathbb L - \lambda_e \mathbb 1)$ is an equilibrium solution with $\text{dim}(E(\lambda_e)) = n_\lambda$ (since $\mathbb L$ is symmetric, the algebraic and geometric multiplicity are equivalent). It is therefore possible to construct more complicated equilibrium states that are neither ferromagnetic nor anti-ferromagnetic by taking a linear combination of a ferromagnetic eigenvector and an anti-ferromagnetic eigenvector that share the same eigenvalue.

\subsubsection{Nonlinear stability}

We now investigate the stability properties of the equilibrium solutions found above via the energy-Casimir method, as given in \cite{arnold1966priori,holm1985nonlinear}. Our main result is stated as follows.

\begin{theorem} \label{equib-config}
Fixing a level set $C^{-1}(a)$ for $a>0$, the equilibrium solutions $\boldsymbol \mu_e$ of the network Lie-Poisson equation \eqref{LP-network} corresponding to the highest and lowest eigenvalues $\lambda_e$ of $\mathbb L$ are nonlinearly stable provided $\text{mult}(\lambda_e) = 1$.
\end{theorem}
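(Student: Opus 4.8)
The plan is to use the energy-Casimir method, which is the standard tool for establishing nonlinear stability of relative equilibria in Lie-Poisson systems. The idea is to build a conserved quantity $H_\phi := h + \phi(C)$ for a suitable function $\phi$ (or more simply $H_c := h - c\,C$ for a constant $c$), arrange that the equilibrium $\boldsymbol\mu_e$ is a critical point of $H_c$ restricted to the relevant level set, and then show that the second variation $\delta^2 H_c$ is definite (positive or negative) on the tangent space to the level set $C^{-1}(a)$ at $\boldsymbol\mu_e$. Definiteness of the second variation, together with conservation of $H_c$ and $C$, yields a Lyapunov function and hence nonlinear stability.

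First I would compute the first variation. Since $h(\boldsymbol\mu) = \tfrac12\langle\boldsymbol\mu,\mathbb L\boldsymbol\mu\rangle$ and $C(\boldsymbol\mu)=\tfrac12\langle\boldsymbol\mu,\boldsymbol\mu\rangle$, one has $\delta H_c = \langle \mathbb L\boldsymbol\mu - c\,\boldsymbol\mu, \delta\boldsymbol\mu\rangle$, which vanishes at $\boldsymbol\mu_e$ precisely when $\mathbb L\boldsymbol\mu_e = c\,\boldsymbol\mu_e$; thus the natural choice is $c=\lambda_e$, identifying the equilibrium as an eigenvector, consistent with the setup preceding the statement. Next I would examine the second variation, which here is simply the quadratic form
\begin{align}
\delta^2 H_{\lambda_e}(\delta\boldsymbol\mu) = \langle \delta\boldsymbol\mu, (\mathbb L - \lambda_e\mathbb 1)\,\delta\boldsymbol\mu\rangle\,,
\end{align}
a constant-coefficient form because $h$ and $C$ are quadratic. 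The key observation is that on the $\mathbb L$-invariant orthogonal eigenbasis, $\mathbb L - \lambda_e\mathbb 1$ has eigenvalues $\lambda_i - \lambda_e$. If $\lambda_e$ is the lowest eigenvalue and $\mathrm{mult}(\lambda_e)=1$, then $\lambda_i - \lambda_e > 0$ for every other eigendirection, so the form is positive definite on the orthogonal complement of $\boldsymbol\mu_e$; dually, if $\lambda_e$ is the highest eigenvalue it is negative definite there. This is exactly where the simple-multiplicity hypothesis enters: if $\lambda_e$ were degenerate, there would be a nonzero direction on which $\delta^2H_{\lambda_e}$ vanishes and definiteness would fail.

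The step requiring the most care — and what I expect to be the main obstacle — is handling the constraint correctly. The stability claim is for the dynamics on the level set $C^{-1}(a)$ (indeed on the coadjoint orbit), so I must verify definiteness only on the tangent space $T_{\boldsymbol\mu_e}C^{-1}(a) = \{\delta\boldsymbol\mu : \langle\boldsymbol\mu_e,\delta\boldsymbol\mu\rangle = 0\}$, i.e. the orthogonal complement of $\boldsymbol\mu_e$ itself. This is fortunate, since the neutral direction of $\delta^2 H_{\lambda_e}$ is exactly $\boldsymbol\mu_e$ (the $\lambda_i = \lambda_e$ eigenspace), which the constraint removes; so restricting to $T_{\boldsymbol\mu_e}C^{-1}(a)$ the form becomes strictly definite precisely under $\mathrm{mult}(\lambda_e)=1$. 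To convert definiteness of $\delta^2 H_{\lambda_e}$ into genuine nonlinear (Lyapunov) stability rather than merely spectral stability, I would invoke the standard energy-Casimir argument of \cite{arnold1966priori,holm1985nonlinear}: because $H_{\lambda_e}$ is here an exact quadratic, its level sets near $\boldsymbol\mu_e$ within $C^{-1}(a)$ are bounded ellipsoids, so one can enclose $\boldsymbol\mu_e$ between nested convex level sets and use conservation of $h$ and $C$ to trap the trajectory. The mild subtlety to address is that $\boldsymbol\mu_e$ is a \emph{relative} equilibrium (stability is modulo the coadjoint action / gauge directions), so I would phrase the conclusion as stability on the reduced space, noting that the degenerate direction $\boldsymbol\mu_e$ corresponds to the constraint and any residual neutral directions from the group action are factored out in the usual way.
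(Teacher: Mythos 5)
Your proof is correct and follows essentially the same route as the paper: the energy-Casimir method with multiplier $-\lambda_e$, the second variation reducing to $\left<\delta\boldsymbol\mu, (\mathbb L - \lambda_e \mathbb 1)\,\delta\boldsymbol\mu\right>$ in the eigenbasis of $\mathbb L$, and the simple-multiplicity hypothesis supplying the spectral gap. The only (standard and interchangeable) variation is in handling the neutral direction along $\boldsymbol\mu_e$: the paper retains a nonlinear $\Phi(C)$ and chooses $\Phi''(a)$ of the appropriate sign so that the full Hessian is strictly definite on all of $(\mathfrak g^*)^{\oplus N}$, whereas you use a linear multiplier and restrict the quadratic form to $T_{\boldsymbol\mu_e}C^{-1}(a)$, invoking conservation of $C$ --- which suffices for the stated fixed-level-set conclusion.
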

\begin{proof}
Consider an equilibrium solution $\boldsymbol \mu_e$, which is an eigenvector of $\mathbb L$. In accordance with the energy-Casimir method, we first consider the augmented Hamiltonian
\begin{align}
  h_\Phi = h+ \Phi\left (C\right )\, , 
\end{align}
where $\Phi$ is an arbitrary real-valued function such that $\boldsymbol \mu_e$ is a critical point of $h_\Phi$.
That is,
\begin{align}
 D_{\boldsymbol \mu} h_\Phi(\boldsymbol \mu_e) \cdot \delta \boldsymbol \mu = \left(\mathbb L \boldsymbol \mu_e + \Phi'\left (a \right)\boldsymbol \mu_e\right) \cdot \delta \boldsymbol \mu = 0 \, ,
\end{align}
This equation holds for any free variations $\delta \boldsymbol \mu \in \mathfrak{g}$, provided $ \Phi'(a)  = -\lambda_e$, where $\lambda_e$ is the eigenvalue of $\mathbb L$ corresponding to the eigenvector $\boldsymbol \mu_e$. 

Next, we compute the second variation of $h_\Phi$.
\begin{align}
 \left<\delta \boldsymbol \mu, D^2_{\boldsymbol \mu} h_\Phi(\boldsymbol \mu_e) \cdot \delta \boldsymbol \mu \right>&=  \left<\delta \boldsymbol \mu, \left(\mathbb L + \Phi' (a)\right) \delta \boldsymbol \mu\right> + \Phi''\left (a\right ) (\boldsymbol \mu_e \cdot \delta \boldsymbol \mu)^2 \nonumber\\ 
  &=  \left<\delta \boldsymbol \mu, (\mathbb L- \lambda_e \mathbb{1}) \delta \boldsymbol \mu \right> + \Phi''(a) (\boldsymbol \mu_e \cdot \delta \boldsymbol \mu)^2  \, . 
\label{hessian-RB}
\end{align}
Since $\mathbb L$ is symmetric, we can change the basis such that $\mathbb L$ is diagonal with ordered eigenvalues, i.e. $\mathbb L \rightarrow \text{diag}(\lambda_1, \ldots, \lambda_{kN})$ with $\lambda_1 \geq \ldots \geq \lambda_{kN}$. Now, let $\boldsymbol \mu_e$ be an equilibrium solution with the highest eigenvalue, so that $\lambda_e = \lambda_1$ and assume that $\text{mult}(\lambda_e) = 1$. Then we have $\boldsymbol \mu_e = \sqrt{2a} \,\widehat{\boldsymbol e}_1$, where $\{\widehat{\boldsymbol e}_1, \ldots, \widehat{\boldsymbol e}_{kN}\}$ is the new basis and \eqref{hessian-RB} reduces to
\begin{align}
 \left< \delta \boldsymbol \mu, D^2_{\boldsymbol \mu} h_\Phi(\boldsymbol \mu_e)\delta \boldsymbol \mu \right> &=  \sum_{i=1}^{kN} (\lambda_i - \lambda_1 ) \delta \hat \mu_i^2 + 2a\Phi''(a) \delta \hat{\mu}_1^2  \, ,
\end{align}
where $\delta \hat{\mu}_i$ are the components of $\delta \boldsymbol \mu$ in the new basis. One can directly check that choosing $\Phi''(a) < 0$, the Hessian matrix $D^2_{\boldsymbol \mu} h_\Phi(\boldsymbol \mu_e)$ becomes strictly negative definite, so by the energy-Casimir method, the equilbrium configuration with the highest energy is nonlinearly stable. Similarly, choosing $\lambda_e = \lambda_{kN}$ (lowest energy) and $\Phi''(a) > 0$, the Hessian matrix $D^2_{\boldsymbol \mu} h_\Phi(\boldsymbol \mu_e)$ becomes strictly positive definite, so the equilibrium configuration with the lowest energy is also nonlinearly stable.
\end{proof}

\begin{remark}
From \eqref{lambda-energy}, we see that this corresponds to the highest and lowest energy equilibrium states on the level set $C^{-1}(a)$.
\end{remark}

We cannot deduce further about the nonlinear stability of the other equilibrium states, but as we will see with the $SO(3)$ example in section \ref{section-RB}, most of them are linearly unstable and the remaining few are linearly stable at best. To investigate the linear stability of the equilibrium solutions in the general setting, we first linearize the equation \eqref{LP-network} by setting $\boldsymbol \mu(t) = \boldsymbol \mu_e + \epsilon \, \delta \boldsymbol \mu(t)$ and dropping terms of $O(\epsilon^2)$ to get
\begin{align} \label{lin-eq}
  \dot{\delta \boldsymbol \mu} = \mathrm{\bf ad}_{\boldsymbol \mu_e} \left( (\mathbb L - \lambda_e \mathbb{1}) \delta \boldsymbol \mu \right) \,,  
\end{align}
where we have used $\mathbb L \boldsymbol \mu_e = \lambda_e \boldsymbol \mu_e$. We can then investigate the linear instability of the equilibrium solution $\boldsymbol \mu_e$ by looking at whether the eigenvalues of the linear operator $\mathrm{\bf ad}_{\boldsymbol \mu_e} \left( (\mathbb L - \lambda_e \mathbb 1)\, \cdot \right)$ has a positive real part. 

\subsection{Noise and dissipation}

We now perturb our system \eqref{LP-network} with noise and dissipation that preserves the Casimir $C$, following \cite{arnaudon2016noise}. 
The equation is similar to the single coadjoint orbit system with equation \eqref{SEP-Diss}, that is 
\begin{align}        
    \mathbb d\mu_i &+ \mathrm{ad}^*_\frac{\partial h}{\partial \mu_i} \mu_i\, dt 
    + \theta\, \mathrm{ad}^*_\frac{\partial C}{\partial \mu_i} \left [ \frac{\partial C}{\partial \mu_i}, \frac{\partial h}{\partial \mu_i} \right ]^\flat \, dt + \sum_l\mathrm{ad}^*_{\sigma_{i,l}} \mu_i \circ dW_t^{i,l} = 0 \,,
    \label{SEP-Diss-N}
\end{align}
for $i=1, \ldots, N$, where $h = \frac12 \left<\boldsymbol \mu, \mathbb L \boldsymbol \mu \right>$  and $C$ is a given Casimir for the Lie-Poisson bracket. 
The noise term has $kN$ independent Lie algebra vectors $\sigma_{i,l}$ associated to independent Wiener processes $W_t^{i,l}$ and the dissipation considered here is a double bracket dissipation (See \cite{gaybalmaz2013selective, bloch1996euler}).
One could generalise this equation further by letting $\theta_i$ be node dependent, but we will not do this here to keep our equations simple and we also choose our noise to be {\em isotropic}, that is, $\sigma_{i,l} = \sigma \boldsymbol e_l$, where $\boldsymbol e_l$ is a basis vector of $\mathfrak g$, which are the same at every node.  

Notice that without noise, the energy dissipates as 
\begin{align}
  \frac{d}{dt} h_{\text{int}}(\boldsymbol \mu) = -\theta\,  \sum_{i=1}^N  \left [ (\mathbb L \boldsymbol \mu)_i , \mu_i \right ] ^2\, 
\end{align}
and will tend towards the equilibrium configuration with minimum energy on the coadjoint orbit. 
If the noise is isotropic, the stationary distribution can be computed in the same way as we did with the single orbit case.

\begin{theorem}
The stationary distribution of \eqref{SEP-Diss-N} with  isotropic noise is the Gibbs distribution 
\begin{align}
  \mathbb P_\infty(\boldsymbol \mu ) = \frac{1}{Z} e^{-\beta h(\boldsymbol \mu)}  \, , 
  \label{P_infty}
\end{align}
where the inverse temperature is given by  
\begin{align}    
  \frac{1}{T} = \beta = \frac{\theta}{2\sigma^2} \, . 
\end{align}
The partition function is  given by
\begin{align}
    Z=  \int_{\boldsymbol {\mathcal O}} e^{-\beta h(\boldsymbol \mu)} d\boldsymbol \mu\, , 
    \label{Z-int}
\end{align}
where $\boldsymbol{ \mathcal O}= \mathcal O_1\times \ldots \times \mathcal O_{N}$ is the direct product of the coadjoint orbits at each lattice site,  or the total coadjoint orbit of the lattice. 
\end{theorem}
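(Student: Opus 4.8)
The plan is to mimic the proof of Theorem \ref{FP-diss-thm} for the single coadjoint orbit: substitute the candidate density $\mathbb P_\infty(\boldsymbol\mu) = Z^{-1}e^{-\beta h(\boldsymbol\mu)}$ directly into the Fokker--Planck equation associated to the network system \eqref{SEP-Diss-N} and verify that it yields a stationary solution, i.e. $\tfrac{d}{dt}\mathbb P_\infty = 0$. The crucial structural observation is that, although the Hamiltonian $h=\tfrac12\langle\boldsymbol\mu,\mathbb L\boldsymbol\mu\rangle$ couples all the nodes through the off-diagonal blocks of $\mathbb L$, the noise and the double-bracket dissipation in \eqref{SEP-Diss-N} act \emph{node-wise}: each acts on the single variable $\mu_i$ through its own coadjoint orbit $\mathcal O_i$, and the global Casimir splits as $C=\sum_i C_i$ with $\partial C/\partial\mu_i=\partial C_i/\partial\mu_i$. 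This node-wise decoupling is precisely what lets us reduce the computation to $N$ copies of the single-orbit calculation already carried out in Theorem \ref{FP-diss-thm}.

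First I would write down the network Fokker--Planck equation, which has exactly the form of \eqref{FP-Diss} but with the transport term $\{h,\mathbb P\}$ taken with respect to the full $(-)$ Lie--Poisson bracket \eqref{LP-full} on $(\mathfrak g^*)^{\oplus N}$, and with the dissipative and stochastic terms summed over the nodes $i=1,\ldots,N$ and the basis index $l$. The Hamiltonian transport term is the easiest to dispose of: since $\mathbb P_\infty$ is a function of $h$ alone, antisymmetry of the Poisson bracket gives $\{h,\mathbb P_\infty\}=-\beta\,\mathbb P_\infty\{h,h\}=0$, so the deterministic Lie--Poisson flow preserves any Gibbs density, irrespective of isotropy. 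Using $\partial\mathbb P_\infty/\partial\mu_i=-\beta(\partial h/\partial\mu_i)\mathbb P_\infty$, the dissipation contribution then collapses to the manifestly signed sum $-\theta\beta\,\mathbb P_\infty\sum_i\big\langle[\partial h/\partial\mu_i,\partial C/\partial\mu_i],[\partial h/\partial\mu_i,\partial C/\partial\mu_i]^\flat\big\rangle$.

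The heart of the argument is the cancellation of this dissipation term against the noise term $-\tfrac12\sum_{i,l}\{\Phi_{i,l},\{\Phi_{i,l},\mathbb P_\infty\}\}$ with stochastic potentials $\Phi_{i,l}=\sigma\langle\boldsymbol e_l,\mu_i\rangle$. I would expand the iterated bracket and observe that the isotropy hypothesis $\sigma_{i,l}=\sigma\boldsymbol e_l$ is essential: summing over the full orthonormal basis $\{\boldsymbol e_l\}$ converts the double stochastic bracket into the same quadratic expression in $[\partial h/\partial\mu_i,\partial C/\partial\mu_i]$ that appears in the dissipation, up to a factor fixed by $\sigma^2$. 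Balancing the two contributions node by node then forces Einstein's relation as quoted in the statement, after which the terms cancel and $\tfrac{d}{dt}\mathbb P_\infty=0$ follows.

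The step I expect to be the main obstacle is precisely this isotropic summation of the double Poisson bracket of the $\Phi_{i,l}$: one must track the Lie-algebraic identities (the Jacobi identity, $\mathrm{ad}$-invariance of the inner product through $\flat$, and the fact that $\sum_l\mathrm{ad}^*_{\boldsymbol e_l}\,\mathrm{ad}^*_{\boldsymbol e_l}$ acts as a Casimir-type operator on each orbit) so that the non-isotropic cross terms vanish and only the Gibbs-compatible piece survives. Since all of this already occurs node-wise in the single-orbit proof, I would invoke Theorem \ref{FP-diss-thm} at each $i$ rather than redo the calculation. Finally, because both the noise and the dissipation preserve each individual coadjoint orbit $\mathcal O_i$, the stationary measure is supported on and normalised over the product $\boldsymbol{\mathcal O}=\mathcal O_1\times\cdots\times\mathcal O_N$, which gives the partition function \eqref{Z-int} as claimed.
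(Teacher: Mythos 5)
Your proposal is correct and takes essentially the same route as the paper, whose proof simply defers to \cite{arnaudon2016noise} for a direct substitution of the Gibbs density into the Fokker--Planck equation \eqref{FP-Diss} associated to \eqref{SEP-Diss-N}; your node-wise reduction to $N$ copies of the single-orbit computation of Theorem \ref{FP-diss-thm}, exploiting $C=\sum_i C_i$ and the per-node action of noise and double-bracket dissipation, is exactly the intended argument. One small caveat: carrying out the cancellation actually yields $\beta = 2\theta/\sigma^2$ (Einstein's relation $\theta = \tfrac12 \beta\sigma^2$, consistent with Theorem \ref{FP-diss-thm} and with section \ref{PT-section}), so the factor $\beta = \theta/(2\sigma^2)$ printed in this theorem's statement is a typo rather than what your balancing step produces.
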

\begin{proof}
  We refer to \cite{arnaudon2016noise} for the proof of this formula, which is done by direct substitution into the Fokker-Planck equation \eqref{FP-Diss} of the stochastic process \eqref{SEP-Diss-N}. 
\end{proof}

The Gibbs distribution \eqref{P_infty} provides us with the notion of {\em temperature} in this system and as we will see in section \ref{PT-section}, we observe a phase transition for the case $G=SO(3)$ as we vary the temperature, similar to the classical Heisenberg model.
The mean field approximation can also be derived using the partition function \eqref{Z-int} and will allow us to help detect the phase transition. 

%%%%%%%%%%%%%%%%%%%%%%%%%%%%%%%%%%%%%%%%%%%%%
\section{Example I: Networks of rigid bodies} \label{section-RB}

In this section, we study in more detail the case $G=SO(3)$, corresponding to a network of interacting rigid bodies. The corresponding Lie algebra $\mathfrak{so}(3)$ is compact and semi-simple so we are in the setting of the general theory studied in section \ref{SS-section}.

\subsection{Equation of motion}

Consider a single free rigid body with configuration group $SO(3)$. The reduced Hamiltonian is given by the kinetic energy $h(\mathbf \Pi) = \frac12 \mathbf \Pi \cdot \mathbb I^{-1} \mathbf \Pi$ where $\mathbf \Pi \in \mathfrak{so}^*(3) \cong \mathbb R^3$ is the {\em angular momentum vector} and the corresponding equation, given by the $\mathfrak{so}^*(3)$ Lie-Poisson structure, is $\dot{\mathbf \Pi} = \mathbf \Pi \times \mathbf \Omega$, where $\mathbf \Omega := \mathbb I^{-1} \mathbf \Pi$ is the {\em angular velocity vector}. We will not discuss the dynamics of a single body further here, as it is standard in geometric mechanics and instead refer to chapter 15 of Marsden and Ratiu \cite{marsden1999book} for a more detailed exposition of the system.

Extending this to a network $(\mathcal N, SO(3), \mathcal O)$ of interacting rigid bodies via momentum coupling with inertia tensor $\mathbb I_i$ and interaction tensor $\mathbb J_{ij} $, we get the Hamiltonian 
\begin{align}
    h(\overline {\mathbf \Pi})= \frac12 \sum_i \mathbf \Pi_i\cdot \mathbf  \Omega_i - \frac12 \sum_i \sum_{j \sim i} \frac{1}{\sqrt{d_id_j}}\mathbf \Pi_i \cdot \mathbb J_{ij} \mathbf \Pi_j\, , 
    \label{RB-H-lattice}
\end{align}
where $\overline {\mathbf \Pi} = (\mathbf \Pi_1, \ldots, \mathbf \Pi_N)$ is the network extended angular momentum vector and $\mathbf \Omega_i := \mathbb I_i^{-1} \mathbf \Pi_i$ is the angular velocity at site $i$. Taking the $(-)$ Lie-Poisson structure
\begin{align} \label{RB-bracket}
\{f, g\}_{LP}^-(\overline {\mathbf \Pi}) =  -\sum_i \mathbf \Pi_i \cdot \frac{\partial f}{\partial \mathbf \Pi_i} \times \frac{\partial g}{\partial \mathbf \Pi_i},
\end{align}
we obtain the corresponding network Lie-Poisson equation
\begin{align}
    \dot{ \mathbf{\Pi}}_i = \mathbf \Pi_i\times \mathbf \Omega_i - \sum_{j \sim i} \frac{1}{\sqrt{d_id_j}} \, \mathbf \Pi_i\times \mathbb J_{ij} \mathbf \Pi_j\quad \forall i = 1, \ldots, N\, .
    \label{RB-Det-lattice}
\end{align}

One can check that the Casimirs for the Lie-Poisson bracket \eqref{RB-bracket} are given by
\begin{align}
C_i(\mathbf \Pi_i) = \frac12 \| \mathbf \Pi_i\|^2, \quad i=1, \ldots, N\, , 
\end{align}
and we denote the sum of the Casimirs by
\begin{align}
C(\overline {\mathbf \Pi}) = \sum_{i=1}^N C_i(\mathbf \Pi_i) = \frac12 \|\overline{\mathbf \Pi}\|^2\, .
\end{align}

The coadjoint orbit in this special case is given as follows.
\begin{theorem}
The coadjoint orbit $\boldsymbol{\mathcal O} = \mathcal O_1 \times \cdots \times \mathcal O_N$ of $SO(3)^{\times N}$ on $\mathfrak{so}^*(3)^{\oplus N} \cong \mathbb R^{3N}$ is given by the level sets of the Casimirs
\begin{align}
\mathcal O_i = \left\{\mathbf \Pi_i \in \mathbb R^{3} :  C_i(\mathbf \Pi_i) = c_i  \right\} \cong S^2\, ,
\end{align}
for some constants $c_1, \ldots, c_N$.
\end{theorem}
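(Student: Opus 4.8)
The plan is to reduce everything to the single-node case and then exploit the diagonal structure of the $SO(3)^{\times N}$-action. First I would note that since $SO(3)^{\times N}$ acts diagonally on $\mathfrak{so}^*(3)^{\oplus N}$, i.e. $(R_1, \ldots, R_N)\cdot(\mathbf \Pi_1, \ldots, \mathbf \Pi_N) = (R_1\mathbf\Pi_1, \ldots, R_N\mathbf\Pi_N)$ under $\mathbf{Ad}^*$, the orbit through a point $\overline{\mathbf\Pi}$ factors as a Cartesian product $\mathcal O_1 \times \cdots \times \mathcal O_N$ of the individual $SO(3)$-orbits through each $\mathbf\Pi_i$. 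This immediately gives the claimed product structure, so it suffices to analyse a single coadjoint orbit of $SO(3)$ on $\mathfrak{so}^*(3) \cong \mathbb R^3$.

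For the single-node case, the key step is to identify the coadjoint action with ordinary rotation of $\mathbb R^3$. Using the hat map $\,\widehat{\cdot}: \mathbb R^3 \to \mathfrak{so}(3)$ together with the standard identity $\widehat{R\mathbf v} = R\,\widehat{\mathbf v}\,R^\top$, the adjoint action $\mathrm{Ad}_R\widehat{\mathbf v} = R\,\widehat{\mathbf v}\,R^\top$ becomes $\mathbf v \mapsto R\mathbf v$ on $\mathbb R^3$. Since $\mathfrak{so}(3)$ is compact semi-simple, I identify $\mathfrak{so}^*(3)$ with $\mathfrak{so}(3)$ via the Killing form, and the $\mathrm{Ad}^*$-invariance of this form ensures that the coadjoint action is carried to the adjoint action under this identification. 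Hence the coadjoint orbit through $\mathbf\Pi_i$ is exactly the set $\{R\mathbf\Pi_i : R \in SO(3)\}$.

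Next I would invoke the transitivity of $SO(3)$ on spheres: any two vectors of equal Euclidean norm are related by a rotation. This shows that the orbit through $\mathbf\Pi_i$ coincides with the sphere $\{\mathbf v \in \mathbb R^3 : \|\mathbf v\| = \|\mathbf\Pi_i\|\}$, which is precisely the level set $C_i(\mathbf v) = \frac12\|\mathbf v\|^2 = c_i$ with $c_i := \frac12\|\mathbf\Pi_i\|^2$. For $c_i > 0$ this sphere is diffeomorphic to the unit sphere $S^2$ by radial rescaling, completing the identification $\mathcal O_i \cong S^2$. The one genuine caveat, and essentially the only place where care is needed, is the degenerate case $\mathbf\Pi_i = 0$ (equivalently $c_i = 0$): here the orbit collapses to the single fixed point $\{0\}$ rather than a two-sphere, so the claimed diffeomorphism requires the implicit regularity assumption $c_i > 0$ at each node. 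Everything else follows directly from the $\mathrm{Ad}^*$-invariance of the Killing form and the transitive rotation action, so no serious obstacle arises beyond keeping the left-versus-right conventions for the coadjoint action consistent.
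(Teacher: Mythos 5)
Your proof is correct and takes essentially the same approach as the paper, whose entire argument is the one-line observation that $SO(3)$ acts transitively on spheres in $\mathbb R^3$; you simply make explicit the standard ingredients (the factorisation of the diagonal $SO(3)^{\times N}$-orbit into a product, the hat-map identity $\mathrm{Ad}_R\,\widehat{\mathbf v} = \widehat{R\mathbf v}$, and the Killing-form identification of $\mathfrak{so}^*(3)$ with $\mathfrak{so}(3)$) that the paper leaves implicit. Your caveat about the degenerate orbit through $\mathbf \Pi_i = \mathbf 0$ collapsing to a point is a genuine refinement the paper glosses over, and it is consistent with the paper's standing assumption elsewhere that the Casimir levels are regular values with $c_i > 0$.
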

\begin{proof}
Since the action of $SO(3)$ on the two-sphere $S^2$ (viewed as a submanifold in $\mathbb R^3$) is transitive, the coadjoint orbit is exactly given by the level sets of the Casimirs.
\end{proof}

We observe that removing the kinetic energy term in \eqref{RB-H-lattice} and taking $d_i=d$ for all $i=1, \ldots, N$, we recover the Hamiltonian for the Heisenberg model. Hence, the rigid body network can be viewed as a Heisenberg model with mass-carrying spins.

\subsection{Equilibrium solutions}

We now investigate the relative equilibrium configurations of the deterministic rigid body network and its corresponding stability properties. Since the Lie algebra $\mathfrak{so}(3)$ is compact and semi-simple, we can apply propositions \ref{equib} and \ref{equib-config} to obtain information about the equilibrium configurations of the rigid body network and its corresponding nonlinear stability properties, which we summarise in the following proposition.
\begin{proposition} \label{stab-RB}
The relative equilibrium solutions $\overline {\mathbf \Pi}^e = (\mathbf \Pi^e_1, \ldots, \mathbf \Pi^e_N)$ of a rigid body network $(\mathcal N, SO(3), \mathcal O)$ correspond to the eigenvectors of $\mathbb L$ and those with the lowest and highest energy on a level set $C^{-1}(a)$ ($a > 0$) are nonlinearly stable. Furthermore, if $\mathbb J_{ij} = \mathbb J$ and $\mathbb I_i = \mathbb I$, then there exists $3N$ linearly independent relative equilibrium configurations such that
\begin{enumerate}
\item three are ferromagnetic states, i.e. $\mathbf \Pi_i^e = \sqrt{d_i} \mathbf \Pi^e$ for $i=1, \ldots, N$, where $\mathbf \Pi^e$ is an eigenvector of the extended inertia matrix $\mathbb I_{\text{ext}}:=\mathbb I^{-1} - \mathbb J$ and
\item the remaining $3N-3$ are anti-ferromagnetic states, i.e. $\sum_{i=1}^N \sqrt{d_i} \mathbf \Pi_i^e = \mathbf 0$
\end{enumerate}
\end{proposition}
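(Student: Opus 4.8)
The plan is to deduce this proposition directly from the general theory of Section \ref{SS-section}, observing that $\mathfrak{so}(3)$ is the prototypical compact, semi-simple Lie algebra, here of dimension $k=3$. First I would record the standard identifications that let the abstract statements descend to the $\mathbb R^3$ picture used in this section: the Killing form of $\mathfrak{so}(3)$ is negative definite, so $\langle \cdot, \cdot \rangle = -\kappa(\cdot,\cdot)$ turns $\mathfrak{so}^*(3) \cong \mathfrak{so}(3) \cong \mathbb R^3$ into an inner-product space with the bracket realised as the cross product, and the summed Casimir becomes $C(\overline{\mathbf \Pi}) = \tfrac12 \|\overline{\mathbf \Pi}\|^2$, matching the quadratic Casimir used in Propositions \ref{equib} and \ref{equib-config}. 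Once these identifications are in place, the hypotheses of the general propositions (compactness, semi-simplicity) are met and the results apply verbatim.

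With this dictionary, the first two assertions follow immediately. The relative equilibria of \eqref{LP-network} on a level set $C^{-1}(a)$ are exactly the eigenvectors of $\mathbb L$, rescaled to lie on $C^{-1}(a)$, as established in the discussion preceding Proposition \ref{equib}; and Theorem \ref{equib-config} shows that the equilibria corresponding to the highest and lowest eigenvalues $\lambda_e$ of $\mathbb L$ (when $\mathrm{mult}(\lambda_e)=1$) are nonlinearly stable. By \eqref{lambda-energy} these are precisely the highest- and lowest-\emph{energy} states on $C^{-1}(a)$, which is the form in which the proposition phrases them.

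For the structural ``furthermore'' clause, I would apply Proposition \ref{equib} directly with $k=3$, under the substitutions $\mu_i \mapsto \mathbf \Pi_i$ and $\mu^e \mapsto \mathbf \Pi^e$. This yields $3N$ linearly independent equilibria splitting into $k=3$ ferromagnetic states $\mathbf \Pi_i^e = \sqrt{d_i}\,\mathbf \Pi^e$, with $\mathbf \Pi^e$ an eigenvector of the extended inertia matrix $\mathbb I_{\text{ext}} = \mathbb I^{-1} - \mathbb J$, together with $(N-1)k = 3N-3$ anti-ferromagnetic states satisfying $\sum_i \sqrt{d_i}\,\mathbf \Pi_i^e = \mathbf 0$, exactly as claimed.

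I do not expect any genuine obstacle, since the substantive work is already carried out in the general propositions; the proof is essentially a specialisation. The only points requiring care are bookkeeping ones: checking that the equal-tensor hypotheses $\mathbb I_i = \mathbb I$ and $\mathbb J_{ij} = \mathbb J$ transfer so the $V \oplus V^\perp$ decomposition of Proposition \ref{equib} is available, that setting $k=3$ produces the stated counts $3$ and $3N-3$, that the multiplicity caveat $\mathrm{mult}(\lambda_e)=1$ inherited from Theorem \ref{equib-config} is acknowledged, and that the rescaling placing an eigenvector on $C^{-1}(a)$ is compatible with $C$ being a Casimir, so that the level set is genuinely flow-invariant and the notion of a relative equilibrium there is well posed.
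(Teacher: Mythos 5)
Your proposal is correct and matches the paper's own treatment exactly: the paper gives no separate proof of Proposition \ref{stab-RB}, stating only that since $\mathfrak{so}(3)$ is compact and semi-simple one applies Propositions \ref{equib} and \ref{equib-config} (with $k=3$, via the identification $\mathfrak{so}^*(3)\cong\mathbb R^3$ and \eqref{lambda-energy}), which is precisely your specialisation argument. Your explicit flagging of the $\mathrm{mult}(\lambda_e)=1$ hypothesis inherited from Theorem \ref{equib-config} is a careful touch, since the proposition's statement silently omits it.
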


Now, from \eqref{lin-eq}, the linearized equation for the rigid body network around an equilibrium configuration $\overline {\mathbf \Pi}^e$ is given by
\begin{align} \label{linearised-RB}
\frac{d}{dt}\delta \overline {\mathbf \Pi} = \widehat{\mathbf \Pi}^e(\mathbb L - \lambda_e \mathbb 1) \cdot \delta \overline {\mathbf \Pi}, \quad \widehat{\mathbf \Pi}^e := \text{diag}(\widehat{\mathbf \Pi}_1^e, \ldots, \widehat{\mathbf \Pi}_N^e)\, ,
\end{align}
where $\,\widehat{ } : (\mathbb R^3, \times) \rightarrow \mathfrak{so}^*(3)$ is the standard Lie algebra isomorphism that takes a vector in $\mathbb R^3$ to a skew-symmetric matrix in $\mathfrak{so}^*(3)$. 

\begin{figure}[htpb]
  \centering
  \subfigure[$\mathbb I=\text{diag}(1,1,1)$, $\mathbb J=\text{diag}(1,2,3)$]{\includegraphics[scale=0.52]{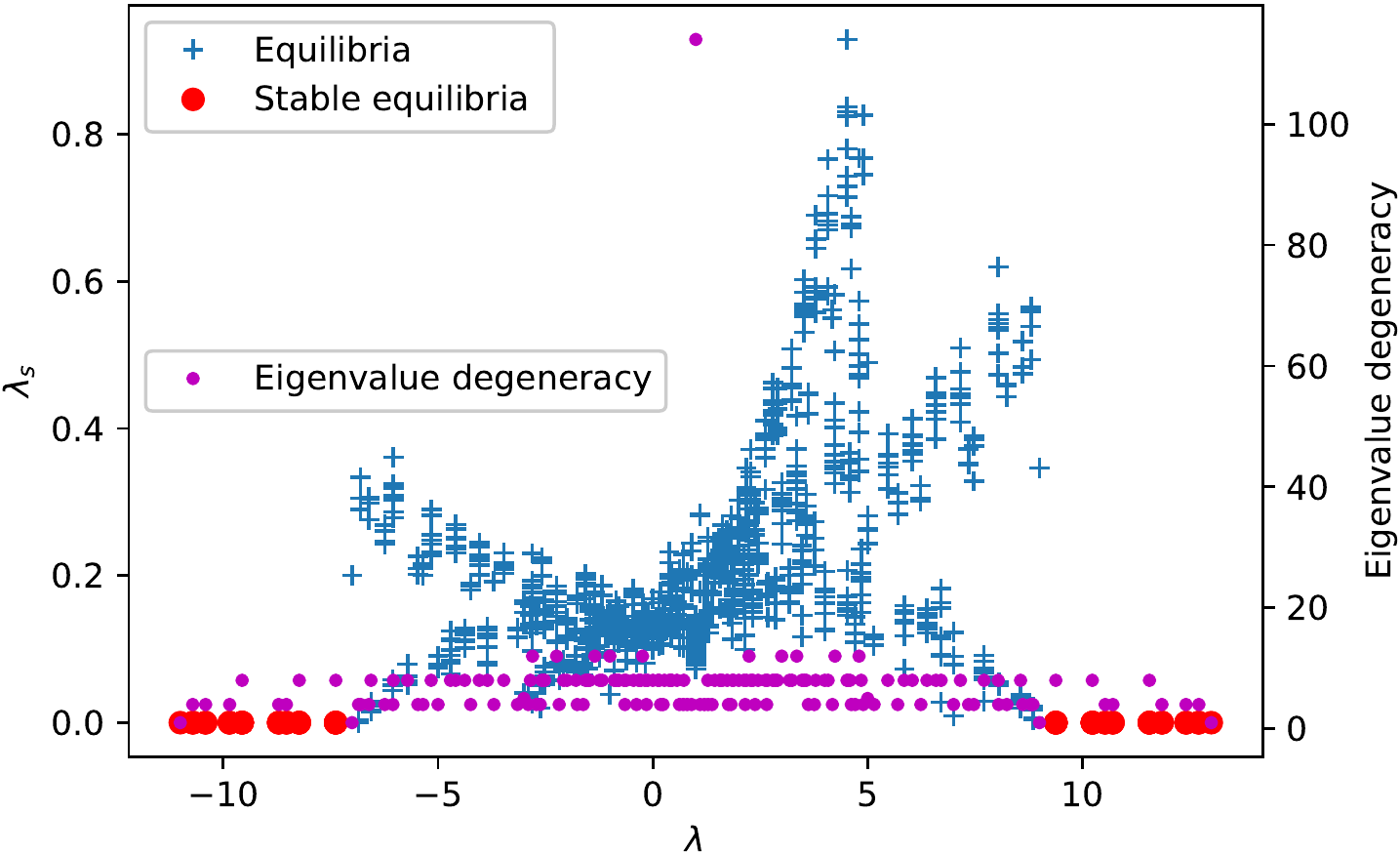}\label{fig:RB-stability1}}
  \subfigure[$\mathbb I=\text{diag}(1,2,3)$, $\mathbb J=\text{diag}(1,1,1)$]{\includegraphics[scale=0.52]{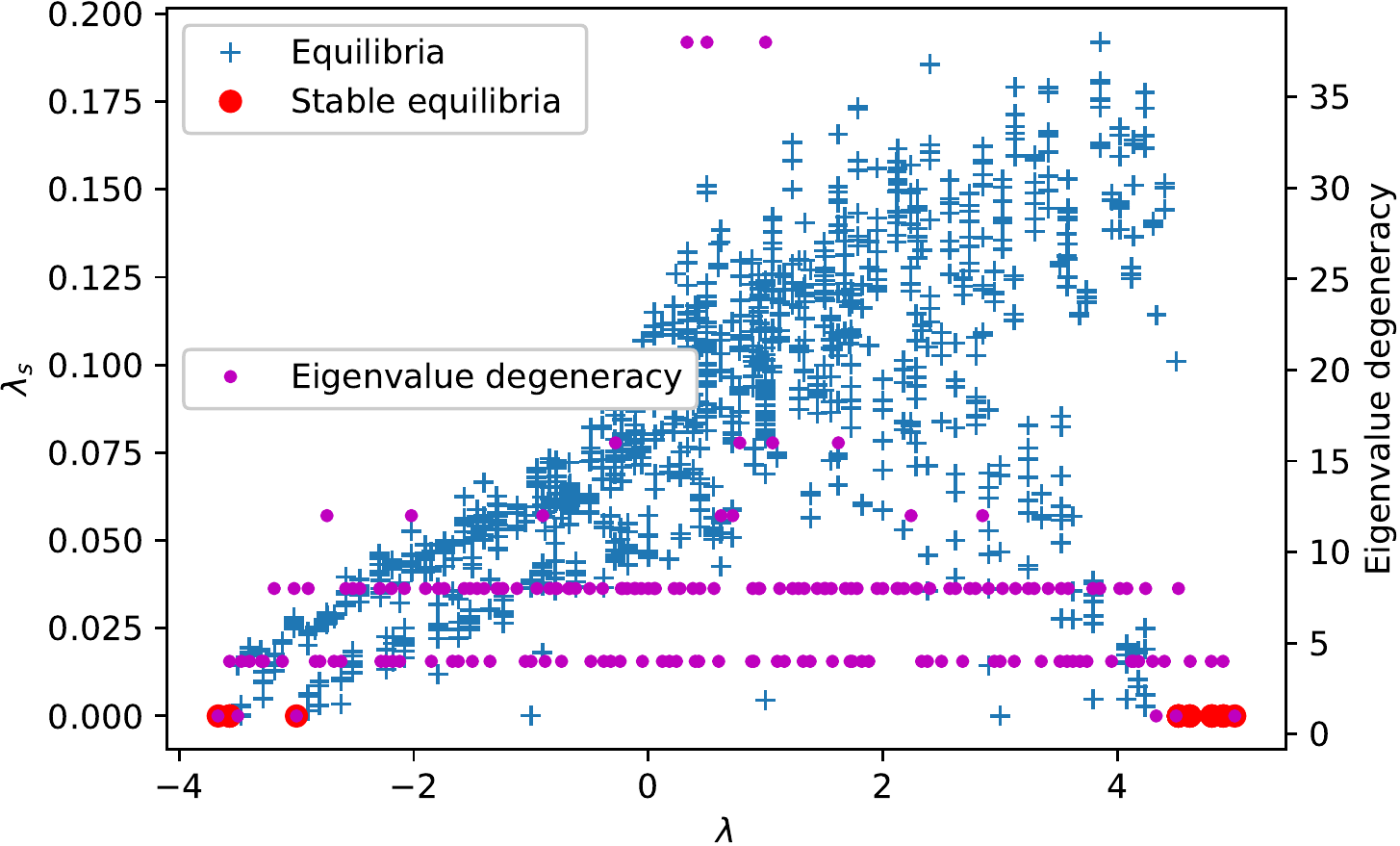}\label{fig:RB-stability2}}
  \caption{In this figure, we plot the maximum real part of the eigenvalues $\lambda_S$ of the linearised system \eqref{linearised-RB} corresponding to equilibrium solutions $\boldsymbol \Pi_e$ (eigenvectors of $\mathbb L$) with energy $\lambda_e$ (eigenvalue of $\mathbb L$ corresponding to eigenvector $\boldsymbol \Pi_e$). The blue crosses indicate the maximum real eigenvalues that are non-zero, corresponding to unstable solutions, and the red dots indicate those that are zero, corresponding to linearly stable solutions. The purple dots indicate the algebraic multiplicity $n$ of $\lambda_e$, whose values are to be read from the right axis. As expected, the equilibria corresponding to the highest and lowest energy are stable in both cases.}
  \label{fig:RB-stability}
\end{figure}

In figure \ref{fig:RB-stability}, we computed the eigenvalues $\lambda_s$ of the linearised system \eqref{linearised-RB} plotted against its corresponding energy $\lambda_e$ for two cases: (a) $\mathbb I=\text{diag}(1,1,1)$, $\mathbb J=\text{diag}(1,2,3)$ and (b) $\mathbb I=\text{diag}(1,2,3)$, $\mathbb J=\text{diag}(1,1,1)$, where we took a $20 \times 20$ lattice with periodic boundary condition. As expected from proposition \ref{stab-RB}, we see that in both cases, the highest and lowest energy configurations are linearly stable (both have multiplicity 1), however the majority of the configurations that lie between these two states are unstable. Interestingly, we also see a few linearly stable states towards the high and low end of the energy, with significantly more of them in \ref{fig:RB-stability1} than in \ref{fig:RB-stability2}.

\begin{figure}[htpb]
  \centering
  \subfigure[$\mathbb I = \mathrm{diag}(1,1,1)$, $\mathbb J = \mathrm{diag}(1,2,3)$, low energy position with  $\lambda= -9.8$]{\includegraphics[scale=0.30]{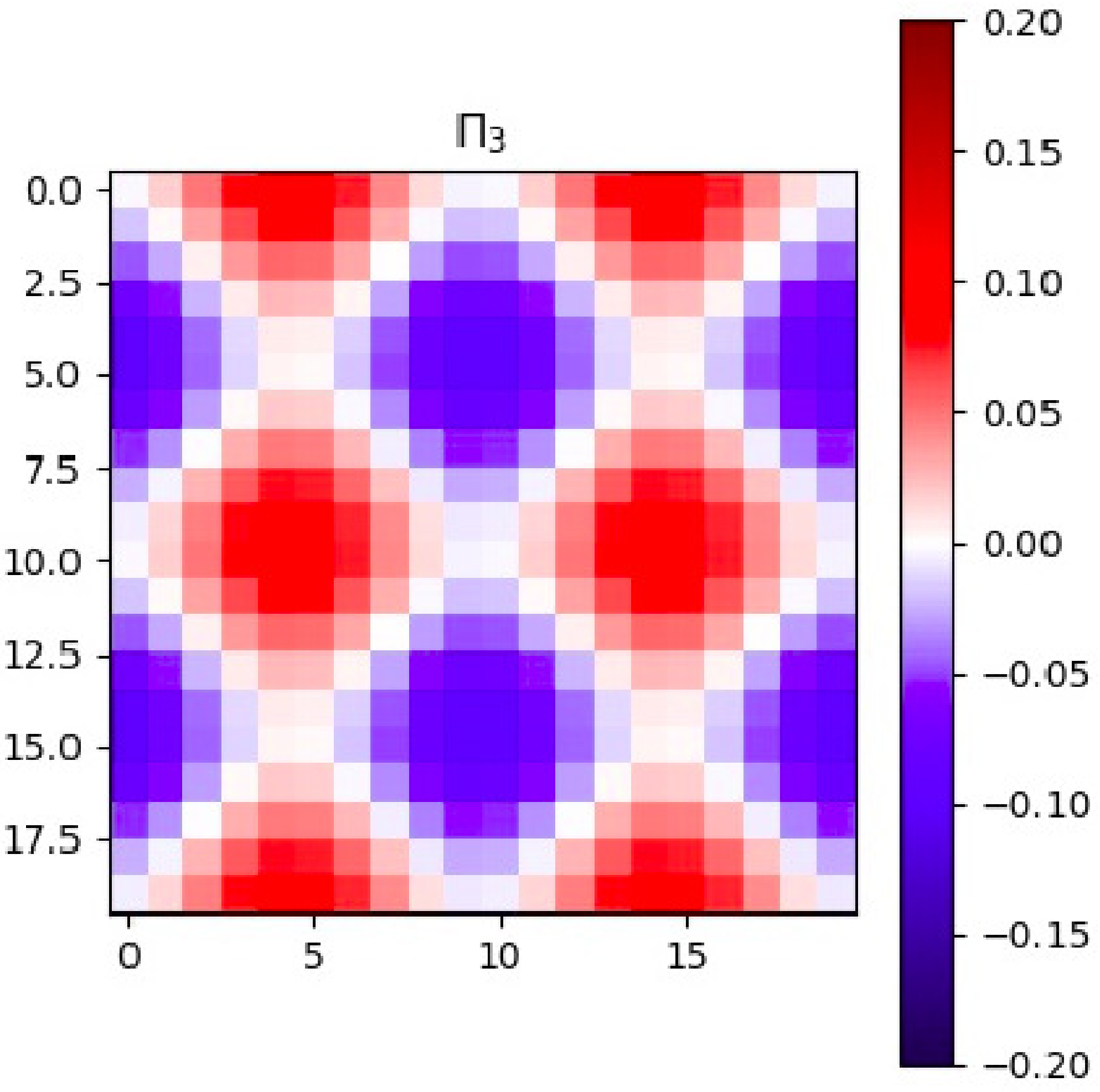}\label{fig:RB-sol1a}}
  \subfigure[$\mathbb I = \mathrm{diag}(1,2,3)$, $\mathbb J = \mathrm{diag}(1,1,1)$, high energy position with $\lambda= 4.6$]{\includegraphics[scale=0.30]{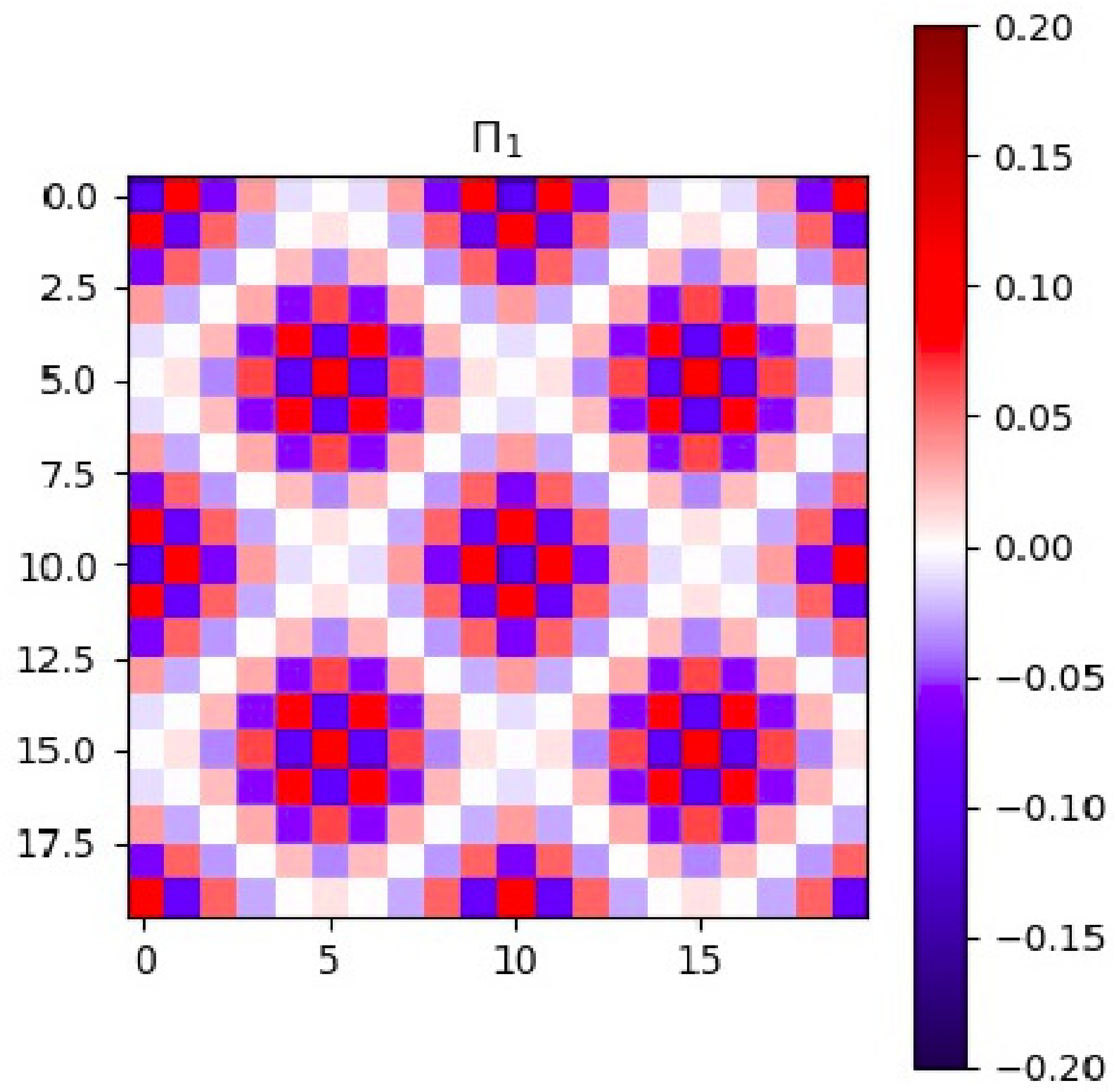}\label{fig:RB-sol1b}}
  \subfigure[$\mathbb I = \mathrm{diag}(1,1,1)$, $\mathbb J = \mathrm{diag}(1,2,3)$, high energy positions with  $\lambda= 12.7$]{\includegraphics[scale=0.30]{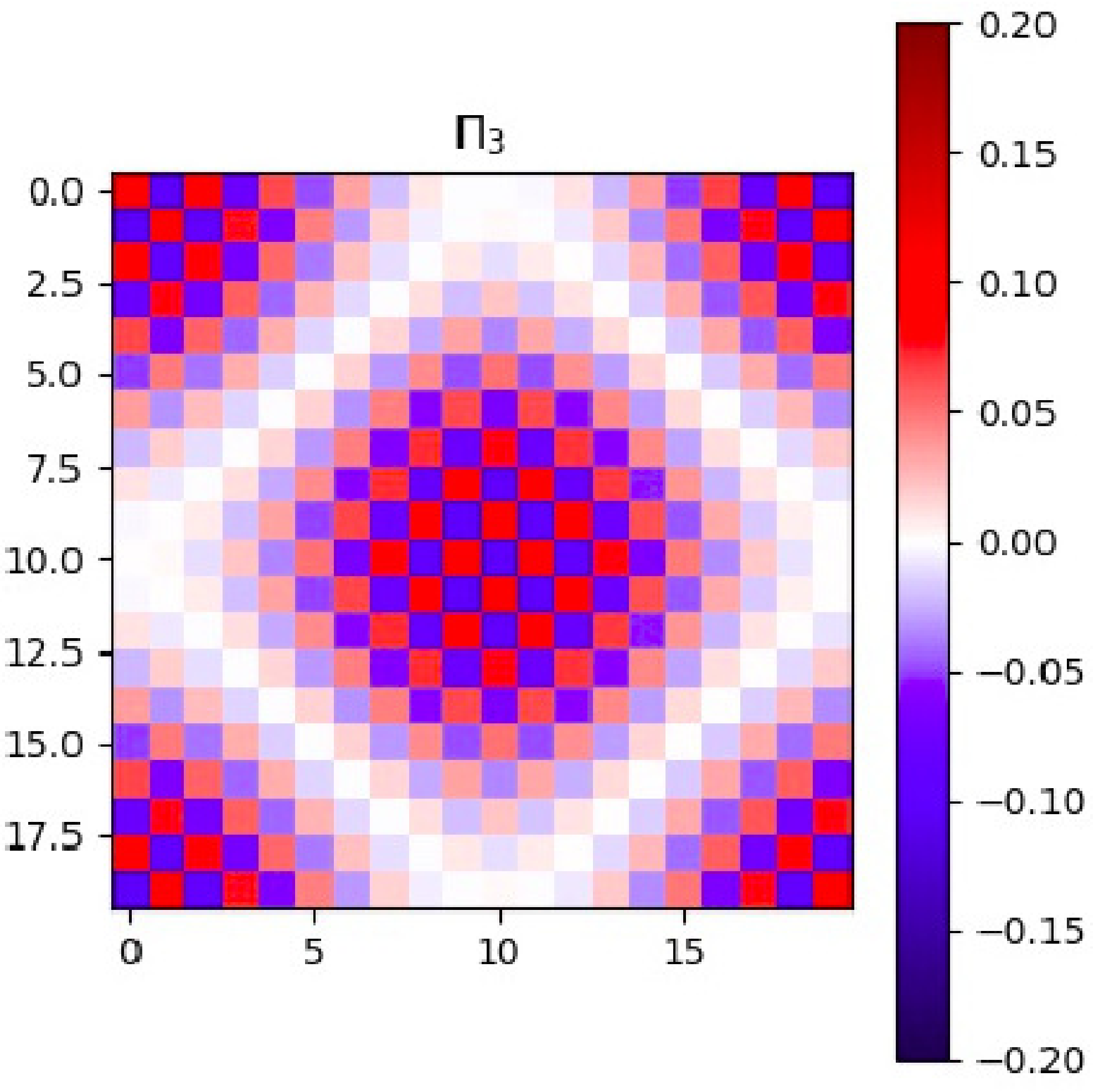}\label{fig:RB-sol2b}}
  \subfigure[$\mathbb I = \mathrm{diag}(1,2,3)$, $\mathbb J = \mathrm{diag}(1,1,1)$, low energy position with $\lambda= -3.5$]{\includegraphics[scale=0.30]{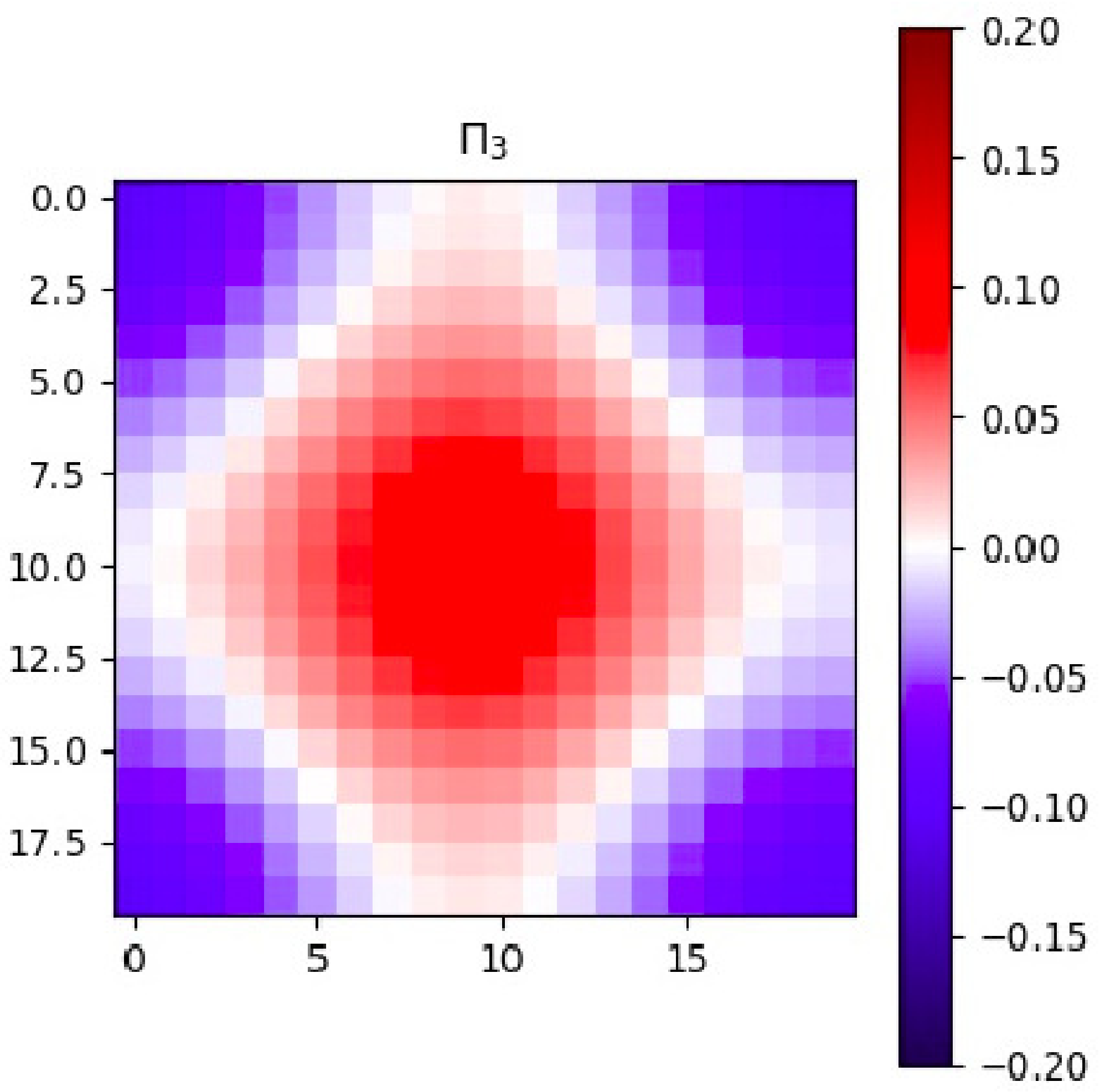}\label{fig:RB-sol2a}}
  \caption{In this figure, we display four non-trivial examples of linearly stable equilibrium solutions of the $20\times 20$ rigid body lattice. 
    All the solutions are parallel to one of the axes (either $\Pi_1$ or $\Pi_3$) and have an `argyle' pattern of aligned spins (\ref{fig:RB-sol1a}, \ref{fig:RB-sol2a}) or anti-aligned spins (\ref{fig:RB-sol1b}, \ref{fig:RB-sol2b}).
    The former pattern corresponds to configurations with low energy (negative values of $\lambda$), and the latter corresponds to configurations with high energy (positive values of $\lambda$). 
    All four solutions given here correspond to the red dots in figure \ref{fig:RB-stability}.}
  \label{fig:RB-sols}
\end{figure}

We observed numerically that the linearly stable states (red dots) given in figure \ref{fig:RB-stability1}, are parallel to the $\Pi_3$-axis, which tend to be characterised by a regular `argyle' pattern of aligned spins for low energy configurations (figure \ref{fig:RB-sol1a}) and anti-aligned spins for high energy configurations (figure \ref{fig:RB-sol2b}). As expected, the lowest energy state is a ferromagnetic state that is completely aligned with the $\Pi_3$-axis and the highest energy state is an anti-ferromagnetic state given by a fine checkerboard pattern. These last two configurations are furthermore nonlinearly stable by proposition \ref{stab-RB}.

The linearly stable states in figure \ref{fig:RB-stability2} can be described as follows, from lowest to highest energy. The lowest energy configurations is a ferromagnetic state that is parallel to the $\Pi_3$-axis as expected, the next ($\lambda = -3.5$) is a four dimensional subspace consisiting of a similar `argyle' pattern of aligned spins seen in the other case (figure \ref{fig:RB-sol2a}), the next ($\lambda = -3$) is again a ferromagnetic state, along the $\Pi_1$-axis this time, the last few ($4 < \lambda < 5$) consist of `argyle' patterns with anti-aligned spins parallel to the $\Pi_1$-axis (figure \ref{fig:RB-sol1b}) and the highest energy state is a completely anti-ferromagnetic checkerboard patterned state as expected.

%%%%%%%%%%%%%%%%%%%%%%%%%%%%%%%%%%%%%%%%%%%%%
\section{Network of coadjoint orbits II: Position coupling}\label{SD-section}

In this section, we consider a Lie-Poisson network of coadjoint orbits $(\mathcal N, G, \mathcal O)$ that arise from a different approach to coupling the neighbours. We will call this {\em position coupling}. In contrast to momentum coupling, where coupling occurs at the level of the reduced space $\mathfrak g^*$, in position coupling, the coupling occurs at the level of the group by considering a representation of $G$, followed by symmetry reduction, which yields a Euler-Poincar{\'e}/Lie-Poisson equation. This introduces a semi-direct product group structure into our system, analogous to the heavy top. We saw earlier that the momentum-coupled equations arise somewhat unnaturally from the Euler-Poincar{\'e} framework, however, there is no such issue for the position coupling approach. Furthermore, the equations that we obtain by position coupling are completely new to our knowledge and further investigation is necessary to understand the new phase transition behaviour that we will describe in section \ref{PT-section}.

\subsection{Deterministic equations}

Consider a network $\mathcal N$, a group $G$ and a left representation of $G$ on a vector space $V$. Fixing a vector $\boldsymbol a_0 = (A_1, \ldots, A_N) \in (V^*)^{\oplus N}$ and introducing the interaction tensor $\mathbb J_{ij} : V^* \rightarrow V$, we define the {\em position coupled interaction potential energy}
\begin{align} \label{h-int-HT}
h^{\text{int}}_{ij}(g_i, g_j ; A_i, A_j) = -\frac{1}{2\sqrt{d_id_j}} \langle g_i^{-1} A_i, \mathbb J_{ij} \, g_j^{-1} A_j \rangle_{V^* \times V}\, , \quad g_i, g_j \in G,
\end{align}
where nodes $i$ and $j$ are adjacent on the graph $\mathcal N$ and we denote by $gA$ to be the left action of $g \in G$ on $a \in V^*$. At node $i$, we assign a Lagrangian $L_i(g_i, \dot{g}_i; A_i)$ depending on the parameter $A_i$ that satisfies the following properties
\begin{enumerate}
\item $L_i(g_i, \dot{g}_i; A_i)$ is (left) invariant under $G_{A_i}$, namely the isotropy subgroup of $G$ which fixes the vector $A_i$.
\item The extended Lagrangian $L_i(g_i, \dot{g}_i, A_i)$ is invariant under the diagonal (left) action of $G$ on $TG \times V^*$.
\end{enumerate}
We take our full Lagrangian to be
\begin{align}
L(\boldsymbol g, \dot{\boldsymbol g}, \boldsymbol a_0) = \sum_i L_i(g_i, \dot{g}_i, A_i) - \sum_i \sum_{j \sim i} h^{\text{int}}_{ij}(g_i, g_j, A_i, A_j)\, ,
\end{align}
where $\boldsymbol g = (g_1, \ldots, g_N) \in G^{\times N}$.
We can check that this Lagrangian is invariant under the (left) diagonal action of $G^{\times N}$ on $TG^{\times N} \times (V^*)^{\times N}$, so we obtain the reduced Lagrangian
\begin{align}
L(\boldsymbol g, \dot{\boldsymbol g}, \boldsymbol a_0) &= L(\boldsymbol e, \boldsymbol g^{-1}\dot{\boldsymbol g}, \boldsymbol g^{-1}\boldsymbol a_0) \nonumber \\
&:= l(\boldsymbol \xi, \boldsymbol a)=\sum_i l_i(\xi_i, a_i) + \frac12 \sum_{i} \sum_{j \sim i}\frac{1}{\sqrt{d_id_j}} \langle a_i, \mathbb J_{ij} a_j \rangle\, ,
\end{align}
where $ \xi_i := g_i^{-1} \dot{g_i}, \quad a_i := g_i^{-1} A_i$, $\boldsymbol \xi = (\xi_1, \ldots, \xi_N)$ and  $\boldsymbol a = (a_1, \ldots, a_N)$.
This allows us to apply the semi-direct product reduction theorem given in Holm et al. \cite{holm1998euler} to obtain the following network Euler-Poincar\'e equations.
\begin{theorem}
The following statements are equivalent.
\begin{enumerate}
\item For a fixed vector $\boldsymbol a_0 \in (V^*)^{\oplus N}$, Hamilton's variational principle
    \begin{align}
    \delta \int_{t_1}^{t_2} L(\boldsymbol g, \dot{\boldsymbol g}; \boldsymbol a_0) \, dt = 0
    \end{align}
holds for variations $\delta \boldsymbol g$ vanishing at the endpoints.
\item $\boldsymbol g(t)$ satisfies the Euler-Lagrange equations for $L|_{\boldsymbol a_0}$ on $G^{\times N}$.
\item The constrained variational principle
    \begin{align}
    \delta \int_{t_1}^{t_2} l(\boldsymbol \xi(t), \boldsymbol a(t)) \, dt = 0
    \end{align}
holds on $(\mathfrak g \oplus V^*)^{\oplus N}$ with variations taking the form
    \begin{align}
    \delta \xi_i = \dot{\eta_i} + [\xi_i, \eta_i], \quad \delta a_i = - \eta_i a_i\, ,
    \end{align}
for all $i=1,\ldots,N$, where $\eta_i(t) := \delta g_i(t) \in \mathfrak g$ vanishes at the endpoints.
\item The following Euler-Poincar{\'e} equations hold on $(\mathfrak g \oplus V^*)^{\oplus N}$
    \begin{align}
    \frac{d}{dt} \frac{\delta l_i}{\delta \xi_i} &= \ad^*_{\xi_i} \frac{\delta l_i}{\delta \xi_i} + \frac{\delta l_i}{\delta a_i} \diamond a_i + \left(\mathbb D^{-\frac12} \mathbb A \mathbb D^{-\frac12} \boldsymbol a\right)_i \diamond a_i, \label{EP-semidirect} \\
\frac{da_i}{dt} &= - \xi_i a_i
    \end{align}
for all $i=1, \ldots, N$, where $\diamond: V^* \times V \rightarrow \mathfrak{g}^*$  is the momentum map defined by
\begin{align}
\left< \left< q, \, \xi p \right> \right> = - \langle \xi, p \,\diamond \, q \rangle\, ,
\end{align}
where $\langle \langle \cdot, \cdot \rangle \rangle$ is the natural pairing on $V^* \times V$.
\end{enumerate}
\end{theorem}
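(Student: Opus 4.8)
The plan is to recognize this as a direct application of the semidirect product reduction theorem of Holm, Marsden and Ratiu \cite{holm1998euler}, adapted to the product structure over the $N$ nodes, and then to verify that the interaction term produces exactly the extra diamond contribution appearing in \eqref{EP-semidirect}. First I would establish the equivalence $(1)\Leftrightarrow(2)$, which is immediate: Hamilton's principle on $G^{\times N}$ with the parameter $\boldsymbol a_0$ held fixed is, by definition, equivalent to the Euler--Lagrange equations for $L|_{\boldsymbol a_0}$. The substance of the theorem lies in the equivalences $(2)\Leftrightarrow(3)\Leftrightarrow(4)$, which carry the reduction.

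For $(1)\Leftrightarrow(3)$ I would follow the standard semidirect-product argument. Using the $G^{\times N}$-invariance of the extended Lagrangian together with the advected quantity relation $a_i = g_i^{-1}A_i$, I would show that the unconstrained variations $\delta\boldsymbol g$ on the group translate, via $\eta_i := g_i^{-1}\delta g_i$, into precisely the constrained variations $\delta\xi_i = \dot\eta_i + [\xi_i,\eta_i]$ and $\delta a_i = -\eta_i a_i$. The first of these is the usual Euler--Poincar\'e variation coming from the identity $\delta(g^{-1}\dot g) = \dot\eta + [\xi,\eta]$; the second follows by differentiating $a_i = g_i^{-1}A_i$ and using that $A_i$ is fixed, giving $\delta a_i = -(g_i^{-1}\delta g_i)(g_i^{-1}A_i) = -\eta_i a_i$. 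Since the endpoint conditions on $\delta\boldsymbol g$ transfer to endpoint conditions on $\eta_i$, the two variational principles have the same stationarity content.

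Next I would carry out the variational calculation for $(3)\Leftrightarrow(4)$. Computing $\delta\int l\,dt$ and integrating by parts in $t$, the kinetic part $\sum_i l_i(\xi_i,a_i)$ yields, by the standard Euler--Poincar\'e computation, the terms $-\tfrac{d}{dt}\tfrac{\delta l_i}{\delta\xi_i} + \ad^*_{\xi_i}\tfrac{\delta l_i}{\delta\xi_i} + \tfrac{\delta l_i}{\delta a_i}\diamond a_i$ after pairing the $\delta a_i = -\eta_i a_i$ variation against $\tfrac{\delta l_i}{\delta a_i}$ and invoking the definition of $\diamond$. The key new piece is the interaction term
\begin{align}
\frac12 \sum_i \sum_{j\sim i} \frac{1}{\sqrt{d_id_j}} \langle a_i, \mathbb J_{ij} a_j\rangle,
\end{align}
whose variation with respect to $a_i$ (using the symmetry of the coupling under $i\leftrightarrow j$ to combine the two occurrences of each edge) contributes $\bigl(\mathbb D^{-\frac12}\mathbb A\mathbb D^{-\frac12}\boldsymbol a\bigr)_i$ as the effective $\tfrac{\delta}{\delta a_i}$ of the interaction energy. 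Pairing this against $\delta a_i = -\eta_i a_i$ and applying the diamond definition once more produces exactly the extra term $\bigl(\mathbb D^{-\frac12}\mathbb A\mathbb D^{-\frac12}\boldsymbol a\bigr)_i \diamond a_i$ in \eqref{EP-semidirect}. Collecting the coefficient of each independent $\eta_i$ and setting it to zero yields the stated equations, while the advection law $\dot a_i = -\xi_i a_i$ comes directly from differentiating $a_i = g_i^{-1}A_i$.

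I expect the main obstacle to be the bookkeeping of the interaction term: one must check carefully that the factor $\tfrac{1}{\sqrt{d_id_j}}$ together with the edge sum reassembles into the symmetric normalized matrix $\mathbb D^{-\frac12}\mathbb A\mathbb D^{-\frac12}$ acting on $\boldsymbol a$, and that the factor of $\tfrac12$ is correctly absorbed when the variation $\delta a_i$ hits both the $a_i$ and $a_j$ slots of each edge. The purely Lie-theoretic part of the reduction is routine given \cite{holm1998euler}; the only genuinely new content is confirming that the nonlocal coupling on the network enters as a diamond term built from the same extended adjacency operator that appeared in the momentum-coupled Hamiltonian \eqref{ext-Lapl}, which is what makes the position-coupled system a semidirect-product analogue of the earlier construction.
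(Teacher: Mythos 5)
Your proposal is correct and matches the paper's approach: the paper proves this theorem simply by invoking the semidirect product reduction theorem of Holm, Marsden and Ratiu \cite{holm1998euler} after checking the $G^{\times N}$-invariance of the extended Lagrangian, which is exactly the route you take. Your explicit verification that the edge sum with the $\tfrac{1}{\sqrt{d_id_j}}$ weights and the factor $\tfrac12$ reassembles into $\bigl(\mathbb D^{-\frac12}\mathbb A\mathbb D^{-\frac12}\boldsymbol a\bigr)_i \diamond a_i$ is the only step the paper leaves implicit, and you carry it out correctly.
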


\subsubsection{Legendre transformation and Lie-Poisson equation}

We find the reduced Hamiltonian by taking the partial Legendre transform in the velocity variable,
\begin{align}
h(\boldsymbol \mu, \boldsymbol a) &= \langle \boldsymbol \mu, \boldsymbol \xi \rangle - l(\boldsymbol \xi, \boldsymbol a) =: \sum_i h_i(\mu_i, a_i) - \frac12\left<\boldsymbol  a, \mathbb D^{-\frac12} \mathbb A \mathbb D^{-\frac12} \boldsymbol a \right>\, ,
\label{Legendre}
\end{align}
where $\boldsymbol \mu = (\mu_1, \ldots, \mu_N)$ and $\quad \mu_i = \frac{\delta l}{\delta \xi_i}$.
The Euler-Poincar\'e equation  \eqref{EP-semidirect} then becomes a Lie-Poisson equation 
\begin{align}
  \begin{split}
\frac{d \mu_i}{dt} &= \ad^*_{\delta h_i / \delta \mu_i} \mu_i - \frac{\delta h_i}{\delta a_i} \diamond a_i - \left(\mathbb D^{-\frac12} \mathbb A \mathbb D^{-\frac12} \boldsymbol a\right)_i \diamond a_i \\
\frac{d a_i}{dt} &= - \frac{\delta h_i}{\delta \mu_i} \cdot a_i\, ,
  \end{split}
\end{align}
which is indeed Lie-Poisson, with bracket given by the $(-)$ Lie-Poisson bracket on the semi-direct product algebra $(\mathfrak{g}^* \circledS V^*)^{\times N} $, which is given by
\begin{align}
\{f,g\}^-_{LP}(\boldsymbol \mu, \boldsymbol a) &= -\sum_i \left(\left< \mu_i, \left[\frac{\delta f}{\delta \mu_i}, \frac{\delta g}{\delta \mu_i} \right] \right> + \left<a_i, \frac{\delta f}{\delta \mu_i}\frac{\delta g}{\delta a_i} - \frac{\delta g}{\delta \mu_i} \frac{\delta f}{\delta a_i} \right>\right)\, . 
\label{LP-semidirect}
\end{align}
\begin{remark}
Likewise, for a right invariant Lagrangian and right representation, we recover the $(+)$ Lie-Poisson bracket. Again, we refer the readers to Holm et al. \cite{holm1998euler} for more details about left vs. right group action.
\end{remark}

For the special case where $\mathfrak g = \text{Lie}(G)$ is compact, semi-simple and given a coadjoint representation of $G$ on $V^* = \mathfrak g^*$, the Lie-Poisson structure becomes
\begin{align}
\{f,g\}^-_{LP}(\boldsymbol \mu, \boldsymbol a) &= -\sum_i \left(\left< \mu_i, \left[\frac{\delta f}{\delta \mu_i}, \frac{\delta g}{\delta \mu_i} \right] \right> + \left<a_i, \left[\frac{\delta f}{\delta \mu_i},\frac{\delta g}{\delta a_i}\right] - \left[\frac{\delta g}{\delta \mu_i}, \frac{\delta f}{\delta a_i}\right] \right>\right)\, , 
\label{LP-semidirect-2}
\end{align}
with the corresponding Lie-Poisson equation
\begin{align}
  \begin{split}
\frac{d \mu_i}{dt} &= \left[\mu_i, \frac{\delta h_i}{\delta \mu_i} \right] + \left[a_i, \frac{\delta h_i}{\delta a_i} - \left(\mathbb D^{-\frac12} \mathbb A \mathbb D^{-\frac12} \boldsymbol a\right)_i\,\right] \\
\frac{d a_i}{dt} &= \left[a_i,\frac{\delta h_i}{\delta \mu_i} \right]\, . 
\label{LP-eq-semidirect}
  \end{split}
\end{align}
We refer to Ratiu et al. \cite{ratiu1981euler,ratiu1982lagrange} for more details on similar systems and in particular their complete integrability in the case of the Lagrange top (which will be lost here, due to the interaction terms). 

One can check that the Casimirs for the bracket \eqref{LP-semidirect-2} are given by
\begin{align}
  C_{i,1} = \langle \mu_i ,a_i\rangle \qquad \mathrm{and} \qquad C_{i,2}   \langle a_i ,a_i\rangle\, , 
\end{align}
for $i=1, \ldots, N$, where we take $\langle \cdot, \cdot \rangle = -\kappa(\cdot, \cdot)$ and the coadjoint orbits of the semi-direct product group $(G \,\circledS \,\mathfrak g)^{\times N}$ are contained in the level sets of the Casimirs, i.e.
\begin{align}
  \mathcal O \subset \left\{(\boldsymbol \mu, \boldsymbol a) \in (\mathfrak g^* \,\circledS \,\mathfrak g^*)^{\times N} :  C_{i,1}(\boldsymbol \mu, \boldsymbol a) = c_{i,1}, C_{i,2}(\boldsymbol \mu, \boldsymbol a) = c_{i,2}, \, \forall i = 1, \ldots, N \right\}\, ,
\end{align}
for given $2N$ constants $c_{1,1}, \ldots, c_{N,1}$ and $c_{1,2}, \ldots, c_{N,2}$. Again, this follows from the $\text{Ad}^*$-invariance of the Killing form. The preservation of coadjoint orbits under the dynamics of the position coupled network Lie-Poisson system again follows from the equivariance of the momentum map $\mathbf J_R : T^*(G \,\circledS \,\mathfrak g)^{\times N} \rightarrow (\mathfrak g^* \,\circledS \,\mathfrak g^*)^{\times N}$ giving rise to the Lie-Poisson structure \eqref{LP-semidirect}.
We will also denote by 
\begin{align}
  C_1(\boldsymbol \mu, \boldsymbol a) = \sum_{i=1}^N C_{i,1}(\mu_i, a_i)\qquad \mathrm{and} \qquad C_2( \boldsymbol a) = \sum_{i=1}^N C_{i,2}(a_i)\,, 
\end{align}
as the sum of the Casimirs. 

\subsection{Equilibrium positions}\label{HT-eq-general}

We now seek for equilibrium solutions of \eqref{LP-eq-semidirect} that correspond to the critical points of the Hamiltonian restricted to the level sets of the summed Casimirs $C_1$ and $C_2$. Here, we specialise to Hamiltonians of the form
\begin{align}
 h(\boldsymbol \mu,\boldsymbol a) = h^{KE}(\boldsymbol \mu) + h^\mathrm{int}(\boldsymbol a)\, , 
\end{align}
where
\begin{align}
h^{KE}(\boldsymbol \mu) = \frac12 \langle \boldsymbol \mu, \overline {\mathbb I}^{-1} \boldsymbol \mu \rangle, \quad \overline {\mathbb I}^{-1} := \text{diag}(\mathbb I^{-1}_1, \ldots, \mathbb I^{-1}_N)\, ,
\end{align}
for inertia tensors $\mathbb I_i : \mathfrak g \rightarrow \mathfrak g^*$ at each node $i$ is the kinetic energy, and
\begin{align}
h^{\text{int}}(\boldsymbol a) = - \frac12 \left< \boldsymbol a, \mathbb D^{-\frac12} \mathbb A \mathbb D^{-\frac12} \,\boldsymbol a \right>\, .
\end{align}
is the interaction potential energy. Now consider the augmented Hamiltonian
\begin{align}
  h_{\phi,\psi}(\boldsymbol \mu,\boldsymbol a) = h(\boldsymbol \mu,\boldsymbol a) + \phi(C_1) + \psi(C_2)\, , 
\end{align}
where $\phi, \psi$ are arbitrary smooth functions and take its first variations, which set to $0$ gives the condition for $(\boldsymbol \mu_e, \boldsymbol a_e) $ to be an equilibrium solution. 
One could also choose a more general function of the two Casimirs, but this form turns out to be general enough for our purpose. Taking the variation, we get
\begin{align}
  \begin{split}
    Dh_{\phi,\psi}(\boldsymbol \mu_e, \boldsymbol a_e) \cdot (\delta \boldsymbol \mu,\delta \boldsymbol a)^T &= \sum_i \left( \left<\mathbb I^{-1}_i \mu_i^e  + \lambda_1 \, a_i^e, \, \delta \mu_i \right> \right.\\
  &\left. + \left< -\left(\mathbb D^{-\frac12} \mathbb A \mathbb D^{-\frac12} \boldsymbol a_e\right)_i+ \lambda_1 \, \mu_i^e  + \lambda_2 \, a_i^e, \,\delta a_i \right> \right) = 0  \, , 
  \end{split}
\end{align}
where we defined
\begin{align}
  \lambda_1 &:= \phi'(c_1)\qquad \mathrm{and} \qquad \lambda_2 := 2\psi'(c_2)\, . 
\end{align}

The conditions for the equilibrium solutions are hence given by
\begin{align}
&\delta \boldsymbol \mu: \overline {\mathbb I}^{-1} \boldsymbol \mu_e + \lambda_1 \boldsymbol a_e = 0 \label{mu-eq} \\
&\delta \boldsymbol a: -\mathbb D^{-\frac12} \mathbb A \mathbb D^{-\frac12} \boldsymbol a_e + \lambda_1 \boldsymbol \mu_e + \lambda_2 \,\boldsymbol a_e = 0\, , 
 \label{a-eq}
\end{align}
and pairing \eqref{mu-eq} with $\boldsymbol \mu_e$ and \eqref{a-eq} with $\boldsymbol a_e$, we find
\begin{align}
&\lambda_1 = -\frac{1}{c_1}\left<\boldsymbol \mu_e, \overline {\mathbb I}^{-1} \boldsymbol \mu_e \right> \label{lambda1-eq} \\
&\lambda_2 = \frac{1}{c_2}\left(\left<\boldsymbol \mu_e, \overline {\mathbb I}^{-1} \boldsymbol \mu_e \right> + \left<\boldsymbol a_e,  \mathbb D^{-\frac12} \mathbb A \mathbb D^{-\frac12} \boldsymbol a_e\right> \right)\, . 
\label{lambda2-eq}
\end{align} 

Now, substituting \eqref{mu-eq} into \eqref{a-eq}, we obtain the following equation
\begin{align}
\mathbb L(\lambda_1) \,\boldsymbol a_e = -\lambda_2 \,\boldsymbol a_e\, ,
\label{L_l1l2}
\end{align}
where $\mathbb L(\lambda_1) := -\lambda_1^2 \,\overline {\mathbb I} - \mathbb D^{-\frac12} \mathbb A \mathbb D^{-\frac12}$ is our new extended graph Laplacian. Hence, fixing $\lambda_1$, the equilibrium solutions $\boldsymbol a_e$ and $\boldsymbol \mu_e = -\lambda_1 \overline {\mathbb I} \,\boldsymbol a_e$ again correspond to the $kN$ eigenvectors of the graph Laplacian $\mathbb L(\lambda_1)$ with eigenvalue $-\lambda_2$.
From this, we immediately deduce the following result.

\begin{proposition}
Fixing $\lambda_1$ and one of the summed Casimirs $C_1$ or $C_2$, and furthermore, assuming that $\mathbb J_{ij} = \mathbb J$ and $\mathbb I_i = \mathbb I$ for all $i,j = 1, \ldots, N$, there exists $kN$ equilibrium solutions $\boldsymbol a_e$ and $\boldsymbol \mu_e = -\lambda_1 \overline {\mathbb I} \,\boldsymbol a_e$ such that
\begin{enumerate}
\item $k$ are ferromagnetic states, i.e. $a_i = \sqrt{d_i} a$ for all $i$, where $a$ is an eigenvector of the extended inertia tensor $\mathbb I_{\text{ext}}(\lambda_1) := -\lambda_1^2 \mathbb I - \mathbb J$,
\item The remaining $(N-1)k$ are anti-ferromagnetic states, i.e $\sum_{i=1}^N \sqrt{d_i} a_i = 0$.
\end{enumerate}
\end{proposition}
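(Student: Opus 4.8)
The plan is to reduce the entire claim to the spectral analysis of the $\lambda_1$-dependent extended Laplacian $\mathbb L(\lambda_1) = -\lambda_1^2\,\overline{\mathbb I} - \mathbb D^{-\frac12}\mathbb A \mathbb D^{-\frac12}$, mirroring the proof of Proposition \ref{equib} line for line. Equation \eqref{L_l1l2} already establishes that, once $\lambda_1$ is fixed, the admissible vectors $\boldsymbol a_e$ are exactly the eigenvectors of $\mathbb L(\lambda_1)$ with eigenvalue $-\lambda_2$, while the companion momentum is fixed by \eqref{mu-eq} as $\boldsymbol \mu_e = -\lambda_1\,\overline{\mathbb I}\,\boldsymbol a_e$. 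Since $\mathbb I$ is symmetric positive-definite and $\mathbb A$ is symmetric (the graph is undirected and $\mathbb J$ is symmetric), $\mathbb L(\lambda_1)$ is a symmetric operator on $(\mathfrak g^*)^{\oplus N} \cong \mathbb R^{kN}$ and hence admits an orthonormal basis of $kN$ eigenvectors. This already produces the asserted $kN$ equilibria; the content of the proposition is their classification.

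First I would introduce the same orthogonal splitting $(\mathfrak g^*)^{\oplus N} = V \oplus V^{\perp}$ as in Proposition \ref{equib}, with the ferromagnetic subspace $V = \{\boldsymbol a : a_i = \sqrt{d_i}\,a,\ a \in \mathfrak g^*\}$ and the anti-ferromagnetic subspace $V^{\perp} = \{\boldsymbol a : \sum_i \sqrt{d_i}\,a_i = 0\}$, with $\dim V = k$ and $\dim V^{\perp} = (N-1)k$. The key step is to check that both subspaces are invariant under $\mathbb L(\lambda_1)$. For $\boldsymbol a \in V$ a short computation using $\sum_j A_{ij} = d_i$ gives $(\mathbb L(\lambda_1)\boldsymbol a)_i = \sqrt{d_i}\,\mathbb I_{\text{ext}}(\lambda_1)\,a$, which is the same computation that appears in Proposition \ref{equib} but with $\mathbb I^{-1}$ replaced by $-\lambda_1^2\mathbb I$ in the diagonal block; hence $V$ is invariant and $\mathbb L(\lambda_1)|_V \equiv \mathbb I_{\text{ext}}(\lambda_1) = -\lambda_1^2\mathbb I - \mathbb J$. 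Invariance of $V^{\perp}$ then follows either by the analogous summation argument applied to $\sum_i \sqrt{d_i}(\mathbb L(\lambda_1)\boldsymbol a)_i$, or immediately from the fact that $V^{\perp}$ is the orthogonal complement of an invariant subspace of a symmetric operator. Block-diagonalising $\mathbb L(\lambda_1)$ with respect to $V \oplus V^{\perp}$ then delivers the $k$ eigenvectors in $V$, which via the identification $\mathbb L(\lambda_1)|_V \equiv \mathbb I_{\text{ext}}(\lambda_1)$ are precisely the ferromagnetic states with $a$ an eigenvector of $\mathbb I_{\text{ext}}(\lambda_1)$, together with the $(N-1)k$ anti-ferromagnetic eigenvectors in $V^{\perp}$.

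Finally I would settle the Casimir normalisation: the eigenvectors fix the equilibria only up to scale, so I would rescale each $\boldsymbol a_e$ so that the chosen summed Casimir takes its prescribed value, noting that $C_1 = -\lambda_1\langle \boldsymbol a_e, \overline{\mathbb I}\,\boldsymbol a_e\rangle$ and $C_2 = \langle \boldsymbol a_e, \boldsymbol a_e\rangle$ are, up to the factor $\lambda_1$, positive-definite quadratic forms in $\boldsymbol a_e$, so the rescaling exists; the companion $\boldsymbol \mu_e = -\lambda_1\overline{\mathbb I}\,\boldsymbol a_e$ scales accordingly. I do not anticipate any genuine obstacle, since the whole argument is the position-coupled transcription of Proposition \ref{equib}. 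The only point demanding care is the $\lambda_1$-dependence: the decomposition, the extended inertia tensor $\mathbb I_{\text{ext}}(\lambda_1)$, and the eigenvalue $-\lambda_2$ are all understood at the fixed value of $\lambda_1$, so that the numerical spectrum (and hence which of these states is actually realised on a given level set) moves with $\lambda_1$, even though the ferromagnetic/anti-ferromagnetic dichotomy and the dimension count $k + (N-1)k = kN$ are themselves $\lambda_1$-independent.
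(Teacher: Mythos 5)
Your proposal is correct and is precisely the argument the paper intends: the paper dispatches this proposition with the single remark that ``the proof is similar to that of proposition \ref{equib}'', and your transcription --- the orthogonal splitting $V \oplus V^{\perp}$, invariance of both subspaces under the symmetric operator $\mathbb L(\lambda_1)$, and block-diagonalisation identifying $\mathbb L(\lambda_1)|_V$ with $\mathbb I_{\text{ext}}(\lambda_1) = -\lambda_1^2\mathbb I - \mathbb J$ --- is exactly that proof with $\mathbb I^{-1}$ replaced by $-\lambda_1^2\mathbb I$ in the diagonal block. Your closing remarks on the Casimir rescaling and on holding $\lambda_1$ fixed are sound and, if anything, slightly more careful than the paper's own treatment.
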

The proof is similar to that of proposition \ref{equib}.

\begin{remark}
It is important to note that the ferromagnetic and anti-ferromagnetic equilbrium solutions given above are only found if we fix $\lambda_1$ and only one of the summed Casimirs $C_1$ or $C_2$. If instead, we want to find the equilibrium solutions on the level sets of both $C_1$ and $C_2$ while keeping the parameter $\lambda_1$ free, then one has to solve the full nonlinear equation \eqref{mu-eq}, \eqref{a-eq}, \eqref{lambda1-eq}, \eqref{lambda2-eq} which in general is difficult to solve.
\end{remark}

\subsection{Nonlinear stability analysis}

We saw that the eigenvectors of the graph Laplacian $\mathbb L(\lambda_1)$ correspond to equilibrium solutions of \eqref{LP-eq-semidirect}, similar to the momentum-coupled case. We will now show that the equilibrium configuration corresponding to the eigenvector of $\mathbb L(\lambda_1)$ with the {\em lowest} eigenvalue is nonlinearly stable. This differs slightly from the momentum coupled case where we were able to prove that both the lowest and highest eigenvalue configurations are stable.

\begin{theorem}
Fixing $\lambda^e_1$, the equilibrium configuration corresponding to the eigenvector of $\mathbb L(\lambda_1^e)$ with the lowest eigenvalue $-\lambda_2^e$ is nonlinearly stable, provided $\text{mult}(-\lambda_2^e) = 1$.
\end{theorem}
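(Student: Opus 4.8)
The plan is to follow the energy--Casimir method used in Theorem~\ref{equib-config}, but now exploiting the two free functions $\phi,\psi$ of the summed Casimirs $C_1,C_2$ to supply the curvature needed in the degenerate directions. I would start from the augmented Hamiltonian $h_{\phi,\psi}=h+\phi(C_1)+\psi(C_2)$ of Section~\ref{HT-eq-general}, whose first variation already vanishes at $(\boldsymbol\mu_e,\boldsymbol a_e)$ under the equilibrium conditions \eqref{mu-eq}--\eqref{a-eq} with $\lambda_1=\phi'(c_1)$ and $\lambda_2=2\psi'(c_2)$. The substance of the proof is the second variation, so the first step is to compute the Hessian of $h_{\phi,\psi}$ at the equilibrium as a quadratic form in $(\delta\boldsymbol\mu,\delta\boldsymbol a)$.

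Writing $\mathbb M:=\mathbb D^{-\frac12}\mathbb A\mathbb D^{-\frac12}$, the part of the Hessian not involving $\phi''$ and $\psi''$ is
\begin{align*}
Q(\delta\boldsymbol\mu,\delta\boldsymbol a)=\langle\delta\boldsymbol\mu,\overline{\mathbb I}^{-1}\delta\boldsymbol\mu\rangle+2\lambda_1\langle\delta\boldsymbol\mu,\delta\boldsymbol a\rangle+\langle\delta\boldsymbol a,(\lambda_2\mathbb 1-\mathbb M)\delta\boldsymbol a\rangle .
\end{align*}
Since $\overline{\mathbb I}^{-1}$ is positive definite, I would complete the square with $\delta\tilde{\boldsymbol\mu}:=\delta\boldsymbol\mu+\lambda_1\overline{\mathbb I}\,\delta\boldsymbol a$, which, using precisely $\mathbb L(\lambda_1)=-\lambda_1^2\overline{\mathbb I}-\mathbb M$, collapses $Q$ to
\begin{align*}
Q=\langle\delta\tilde{\boldsymbol\mu},\overline{\mathbb I}^{-1}\delta\tilde{\boldsymbol\mu}\rangle+\langle\delta\boldsymbol a,(\mathbb L(\lambda_1)+\lambda_2\mathbb 1)\delta\boldsymbol a\rangle .
\end{align*}
Now the structure is transparent: as $-\lambda_2^e$ is the lowest eigenvalue of the symmetric operator $\mathbb L(\lambda_1^e)$, the shifted operator $\mathbb L(\lambda_1^e)+\lambda_2^e\mathbb 1$ is positive semidefinite, and under $\text{mult}(-\lambda_2^e)=1$ its kernel is exactly $\text{span}(\boldsymbol a_e)$. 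Hence $Q\ge 0$, and $Q=0$ forces both $\delta\tilde{\boldsymbol\mu}=0$ and $\delta\boldsymbol a=c\,\boldsymbol a_e$, i.e. the single radial direction $(\delta\boldsymbol\mu,\delta\boldsymbol a)=c\,(\boldsymbol\mu_e,\boldsymbol a_e)$, on recalling $\boldsymbol\mu_e=-\lambda_1\overline{\mathbb I}\,\boldsymbol a_e$.

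It then remains to eliminate this one degenerate direction using the curvature of the Casimirs. Evaluating the $\phi''$ and $\psi''$ contributions on $c\,(\boldsymbol\mu_e,\boldsymbol a_e)$ and using $\langle\boldsymbol\mu_e,\boldsymbol a_e\rangle=c_1$, $\langle\boldsymbol a_e,\boldsymbol a_e\rangle=c_2$, the correction reduces to $4c^2\big(\phi''(c_1)c_1^2+\psi''(c_2)c_2^2\big)$. Since $\boldsymbol a_e$ is a nonzero eigenvector, $c_2=\langle\boldsymbol a_e,\boldsymbol a_e\rangle>0$, so choosing $\psi''(c_2)>0$ and $\phi''(c_1)\ge 0$ keeps the two rank-one corrections positive semidefinite (so they do not disturb $Q\ge 0$ off the radial line) while making the correction strictly positive along it. The full Hessian is then strictly positive definite, whence $(\boldsymbol\mu_e,\boldsymbol a_e)$ is a strict local minimum of the conserved quantity $h_{\phi,\psi}$, giving nonlinear (Lyapunov) stability.

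I expect the main obstacle to be the sign bookkeeping that explains why only the \emph{lowest} eigenvalue yields stability, in contrast to Theorem~\ref{equib-config}. Completing the square is forced to produce a positive-definite block from $\overline{\mathbb I}^{-1}$; had we taken the highest eigenvalue of $\mathbb L(\lambda_1^e)$, the $\delta\boldsymbol a$-form $\mathbb L(\lambda_1^e)+\lambda_2^e\mathbb 1$ would be negative semidefinite, so the two blocks would carry opposite signs and produce an indefinite Hessian that no rank-two Casimir correction could repair. The remaining care is to verify that the admissible choices of $\phi'',\psi''$ do not introduce indefiniteness in non-radial directions, which is why I add only positive-semidefinite corrections.
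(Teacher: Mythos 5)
Your proposal is correct and follows essentially the same route as the paper: the energy--Casimir method with the augmented Hamiltonian $h_{\phi,\psi}$, in which your completion of the square in $\delta\tilde{\boldsymbol\mu}=\delta\boldsymbol\mu+\lambda_1\overline{\mathbb I}\,\delta\boldsymbol a$ is precisely the Schur-complement reduction the paper performs, arriving at the same key positivity statement for $\mathbb L(\lambda_1^e)+\lambda_2^e\,\mathbb 1$ with its one-dimensional kernel $\mathrm{span}(\boldsymbol a_e)$ repaired by the $\psi''$ rank-one correction. The only cosmetic differences are that you add the Casimir curvature terms after establishing semidefiniteness rather than folding $\hat\lambda_2\,\boldsymbol a_e^T\boldsymbol a_e$ into the block beforehand, and that you allow $\phi''(c_1)\geq 0$ where the paper simply sets $\phi''(c_1)=0$.
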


\begin{proof}
We apply the energy-Casimir method to assess the stability of the lowest eigenvalue configuration. The Hessian $D^2 h_{\phi,\psi}(\boldsymbol \mu_e, \boldsymbol a_e)$ of the augmented Hamiltonian is a symmetric matrix given by
\begin{align*}
D^2 h_{\phi,\psi}(\boldsymbol \mu_e, \boldsymbol a_e) =
\begin{pmatrix}
    \overline {\mathbb I}^{-1}  + \hat \lambda_1 \,\boldsymbol a_e^T \boldsymbol a_e & \lambda^e_1 \,\mathbb 1 +  \hat \lambda_1 \,\boldsymbol a_e^T \boldsymbol \mu_e \\
    \lambda^e_1 \,\mathbb 1 +  \hat \lambda_1 \,\boldsymbol \mu_e^T \boldsymbol a_e & \lambda^e_2 \,\mathbb 1 - \mathbb D^{-\frac12} \mathbb A \mathbb D^{-\frac12} + \hat \lambda_2 \,\boldsymbol a_e^T \boldsymbol a_e + \hat \lambda_1 \,\boldsymbol \mu_e^T\boldsymbol \mu_e
  \end{pmatrix}\, ,
\end{align*}
where $\hat{\lambda}_1 := \phi''(c_1)$ and $\hat{\lambda}_2 := 4\psi''(c_2)$.
Setting $\hat{\lambda}_1 = 0$, this simplifies to
\begin{align*}
D^2 h_{\phi,\psi}(\boldsymbol \mu_e, \boldsymbol a_e) =
\begin{pmatrix}
    \overline {\mathbb I}^{-1} & \lambda^e_1 \,\mathbb 1 \\
    \lambda^e_1 \,\mathbb 1 & \lambda^e_2 \,\mathbb 1 - \mathbb D^{-\frac12} \mathbb A \mathbb D^{-\frac12} + \hat \lambda_2 \,\boldsymbol a_e^T \boldsymbol a_e
  \end{pmatrix}
=:
\begin{pmatrix}
X & Y \\
Y^T & Z
\end{pmatrix}\, .
\end{align*}
It is well-known that block matrices of this form are positive definite if and only if the upper left block $X$ and the Schur complement $B := Z - Y^T X^{-1} Y$ are both positive definite. Since we take $\mathbb I$ to be positive definite, it follows that $X$ is positive definite so we only need to show that $B$ is positive definite. Written in full, in terms of the graph Laplacian $\mathbb L(\lambda^e_1) = -(\lambda^e_1)^2 \,\overline {\mathbb I} - \mathbb D^{-\frac12} \mathbb A \mathbb D^{-\frac12}$, one can check that
\begin{align*}
\delta \boldsymbol a^T B \,\delta \boldsymbol a = \delta \boldsymbol a^T \left(\mathbb L(\lambda^e_1) + \lambda^e_2 \mathbb 1 \right) \delta \boldsymbol a + \hat \lambda_2 (\boldsymbol a_e \cdot \delta \boldsymbol a)^2\, .
\end{align*}
Since $\mathbb L(\lambda^e_1)$ is symmetric, we can diagonalise it so that $\mathbb L(\lambda^e_1) \rightarrow \text{diag}(\alpha_1, \ldots, \alpha_{kN})$ with $\alpha_1 \geq \ldots \geq \alpha_{kN}$. Recalling that the equilibrium solution satisfies the eigenvalue problem $\mathbb L(\lambda^e_1)\, \boldsymbol a_e = -\lambda_2^e \,\boldsymbol a_e$, we take $-\lambda^e_2 = \alpha_{kN}$ which is the lowest eigenvalue of $\mathbb L(\lambda^e_1)$ and we assume that $\text{mult}(-\lambda_2^e)=1$. We then have $\boldsymbol a_e = \sqrt{c_2}\,\widehat{\boldsymbol e}_{kN}$ and get
\begin{align*}
\delta \boldsymbol a^T B \,\delta \boldsymbol a = \sum_{i=1}^{kN} (\alpha_i - \alpha_{kN})\delta \hat{a}_i^2 + \hat{\lambda}_2 c_2 \delta \hat{a}_{kN}^2\, ,
\end{align*}
where $\delta \hat{a}_i$ for $i=1, \ldots,kN$ are the components of $\delta \boldsymbol a$ in this new-basis. From this, it is easy to see that the Schur complement $B$ is positive definite if we choose $\hat{\lambda}_2 > 0$. Hence, $D^2 h_{\phi,\psi}(\boldsymbol \mu_e, \boldsymbol a_e)$ becomes positive definite and this configuration is nonlinearly stable, by the energy-Casimir method.
\end{proof} 

\subsection{Noise and dissipation}

Following Arnaudon et al. \cite{arnaudon2016noise}, the general stochastic equations for this system is given by
\begin{align}
    \begin{split}
      \mathbb d\mu_i  &+ [\xi_i,  \mu_i] +\left [\chi(  \boldsymbol a)_i,a_i\right ]\, dt 
      + \theta\, \left [\frac{\partial C}{\partial \mu_i}, \left [  \frac{\partial C}{\partial \mu_i}, \xi_i \right ]\right]  dt
      +\, \theta\, \left [ \frac{\partial C}{\partial a_i}, \left [ \frac{\partial C}{\partial \mu}, \chi(\boldsymbol a)_i \right ] +\left [\frac{\partial C}{\partial a_i} ,\xi_i\right ]  \right ] dt \\
    &\hspace{50mm}+  \sum_l [\sigma_l,\mu_i]  \circ dW_t^{i,l} = 0\\
    \mathbb da_i &+ [\xi_i, a_i]\, dt + \theta\,\left  [\frac{\partial C}{\partial \mu_i},  \left [\frac{\partial C}{\partial \mu_i},\chi(  \boldsymbol a)_i\right ]- \left [\frac{\partial C}{\partial a_i}, \xi_i\right ]  \right ]  dt   + \sum_l [{\sigma_l},a_i]  \circ dW_t^{i,l} = 0\,,
    \end{split}
    \label{SP-SD}
\end{align}
where 
\begin{align*}
  \chi(\boldsymbol a)_i := \frac{\partial h}{\partial a_i} -\left ( \mathbb D^{-\frac12} \mathbb A \mathbb D^{-\frac12} \boldsymbol a\right )_i\, . 
\end{align*}
The dissipative terms parametrized by $\theta$ are of double bracket form, and preserve the structure of the coadjoint orbit. 
Their complicated form is due to the semi-direct product structure, see \cite{gaybalmaz2013selective, gaybalmaz2014geometric, arnaudon2016noise} for more details. 
The important difference here is in the $\chi(\boldsymbol a)$ term which contains interactions between the neighbouring spins on the network and appears in the dissipative terms. 
These terms are crucial for the existence of the Gibbs distribution \eqref{P_infty}. 

%%%%%%%%%%%%%%%%%%%%%%%%%%%%%%%%%%%%%%%%%%%%%
\section{Example II: Networks of heavy tops} \label{section-HT}

We now consider the case $G=SO(3)$, and take the coadjoint representation of $SO(3)$ on $\mathfrak{so}^*(3) \cong \mathbb R^3$. The neighbours are coupled by the orientation of a fixed vector $\boldsymbol \Gamma_0 \in \mathbb R^3$ rotated around by the $SO(3)$ action. This breaks the symmetry in our Lagrangian so we extend our configuration manifold to $SO(3) \times \mathbb R^3$ such that the extended Lagrangian is invariant under the diagonal $SO(3)$ action on $SO(3) \times \mathbb R^3$. The Lie-Poisson equation that we obtain via symmetry reduction has the Lie-Poisson structure of the semi-direct product group $SE(3) \cong SO(3) \circledS \mathbb R^3$, hence we call this system the ``heavy top network" as opposed to the rigid body network obtained via momentum coupling.

\subsection{Equations of motion}

Starting with the full Lagrangian
\begin{align}
L(\boldsymbol R, \dot{\boldsymbol R}, \mathbf \Gamma_0) = \frac12 \sum_i \langle \dot{R}_i, \mathbb I_i^{R_i} \,\dot{R}_i \rangle + \frac12 \sum_i \sum_{j \sim i} \frac{1}{\sqrt{d_id_j}} \langle R_i^{-1} \mathbf \Gamma^i_0, \mathbb J_{ij}R_j^{-1} \mathbf \Gamma^j_0 \rangle\, ,  
\label{HT-Lagrangian-full}
\end{align}
where $(R_i,\dot{R}_i) \in TSO(3), \boldsymbol \Gamma_0^i \in \mathbb R^3$ and $ \mathbb I_i^{R_i} = R_i \,\mathbb I_i R_i^{-1}$ (interpreted as matrix multiplication), we get the reduced Hamiltonian
\begin{align} \label{h-HT}
&h(\overline {\boldsymbol \Pi}, \overline {\boldsymbol \Gamma}) = \frac12 \left( \overline {\mathbf \Pi} \cdot \overline{\mathbb I}^{-1} \overline {\mathbf \Pi}  - \overline {\mathbf \Gamma}\cdot \mathbb D^{-\frac12} \mathbb A \mathbb D^{-\frac12} \,\overline {\mathbf \Gamma} \right)\,,
\end{align}
where $ \overline {\mathbf \Pi} = (\mathbf \Pi_1, \ldots, \mathbf \Pi_N)$, $ \overline {\mathbf \Gamma} = (\mathbf \Gamma_1, \ldots, \mathbf \Gamma_N)$, $\mathbf \Pi_i = R_i^{-1}\dot{R}_i$ and $\mathbf \Gamma_i = R_i^{-1} \mathbf \Gamma^i_0$.
Taking the semi-direct product Lie-Poisson structure
\begin{align}
\{f, g\}^-_{LP}(\overline {\mathbf \Pi}, \overline {\mathbf \Gamma}) = -\sum_i \left(\mathbf \Pi_i \cdot \frac{\delta f}{\delta \mathbf \Pi_i} \times  \frac{\delta g}{\delta \mathbf \Pi_i} + \mathbf \Gamma_i \cdot \left(\frac{\delta f}{\delta \mathbf \Pi_i} \times  \frac{\delta g}{\delta \mathbf \Gamma_i} - \frac{\delta g}{\delta \mathbf \Pi_i} \times  \frac{\delta f}{\delta \mathbf \Gamma_i}\right)\right)\, ,
\end{align}
we get the Lie-Poisson equation
\begin{align}
  \begin{split}
\dot{\mathbf \Pi}_i &= \mathbf \Pi_i\times \mathbf \Omega_i  - \mathbf \Gamma_i \times \left( \mathbb D^{-\frac12} \mathbb A \mathbb D^{-\frac12} \,\overline {\mathbf \Gamma} \right)_i\\ 
\dot{\mathbf \Gamma}_i &= \mathbf \Gamma_i\times \mathbf \Omega_i \label{HT-eq}\, , 
  \end{split}
\end{align}
for $i=1, \ldots, N$, where $\mathbf \Omega_i := \mathbb I_i^{-1} \mathbf \Pi_i$ is the angular velocity at node $i$.
One can check that the Casimirs for this bracket are given by
\begin{align}
 C_{i,1} = \mathbf \Pi_i \cdot \mathbf \Gamma_i\, ,\quad  C_{i,2} = \|\mathbf \Gamma_i \|^2\, , 
\end{align}
so in particular, the sums
\begin{align}
  C_1 = \sum_i C_{i,1} = \sum_i \mathbf \Pi_i \cdot \mathbf \Gamma_i\, , \quad    C_2 = \sum_i C_{i,2} = \sum_i  \| \mathbf \Gamma_i\|^2\, , 
\end{align}
are conserved by the dynamics. Now, the coadjoint orbits of the heavy top network are given as follows.

\begin{theorem}
The coadjoint orbits $\mathcal O = \mathcal O_1 \times \cdots \times \mathcal O_N$ of the heavy top network are given by,
\begin{align}
\mathcal O_i = \left\{(\mathbf \Pi_i, \mathbf \Gamma_i) \in \mathbb R^{3}\times \mathbb R^{3} : C_{i,1} = c_{i,1}, \, C_{i,2} = c_{i,2} \right\} \cong TS^2\, ,
\end{align}
if $c_{i,2} \neq 0$, which is a four-dimensional submanifold and
\begin{align}
\mathcal O_i = \left\{(\mathbf \Pi_i, \boldsymbol 0) \in \mathbb R^{3}\times \mathbb R^{3} : \|\mathbf \Pi_i\|^2 = const \right\} \cong S^2\, ,
\end{align}
if $c_{i,2} = 0$, which is a two-dimensional submanifold, unless $\|\mathbf \Pi_i\|^2 = 0$.
\end{theorem}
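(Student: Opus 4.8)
\emph{Proof proposal.} The plan is to reduce to a single node: since the total coadjoint orbit is the product $\mathcal O = \mathcal O_1 \times \cdots \times \mathcal O_N$ and the diagonal action of $\left(SE(3)\right)^{\times N}$ acts independently on each factor, it suffices to describe the coadjoint orbits of one copy of $SE(3) \cong SO(3) \circledS \mathbb R^3$ on $\mathfrak{se}(3)^* \cong \mathbb R^3 \times \mathbb R^3$. First I would write the coadjoint action explicitly in the $(\mathbf \Pi, \mathbf \Gamma)$ coordinates, namely (up to the standard left/right convention)
\[
\mathrm{Ad}^*_{(A, \mathbf b)}(\mathbf \Pi, \mathbf \Gamma) = \left(A\mathbf \Pi + \mathbf b \times A\mathbf \Gamma,\ A\mathbf \Gamma\right), \qquad (A, \mathbf b) \in SO(3) \circledS \mathbb R^3,
\]
and verify by direct computation that $C_1 = \mathbf \Pi \cdot \mathbf \Gamma$ and $C_2 = \|\mathbf \Gamma\|^2$ are invariant, re-confirming the Casimir property noted above and showing these functions are constant along every orbit. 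The content of the theorem is then that these two invariants form a \emph{complete} set, i.e. each nonempty common level set is a single orbit of the diffeomorphism type claimed.

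For the generic case $c_{i,2} \neq 0$, the key step is transitivity of the coadjoint action on the level set $\{(\mathbf \Pi, \mathbf \Gamma) : \mathbf \Pi \cdot \mathbf \Gamma = c_{i,1},\ \|\mathbf \Gamma\|^2 = c_{i,2}\}$. Given two such points $(\mathbf \Pi, \mathbf \Gamma)$ and $(\mathbf \Pi', \mathbf \Gamma')$, I would first pick $A \in SO(3)$ with $A\mathbf \Gamma = \mathbf \Gamma'$ (possible since the norms agree); the Casimir constraint then forces $A\mathbf \Pi$ and $\mathbf \Pi'$ to share the same component along $\mathbf \Gamma'$, so $\mathbf \Pi' - A\mathbf \Pi \perp \mathbf \Gamma'$. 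Since $\mathbf \Gamma' \neq 0$, the map $\mathbf b \mapsto \mathbf b \times \mathbf \Gamma'$ surjects onto $(\mathbf \Gamma')^\perp$, so a translation $\mathbf b$ can absorb the remaining perpendicular difference, proving transitivity. To identify the diffeomorphism type I would exhibit
\[
(\mathbf \Pi, \mathbf \Gamma) \longmapsto \left(\mathbf \Gamma,\ \mathbf \Pi - \tfrac{c_{i,1}}{c_{i,2}}\mathbf \Gamma\right),
\]
which sends the orbit to the sphere of radius $\sqrt{c_{i,2}}$ together with, over each $\mathbf \Gamma$, the vector $\mathbf \Pi - \tfrac{c_{i,1}}{c_{i,2}}\mathbf \Gamma \in \mathbf \Gamma^\perp = T_{\mathbf \Gamma}S^2$; this is a diffeomorphism onto $TS^2$ and yields the dimension count $2 + 2 = 4$.

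The degenerate case $c_{i,2} = 0$ is easier: then $\mathbf \Gamma = 0$, which automatically forces $C_1 = \mathbf \Pi \cdot \mathbf \Gamma = 0$, and the coadjoint action collapses to $(\mathbf \Pi, 0) \mapsto (A\mathbf \Pi, 0)$. Hence the orbit is the sphere $\{\|\mathbf \Pi\| = \mathrm{const}\} \cong S^2$ whenever $\|\mathbf \Pi\| \neq 0$, and the single fixed point $(0,0)$ otherwise. The main obstacle I anticipate is the clean identification with $TS^2$ rather than merely a four-dimensional manifold: one must check that the affine fiber $\{\mathbf \Pi : \mathbf \Pi \cdot \mathbf \Gamma = c_{i,1}\}$ is globally and smoothly identified with the tangent plane after the shift by $\tfrac{c_{i,1}}{c_{i,2}}\mathbf \Gamma$, and that the resulting map is a global diffeomorphism (equivalently, that the orbit is the symplectic manifold $T^*S^2$, identified with $TS^2$ via the Euclidean metric); the rest is routine linear algebra in $\mathbb R^3$.
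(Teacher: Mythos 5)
Your proof is correct, but it takes a genuinely different (and far more self-contained) route than the paper, which disposes of this theorem in one line by citing Theorem 1.2 of Ratiu (1981) for the single heavy top and declaring the multi-body case an easy extension --- the extension being exactly your first observation, that the product group $\left(SO(3)\,\circledS\,\mathbb R^3\right)^{\times N}$ acts componentwise, so the total orbit is the product of single-node orbits. You instead supply the classical argument in full: the explicit coadjoint action $\mathrm{Ad}^*_{(A,\mathbf b)}(\mathbf \Pi, \mathbf \Gamma) = \left(A\mathbf \Pi + \mathbf b \times A\mathbf \Gamma,\, A\mathbf \Gamma\right)$, invariance of $C_1, C_2$ (so orbits lie in level sets), transitivity on each level set with $c_{i,2} \neq 0$ --- where your two-step argument is sound: a rotation matches $\mathbf \Gamma$ to $\mathbf \Gamma'$, the Casimir constraint forces $\mathbf \Pi' - A\mathbf \Pi \perp \mathbf \Gamma'$, and the surjectivity of $\mathbf b \mapsto \mathbf b \times \mathbf \Gamma'$ onto $(\mathbf \Gamma')^\perp$ absorbs the difference --- and finally the shift $(\mathbf \Pi, \mathbf \Gamma) \mapsto \bigl(\mathbf \Gamma,\, \mathbf \Pi - \tfrac{c_{i,1}}{c_{i,2}}\mathbf \Gamma\bigr)$, which is a global diffeomorphism onto the tangent bundle of the radius-$\sqrt{c_{i,2}}$ sphere since its inverse $(\mathbf \Gamma, \mathbf v) \mapsto \bigl(\mathbf v + \tfrac{c_{i,1}}{c_{i,2}}\mathbf \Gamma, \mathbf \Gamma\bigr)$ is manifestly smooth; the degenerate case, with $\mathbf \Gamma = 0$ forcing $c_{i,1} = 0$ and the action collapsing to rotations of $\mathbf \Pi$, is also handled correctly, including the fixed point at the origin. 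What your route buys is a verifiable, citation-free proof with an explicit trivialization of the orbit; what the paper's route buys is brevity plus the extra structure recorded in Ratiu's theorem (notably the symplectic identification with $T^*S^2$, which you only gesture at via the Euclidean metric). Two cosmetic cautions: pin your $\mathrm{Ad}^*$ formula to the paper's left-action, $(-)$ Lie--Poisson convention when writing it out, as you yourself flag; and note that ``diagonal action'' in the paper's usage means precisely the componentwise action of the product group that your reduction to a single node relies on.
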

\begin{proof}
We refer to theorem {\color{red} 1.2} in \cite{ratiu1981euler} for the proof in the single body case. The general multi-body case considered here is an easy extension.
\end{proof}

\subsection{Noise and dissipation}

From the general equation \eqref{SP-SD}, we use $C_{i,1} = \boldsymbol \Pi_i \cdot \boldsymbol \Gamma_i$ as the Casimir for the double bracket dissipation to obtain the stochastic equation 
\begin{align}
    \begin{split}
      \mathbb d\boldsymbol \Pi_i &+\left(\boldsymbol \Omega_i \, dt + \sum_l \boldsymbol{\sigma}_l\circ dW^l_t \right) \times\boldsymbol \Pi_i - \left(\boldsymbol \Gamma_i \times \boldsymbol \chi(\boldsymbol \Gamma)_i \right) dt \\
      &+\theta\, \boldsymbol\Gamma_i\times(\boldsymbol\Omega_i\times\boldsymbol\Gamma_i)\, dt +\theta\,  \left [ \boldsymbol \Pi_i\times ( \boldsymbol\chi(\boldsymbol \Gamma)_i\times \boldsymbol \Gamma_i) - \boldsymbol\Pi_i\times (\boldsymbol\Pi_i\times \boldsymbol\Omega_i)\right]\, dt= 0 \\
      \mathbb d\boldsymbol \Gamma_i &+ \left(\boldsymbol \Omega_i \, dt + \sum_l\boldsymbol{\sigma}_l\circ dW^l_t \right) \times\boldsymbol \Gamma_i + \theta\, \left [ \boldsymbol\Gamma_i\times ( \boldsymbol \chi(\boldsymbol \Gamma)_i\times \boldsymbol\Gamma_i) - \boldsymbol \Gamma_i\times(\boldsymbol\Pi_i\times \boldsymbol\Omega_i)\right ]\, dt = 0 \,.
    \end{split}
\end{align}
where
\begin{align}
  \boldsymbol \chi(\boldsymbol \Gamma) = \chi+ \sum_{j\sim i}\frac{1}{\sqrt{d_i d_j}}\mathbb J_{ij} \boldsymbol \Gamma_j\, . 
\end{align}
Notice that the other Casimir $C_{i,2}$ is also preserved by this dissipation, thus we did not include it here.
However, it is possible that including the Casimir $C_{i,2}$ would change the behaviour of the system, but we leave this investigation for future work. 

\subsection{Equilibrium solutions}

We now study the equilibrium solutions of the heavy top network and its corresponding stability properties. The classification of equilibrium solutions into ferromagnetic and anti-ferromagnetic states and its nonlinear stability property are summarised in the following proposition.

\begin{proposition} \label{HT-equib}
The relative equilibrium solutions $\overline {\mathbf \Gamma}_e$ of a heavy top network correspond to the eigenvectors of $\mathbb L(\lambda_1^e)$ with
$\overline {\mathbf \Pi}_e = -\lambda_1^e \,\overline {\mathbb I} \,\overline {\mathbf \Gamma}_e$ and the equilibrium solution corresponding to the lowest eigenvalue $-\lambda_2^e$ of $\mathbb L(\lambda_1^e)$ is nonlinearly stable. Furthermore, if $\mathbb J_{ij} = \mathbb J$ and $\mathbb I_i = \mathbb I$, then there exists $3N$ linearly independent equilibrium configurations such that
\begin{enumerate}
\item three are ferromagnetic, i.e. $\mathbf \Gamma_i^e = \sqrt{d_i} \mathbf \Gamma^e$ for $i=1, \ldots, N$, where $\mathbf \Gamma^e$ is an eigenvector of $\mathbb I_{\text{ext}} := -(\lambda_1^e)^2 \,\mathbb I^{-1} - \mathbb J$ and
\item the remaining $3N-3$ are anti-ferromagnetic, i.e. $\sum_{i=1}^N \sqrt{d_i} \mathbf \Gamma_i^e = 0$.
\end{enumerate}
\end{proposition}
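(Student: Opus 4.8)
The plan is to recognise this proposition as nothing more than the $G=SO(3)$ specialisation of the general position-coupling results of Section~\ref{SD-section}, so that essentially no new computation is needed. First I would record the structural facts that put us inside that framework: $\mathfrak{so}(3)$ is compact and semi-simple with $k=\dim\mathfrak{so}(3)=3$, and under the hat isomorphism $\widehat{\,\cdot\,}:(\mathbb R^3,\times)\to\mathfrak{so}(3)$ the coadjoint action, the $\diamond$-map and the Killing pairing all collapse to the cross product and the Euclidean inner product on $\mathbb R^3$. This identifies the standard $SO(3)$-representation on $\mathbb R^3$ used in \eqref{HT-Lagrangian-full} with the coadjoint representation on $V^*=\mathfrak g^*$, so the semi-direct product bracket \eqref{LP-semidirect-2}, the equilibrium conditions \eqref{mu-eq}--\eqref{a-eq}, and the reduced eigenvalue problem \eqref{L_l1l2} all transfer verbatim to \eqref{HT-eq}.

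Next I would deduce the classification from the equilibrium analysis of Section~\ref{SD-section}. Fixing $\lambda_1^e$ together with one of the summed Casimirs, the first variation of the augmented Hamiltonian forces $\overline{\mathbf \Pi}_e=-\lambda_1^e\,\overline{\mathbb I}\,\overline{\mathbf \Gamma}_e$ via \eqref{mu-eq} and reduces \eqref{a-eq} to the eigenvalue problem $\mathbb L(\lambda_1^e)\,\overline{\mathbf \Gamma}_e=-\lambda_2^e\,\overline{\mathbf \Gamma}_e$ of \eqref{L_l1l2}. To split these eigenvectors I would reuse the orthogonal decomposition $(\mathfrak g^*)^{\oplus N}=V\oplus V^{\perp}$ from the proof of Proposition~\ref{equib}, with $V=\{\overline{\mathbf \Gamma}:\mathbf \Gamma_i=\sqrt{d_i}\,\mathbf \Gamma\}$ and $V^{\perp}=\{\overline{\mathbf \Gamma}:\sum_i\sqrt{d_i}\,\mathbf \Gamma_i=0\}$. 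The degree identities $\sum_j A_{ij}=d_i$ show both subspaces are $\mathbb L(\lambda_1^e)$-invariant, that $\mathbb L(\lambda_1^e)$ restricts on $V$ to the extended inertia operator $\mathbb I_{\text{ext}}(\lambda_1^e)$, and that $\dim V=k=3$ while $\dim V^{\perp}=(N-1)k=3N-3$; this produces exactly three ferromagnetic and $3N-3$ anti-ferromagnetic equilibria.

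For the stability assertion I would run the energy-Casimir argument of the general position-coupling stability theorem unchanged. Setting $\hat\lambda_1=\phi''(c_1)=0$, the Hessian $D^2h_{\phi,\psi}(\boldsymbol \mu_e,\boldsymbol a_e)$ is a block matrix $\left(\begin{smallmatrix} X & Y \\ Y^T & Z\end{smallmatrix}\right)$ whose upper-left block $X=\overline{\mathbb I}^{-1}$ is positive definite, so positivity reduces to that of the Schur complement $B=Z-Y^TX^{-1}Y$. In the eigenbasis of $\mathbb L(\lambda_1^e)$ this equals $\sum_i(\alpha_i-\alpha_{kN})\,\delta\hat a_i^2+\hat\lambda_2\,c_2\,\delta\hat a_{kN}^2$; choosing $\overline{\mathbf \Gamma}_e$ to be the lowest-eigenvalue eigenvector (so $-\lambda_2^e=\alpha_{kN}$) with $\text{mult}(-\lambda_2^e)=1$ and taking $\hat\lambda_2>0$ makes $B$, and hence the full Hessian, positive definite, yielding nonlinear stability by the energy-Casimir criterion.

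The main obstacle here is not any single estimate but the bookkeeping that legitimises transporting the abstract results to this concrete system. I must verify carefully that the standard $SO(3)$-action on $\mathbb R^3$ appearing in \eqref{HT-Lagrangian-full} genuinely coincides with the coadjoint representation underlying \eqref{LP-semidirect-2}, so that the diamond term $\tfrac{\delta l_i}{\delta a_i}\diamond a_i$ becomes precisely the cross-product coupling $\mathbf \Gamma_i\times(\cdots)$ in \eqref{HT-eq}, and that the Casimirs $C_{i,1}=\mathbf \Pi_i\cdot\mathbf \Gamma_i$, $C_{i,2}=\|\mathbf \Gamma_i\|^2$ match the two summed Casimirs used in the variational characterisation. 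Once this identification and the degree-counting identities are secured, the ferromagnetic/anti-ferromagnetic split and the Schur-complement stability argument carry over with only the substitution $k=3$, and the proposition follows.
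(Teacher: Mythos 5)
Your proposal is correct and follows essentially the same route as the paper, whose (implicit) proof of Proposition \ref{HT-equib} is exactly the specialisation you describe: the general position-coupling results of Section \ref{SD-section} — the $V\oplus V^{\perp}$ decomposition argument inherited from Proposition \ref{equib} and the Schur-complement energy-Casimir theorem — applied to $G=SO(3)$ with the coadjoint representation on $\mathbb R^3\cong\mathfrak{so}^*(3)$, under which $\diamond$ and $\mathrm{ad}^*$ reduce to the cross product as you verify. One minor remark: your restriction of $\mathbb L(\lambda_1^e)$ to $V$ correctly yields $-(\lambda_1^e)^2\,\mathbb I-\mathbb J$, consistent with the general section, so the $\mathbb I^{-1}$ appearing in the statement of Proposition \ref{HT-equib} is a typo in the paper rather than a defect of your argument.
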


In order to investigate the stability of the other equilibrium configurations, we move to linear stability analysis.

\subsubsection{Linear stability}

Linearising equations \eqref{HT-eq} by taking $\overline {\mathbf \Pi}(t) = \overline {\mathbf \Pi}_e + \epsilon \,\delta \overline {\mathbf \Pi}(t)$ and $\overline{\mathbf \Gamma}(t) = \overline {\mathbf \Gamma}_e + \epsilon \,\delta \overline{\mathbf \Gamma}(t)$ and dropping terms of $O(\epsilon^2)$, we get in term of $\boldsymbol \Gamma_e$ only, 
\begin{align} 
\frac{d}{dt}
\begin{pmatrix}
\delta \overline {\mathbf \Pi} \\
\delta \overline {\mathbf \Gamma}
\end{pmatrix}
=
\begin{pmatrix}
  \lambda_1 \widehat {\mathbf \Gamma}^e -\lambda_1 \mathbb I \widehat{\mathbf \Gamma}^e \,\overline {\mathbb I}^{-1} &  (-\lambda_1^2 \,\overline{\mathbb I} +\lambda_2 \mathbb 1) \widehat{\mathbf \Gamma}^e -\widehat{\mathbf \Gamma}^e  \,\mathbb D^{-\frac12}\mathbb A \mathbb D^{-\frac12} \\
\widehat{\mathbf \Gamma}^e \,\overline {\mathbb I}^{-1}  & \lambda_1 \, \widehat{\mathbf \Gamma}^e
\end{pmatrix}
\begin{pmatrix}
\delta \overline{\mathbf \Pi} \\
\delta \overline{\mathbf \Gamma}
\end{pmatrix}\, , 
\label{lin-stab-HT}
\end{align}
where $\widehat{\mathbf \Pi}^e := \text{diag}(\widehat{\mathbf \Pi}^e_1, \ldots, \widehat{\mathbf \Pi}^e_N)$, $\widehat{\mathbf \Gamma}^e := \text{diag}(\widehat{\mathbf \Gamma}^e_1, \ldots, \widehat{\mathbf \Gamma}^e_N)$ and $\,\widehat{} : (\mathbb R^{3},\times) \rightarrow \mathfrak{so}(3)$ is the standard Lie algebra isomorphism that sends a vector in $\mathbb R^3$ to a skew symmetric matrix in $\mathfrak{so}(3)$. Hence, we can assess the linear stability of an equilibrium solution $(\overline {\mathbf \Pi}_e, \overline{\mathbf \Gamma}_e)$ by looking at the eigenvalues of the matrix on the right hand side of \eqref{lin-stab-HT}.

\begin{figure}[htpb]
  \centering
  \subfigure[$\mathbb I=\mathrm{diag}(1,2,3)$, $\mathbb J=\mathrm{diag}(1,1,1)$]{\includegraphics[scale=0.55]{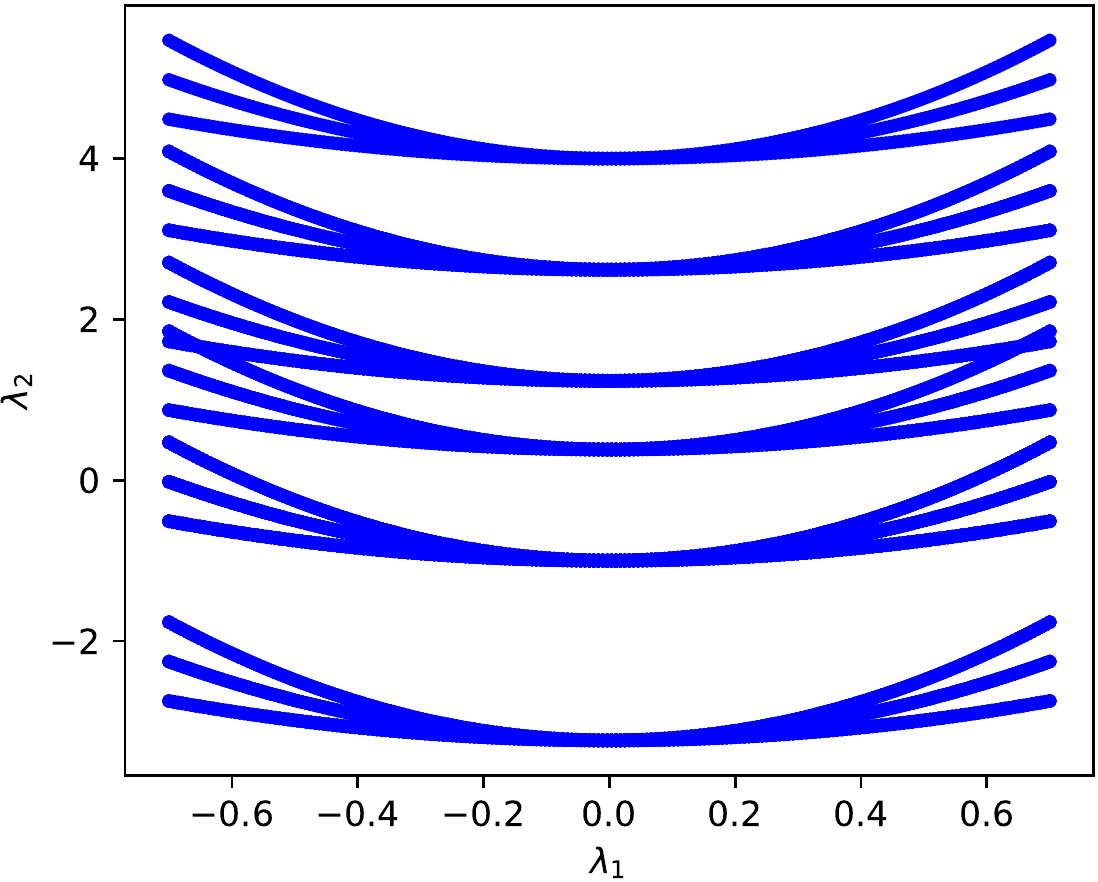}\label{fig:l1l2-HT1}}
\hspace{10pt}
  \subfigure[$\mathbb I=\mathrm{diag}(1,1,1)$, $\mathbb J=\mathrm{diag}(1,2,3)$]{\includegraphics[scale=0.55]{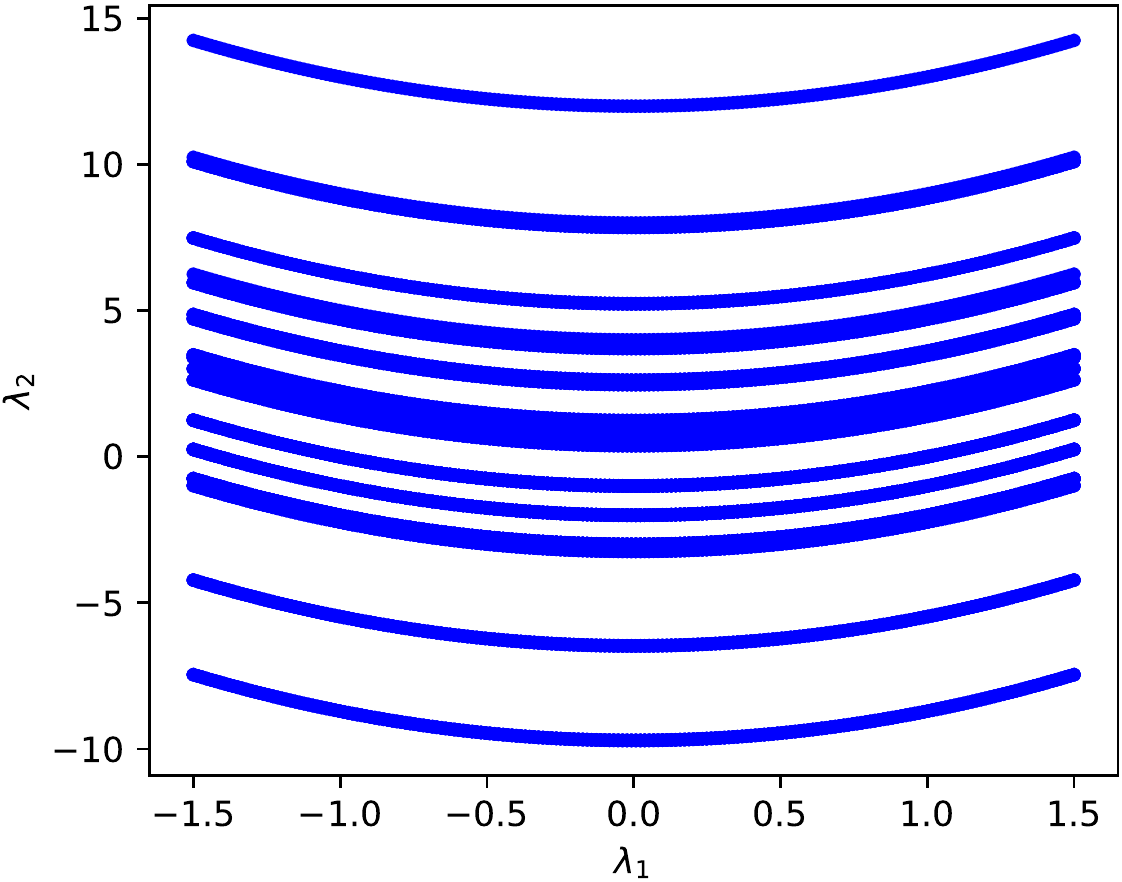}\label{fig:l1l2-HT2}}
  \caption{This figure shows the possible $(\lambda_1, \lambda_2)$-pairs that one can have from solving equation \eqref{L_l1l2}. Fixing $\lambda_1$, the possible values of $\lambda_2$ that solve \eqref{L_l1l2} are the eigenvalues of the graph Laplacian $\mathbb L(\lambda_1)$.
We see that the eigenvalues $\lambda_2$  in the left panel \ref{fig:l1l2-HT1} collapse as $\lambda_1 \rightarrow 0$, which does not occur in the right panel \ref{fig:l1l2-HT2}. }
  \label{fig:l1l2-HT}
\end{figure}

Determining the linear stability of all the equilibrium states of this system is impossible to do analytically, so we will only discuss the numerical results here. 
First, we solve \eqref{L_l1l2} to find all the equilibria of the lattice given a $\lambda_1$, and in figure \ref{fig:l1l2-HT}, we plotted the values of all possible $\lambda_2$ (i.e. minus the eigenvalues of $\mathbb L(\lambda_1)$) as a function of $\lambda_1$. 
In the case $\mathbb I= \mathrm{diag}(1,2,3)$ and $\mathbb J=\mathrm{diag}(1,1,1)$ (figure \ref{fig:l1l2-HT1}), we see that as $\lambda_1\to 0$, all the solutions become degenerate with multiplicity $3$ (i.e. the eigenvalues $\lambda_2$ collapse) but this does not happen in the other case $\mathbb I= \mathrm{diag}(1,1,1)$ and $\mathbb J=\mathrm{diag}(1,2,3)$ (figure \ref{fig:l1l2-HT2}).
%corresponding to solutions along each principle axes of the body. 
This can be explained by the fact that at $\lambda_1=0$, which corresponds to the zero momentum case $\bar{\mathbf \Pi} = 0$, the system is isotropic in case (a) and anisotropic in case (b).

%In this region, the top three values of $\lambda_2$ correspond to the three ferromagnetic equilibria given in proposition \ref{HT-equib} where the lowest energy configuration is, as we know from proposition \ref{HT-equib}, nonlinearly (thus linearly) stable and the other two are unstable, but become closer to being linearly stable as $\lambda_1 \rightarrow 0$. 

\begin{figure}[htpb]
  \centering
  \subfigure[$\mathbb I=\mathrm{diag}(1,2,3)$, $\mathbb J=\mathrm{diag}(1,1,1)$]{\includegraphics[scale=0.63]{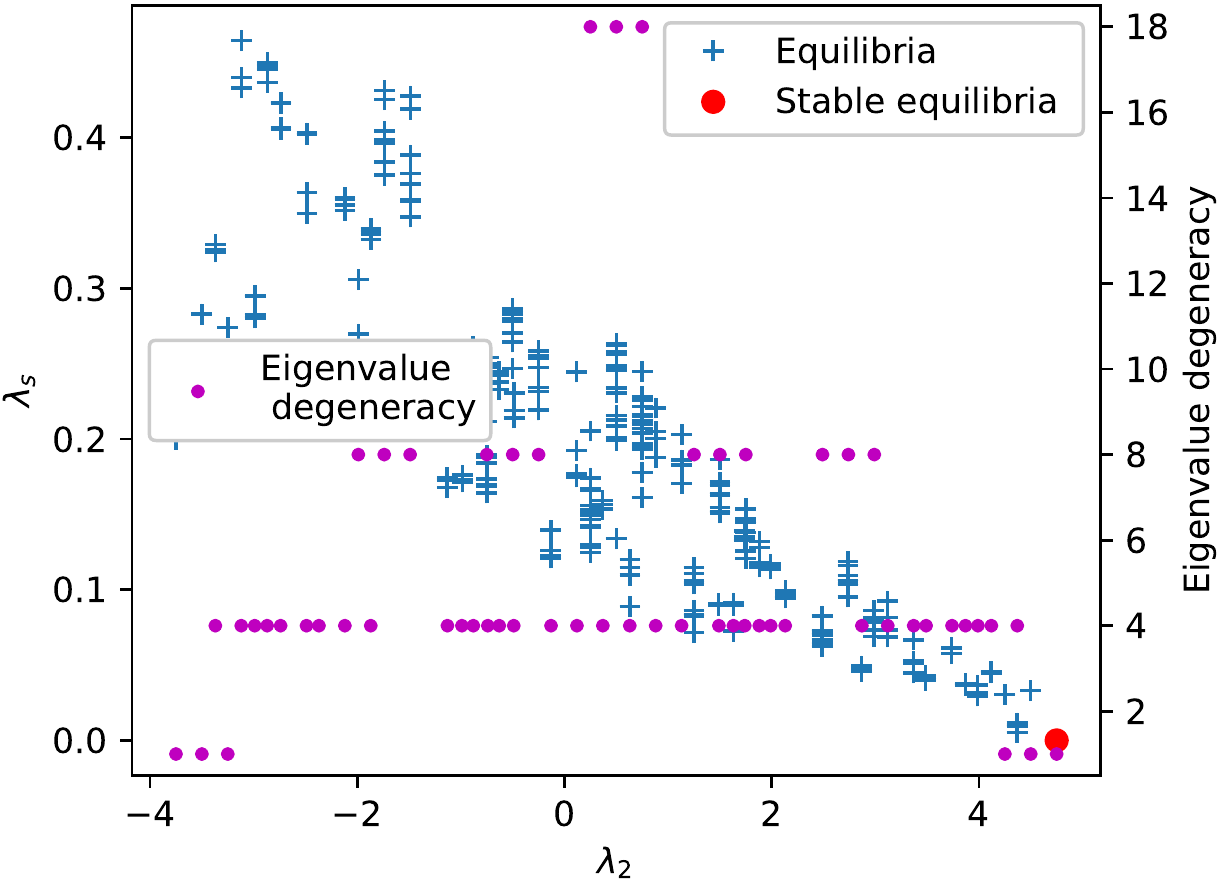}\label{stability_HT1}}
  \subfigure[$\mathbb I=\mathrm{diag}(1,1,1)$, $\mathbb J=\mathrm{diag}(1,2,3)$]{\includegraphics[scale=0.63]{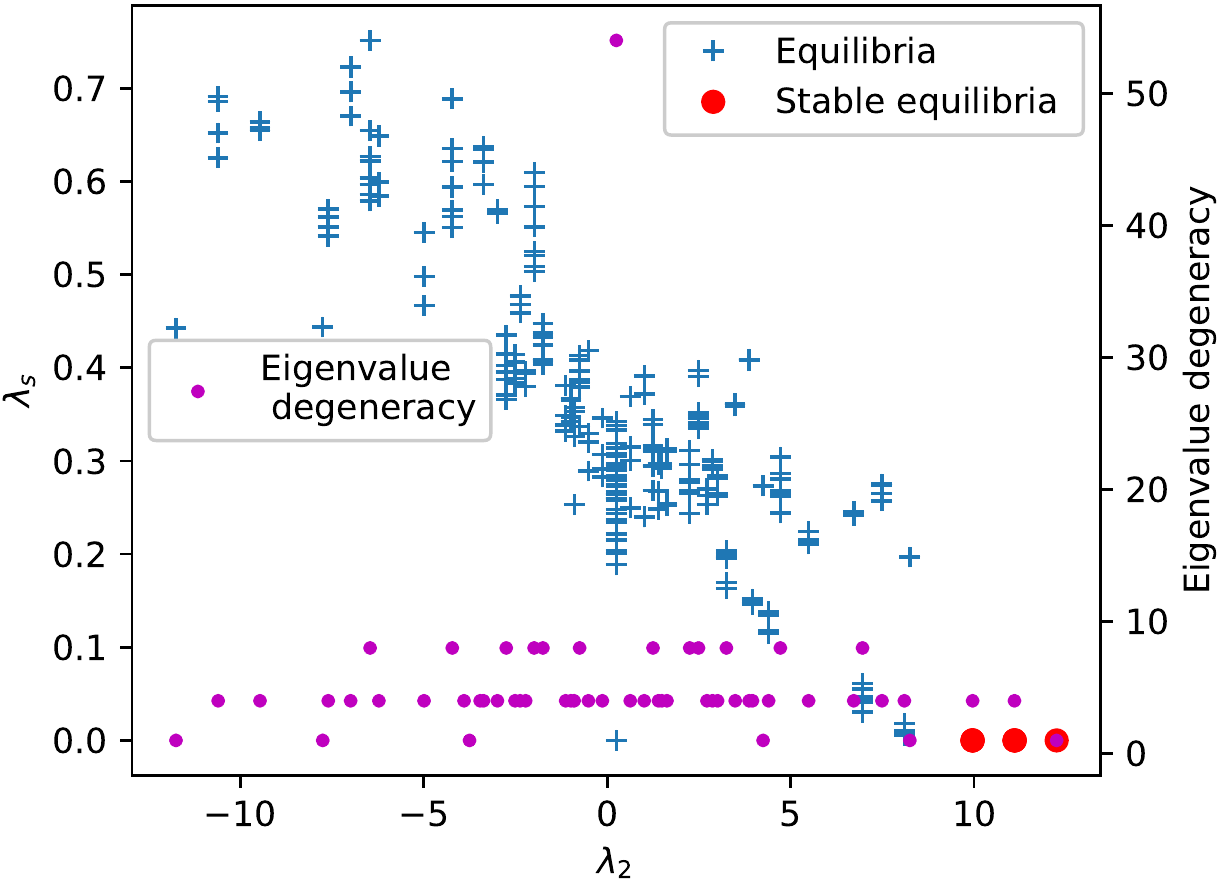}\label{stability_HT2}}
  \caption{The two panels in this figure display the maximum, real part of the eigenvalues $\lambda_S$ of the linearised system \eqref{lin-stab-HT} around the equilibrium configuration $\boldsymbol \Gamma_e$ (eigenvectors of $\mathbb L(\lambda_1)$) plotted against $\lambda_2$ (minus the eigenvalue of $\mathbb L(\lambda_1)$ corresponding to $\boldsymbol \Gamma_e$) for $\lambda_1 = 0.5$.
Again, the blue crosses correspond to unstable equilibria, the red dots correspond to linearly stable equilibria and the purple dots correspond to the multiplicity of $\lambda_2$, whose values are to be read from the right axis. As expected, the configuration corresponding to the highest $\lambda_2$ (or, the lowest eigenvalue of $\mathbb L(\lambda_1)$) is stable in both cases.}
  \label{fig:stability-HT}
\end{figure}

In figure \ref{fig:stability-HT}, we display the largest, real part of the eigenvalues $\lambda_s$ of the linearised system \eqref{lin-stab-HT} plotted against $\lambda_2$ (minus the eigenvalue of $\mathbb L(\lambda_1)$) corresponding to equilibrium solutions $\mathbf \Gamma_e$ of \eqref{lin-stab-HT} for $\lambda_1=0.5$. As expected from proposition \ref{HT-equib}, the lowest eigenvalue state (or, the states with highest $\lambda_2$) in both cases \ref{stability_HT1} and \ref{stability_HT2} are linearly stable. This corresponds to a ferromagnetic equilibria along the $\Gamma_3$-axis in both cases. For $\mathbb I=\mathrm{diag}(1,2,3)$ and $\mathbb J=\mathrm{diag}(1,1,1)$ (figure \ref{stability_HT1}), 
the two states near this configuration with multiplicity 1 ($\lambda_2 \approx 4.2$ and $\lambda_2 \approx 4.5$) correspond to the other two ferromagnetic equilibria given in proposition \ref{HT-equib}, which we call states A and B and these are seen to be unstable. There also exists a subspace of anti-ferromagnetic equilibria between these two states ($\lambda_2 \approx 4.35$) with multiplicity $4$ that have a very small but positive $\lambda_s$. We call this state $C$. As we took $\lambda_1 \rightarrow 0$, we saw that state $C$ becomes linearly stable for some small $\lambda_1$, and states $A$ and $B$, while they were always found to be unstable, their corresponding values of $\lambda_s$ tended to $0$ (i.e. approaching a linearly stable state).
%In the case $\mathbb I=\mathrm{diag}(1,2,3)$ and $\mathbb J=\mathrm{diag}(1,1,1)$, we observe two states with high non-degenerate eigenvalue $\lambda_2$ slightly smaller and larger than a degenerate state. They all have a small value of $\lambda_s$, and the degenerate eigenvalues can even become linearly stables if $\lambda_1$ is taken smaller.  The two non-degenerate ones corresponds to the two other aligned states, which become linearly stable only if $\lambda_1=0$. 
In the other case $\mathbb I=\mathrm{diag}(1,1,1)$ and $\mathbb J=\mathrm{diag}(1,2,3)$ (figure \ref{stability_HT2}), the linearly stable equilibria excluding the nonlinearly stable state were found to be anti-ferromagnetic states with the `argyle' pattern similar to that in figure \ref{fig:RB-sols}.

\subsection{Numerical experiments}\label{HT-numerics}

Before looking at phase transitions, we will present here an interesting phenomenon that arises in the heavy top lattice when $\mathbb I= \mathrm{diag}(1,2,3)$ and $\mathbb J = \mathrm{diag}(1,1,1)$. 
In figure \ref{fig:HT-stab-diss}, we show several simulations of the deterministic $20\times 20$ heavy top lattice starting from different initial conditions, with or without the double bracket dissipation.
The initial conditions are taken to be nearly ferromagnetic, with spins $\mathbf \Gamma_i$ aligned to a fixed direction, except for a small perturbation at two nodes. 
This way, we can numerically assess the stability of the ferromagnetic equilibria. 
%First, note that if the initial positions are equal at each site, then the lattice reduces to single body dynamics that is independent of the initial position. 
From proposition \ref{HT-equib}, we know that position $(0,0,1)$ (i.e. the $\Gamma_3$-axis) is nonlinearly stable, and we also observed this in our simulations. Hence, we will not display the corresponding plots here as it is not very interesting.  
Instead, we will display the simulations starting close to the two other ferromagnetic equilibria $(0,1,0)$  in figure \ref{g_010} and $(1,0,0)$ in figure \ref{g_100}, which were seen to be unstable with or without dissipation. 

There are two interesting behaviors of the system observed from the plots. 
The first, which is displayed in panel \ref{g_100}, is that the solution starting close to the shortest axis $(1,0,0)$ with double bracket dissipation is stuck in this position for a while,
then gets stuck in the middle axis position $(0,1,0)$ briefly, before relaxing to the lowest energy state $(0,0,1)$. This can also be seen in panel \ref{g_010} where the solution starting close to the middle axis $(0,1,0)$ is stuck there for a while before relaxing to the lowest energy state $(0,0,1)$. 
Although the double bracket dissipation can never stabilize an unstable equilibrium solution, as proven in Bloch et al. \cite{bloch1994dissipation}, this indicates some kind of transient stability, or metastability of the shortest and middle axis equilibrium, which will be observed again in the phase transition plots in section \ref{PT-section}.
We note that this phenomenon is not apparent in the absence of dissipation and is only observed clearly in the presence of dissipation.
Apart from these special cases, for instance when we start from $(1,1,1)$, which is not an equilibrium, the dissipation will always drive the system towards the lowest energy position without getting stuck (see figure \ref{g_111}). 

\begin{figure}[htpb]
  \centering
  \subfigure[Position, $ \Gamma(0)\approx  (0,1,0)$ ] {\includegraphics[scale=0.45]{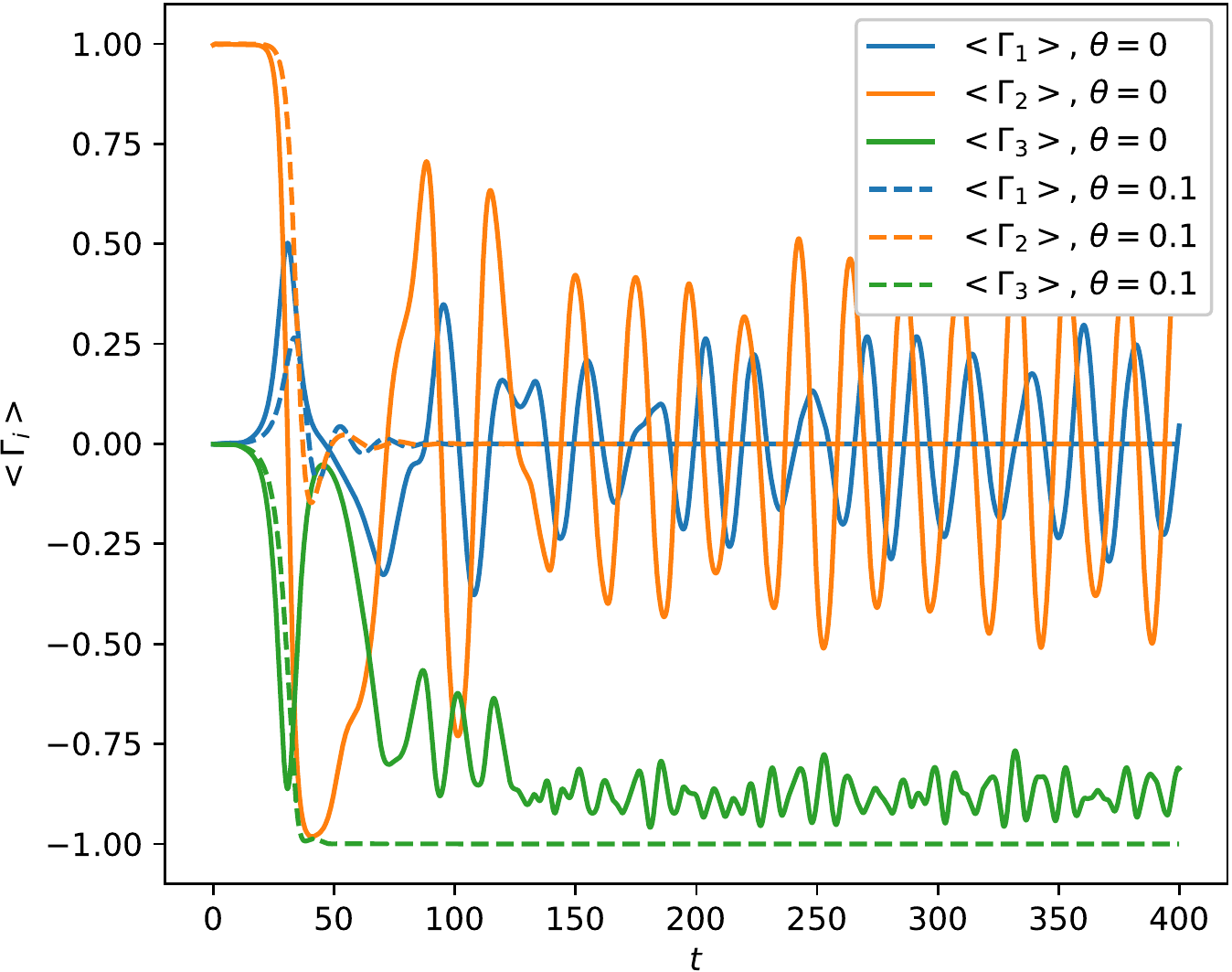}\label{g_010}}
\hspace{10pt}
  \subfigure[Energy, $\Gamma(0)\approx (0,1,0)$ ]{\includegraphics[scale=0.45]{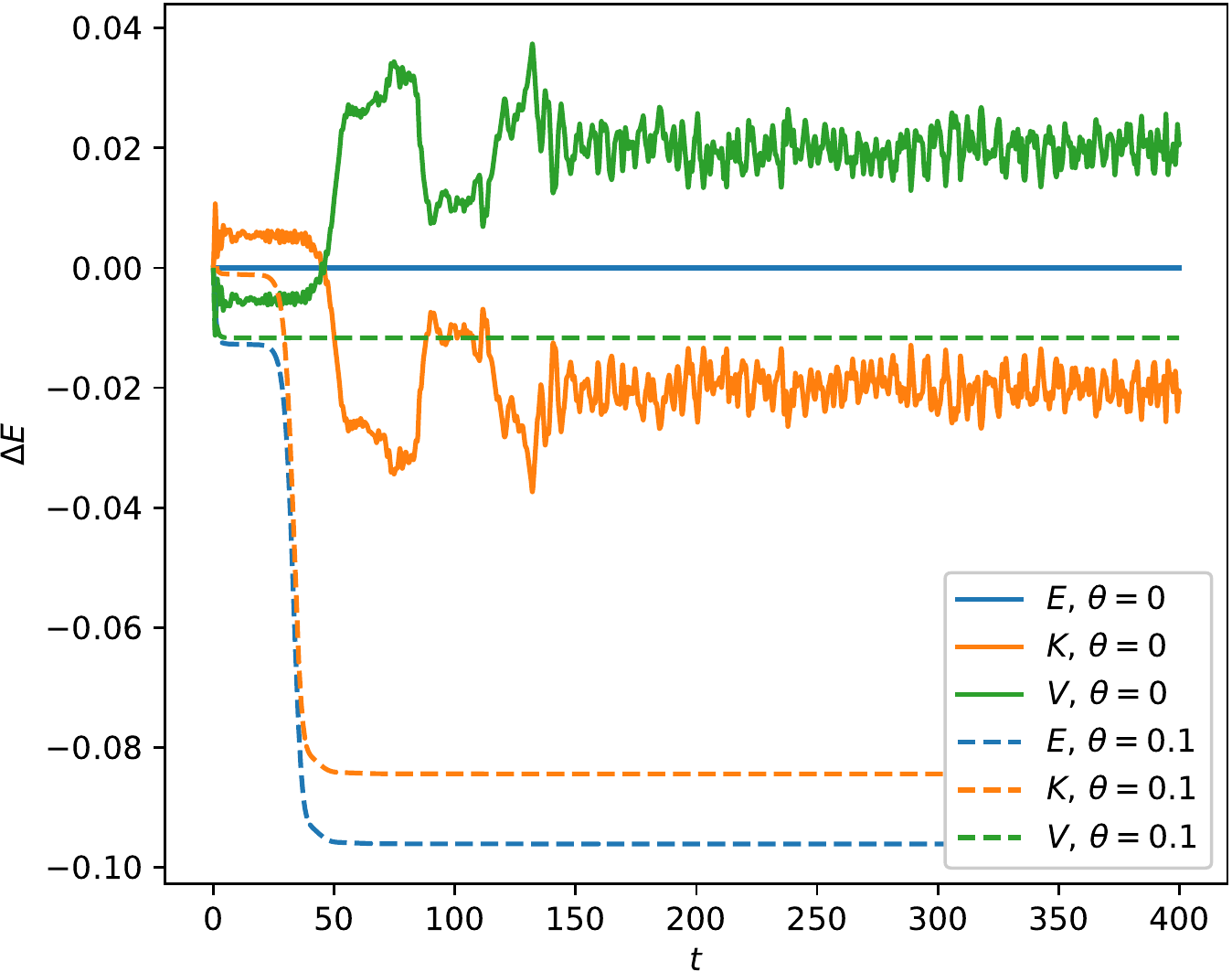}\label{e_010}}
  \subfigure[Position, $\Gamma(0)\approx (1,0,0)$]{\includegraphics[scale=0.45]{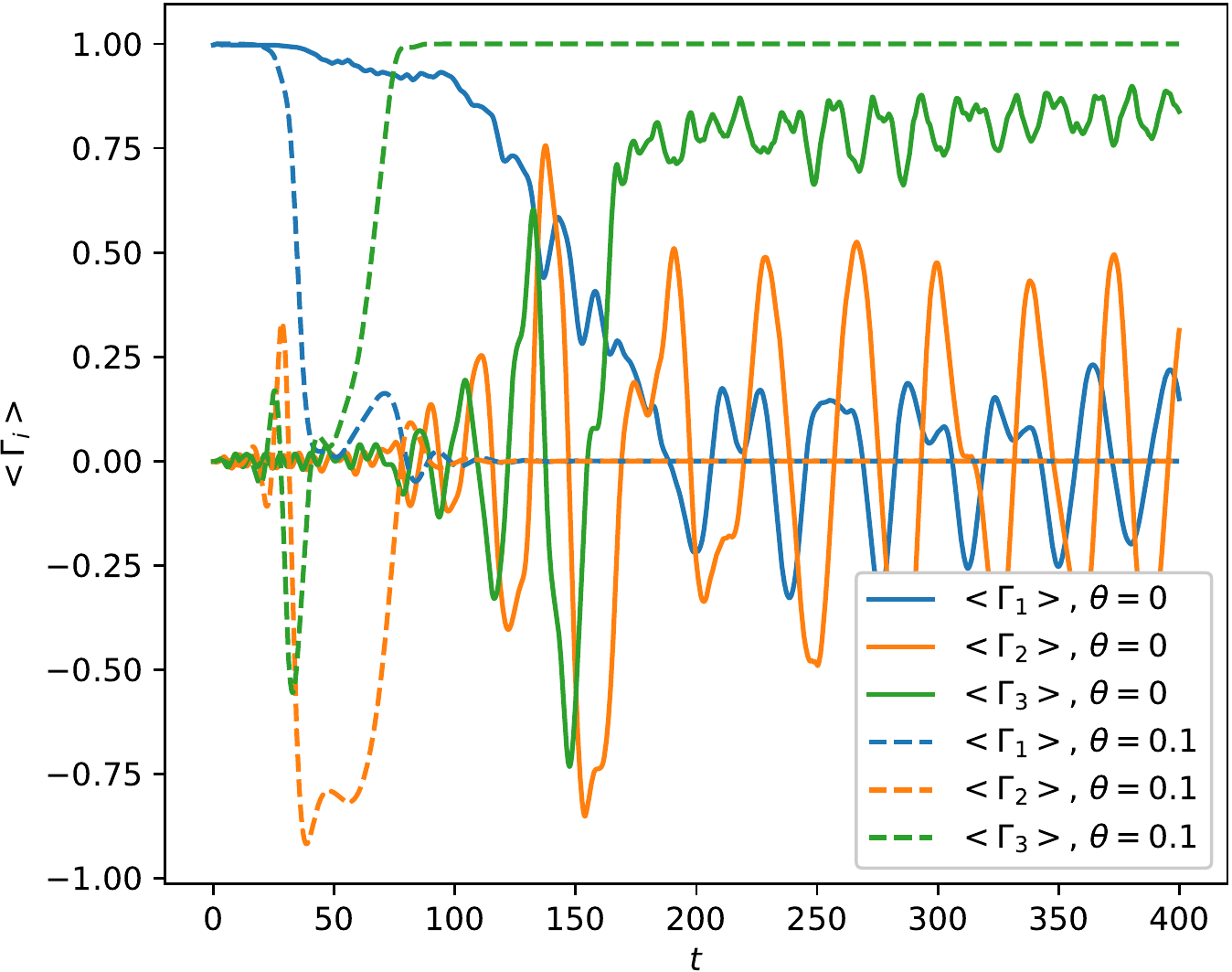}\label{g_100}}
\hspace{10pt}
  \subfigure[Energy, $\Gamma(0)\approx (1,0,0)$]{\includegraphics[scale=0.45]{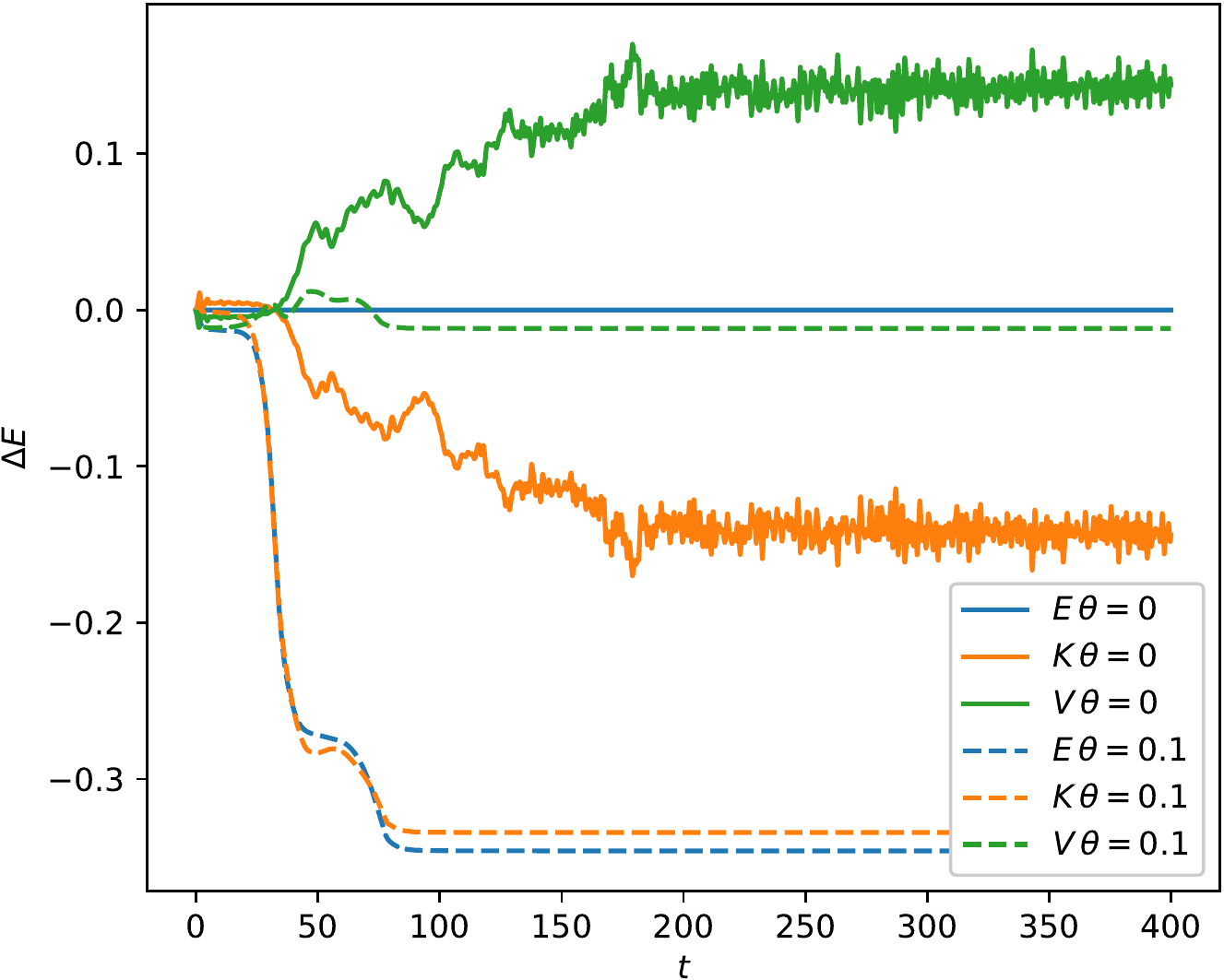}\label{e_100}}
  \subfigure[Position, $\Gamma(0)\approx (1,1,1)$ ]{\includegraphics[scale=0.45]{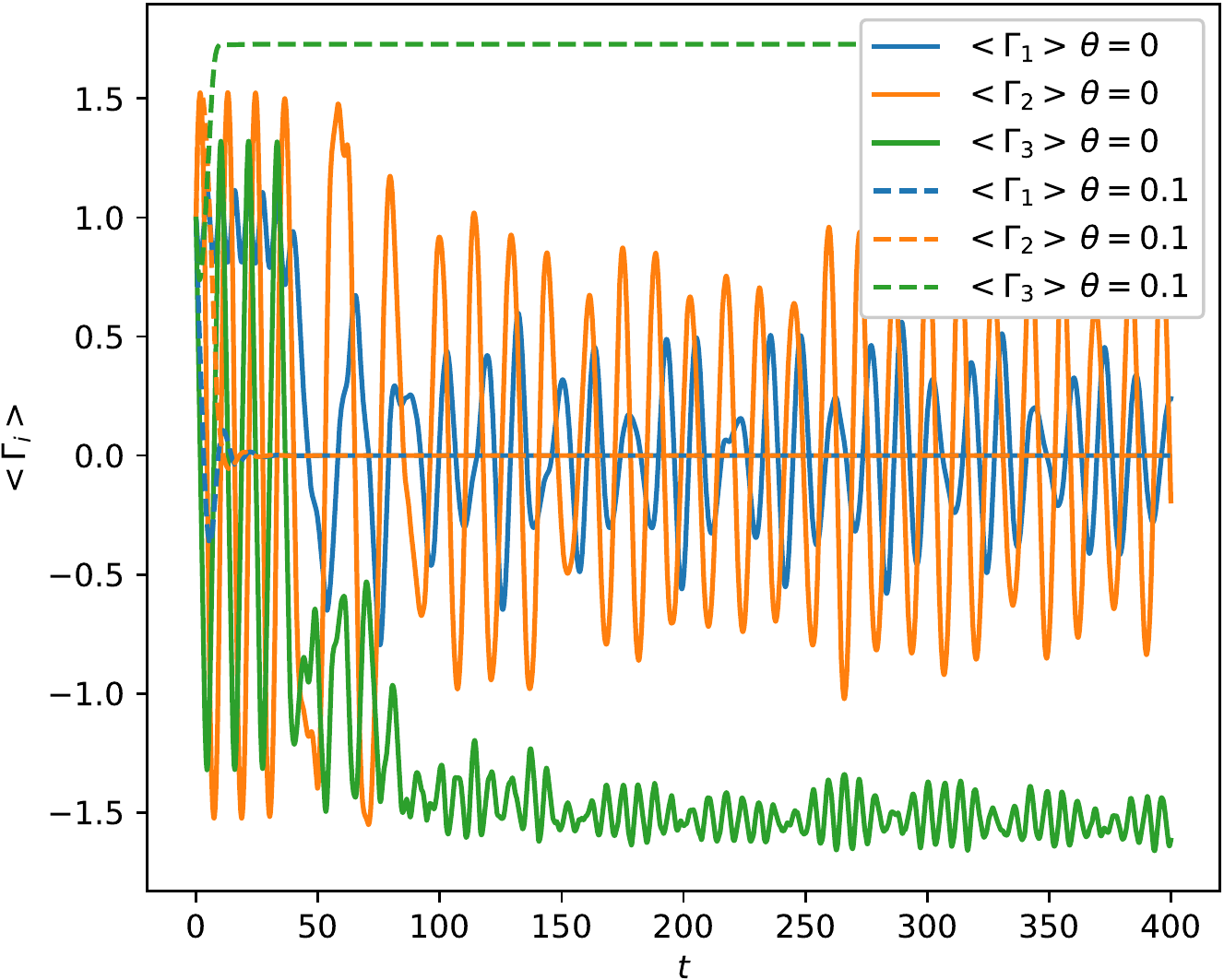}\label{g_111}}
\hspace{10pt}
  \subfigure[Energy, $\Gamma(0)\approx (1,1,1)$]{\includegraphics[scale=0.45]{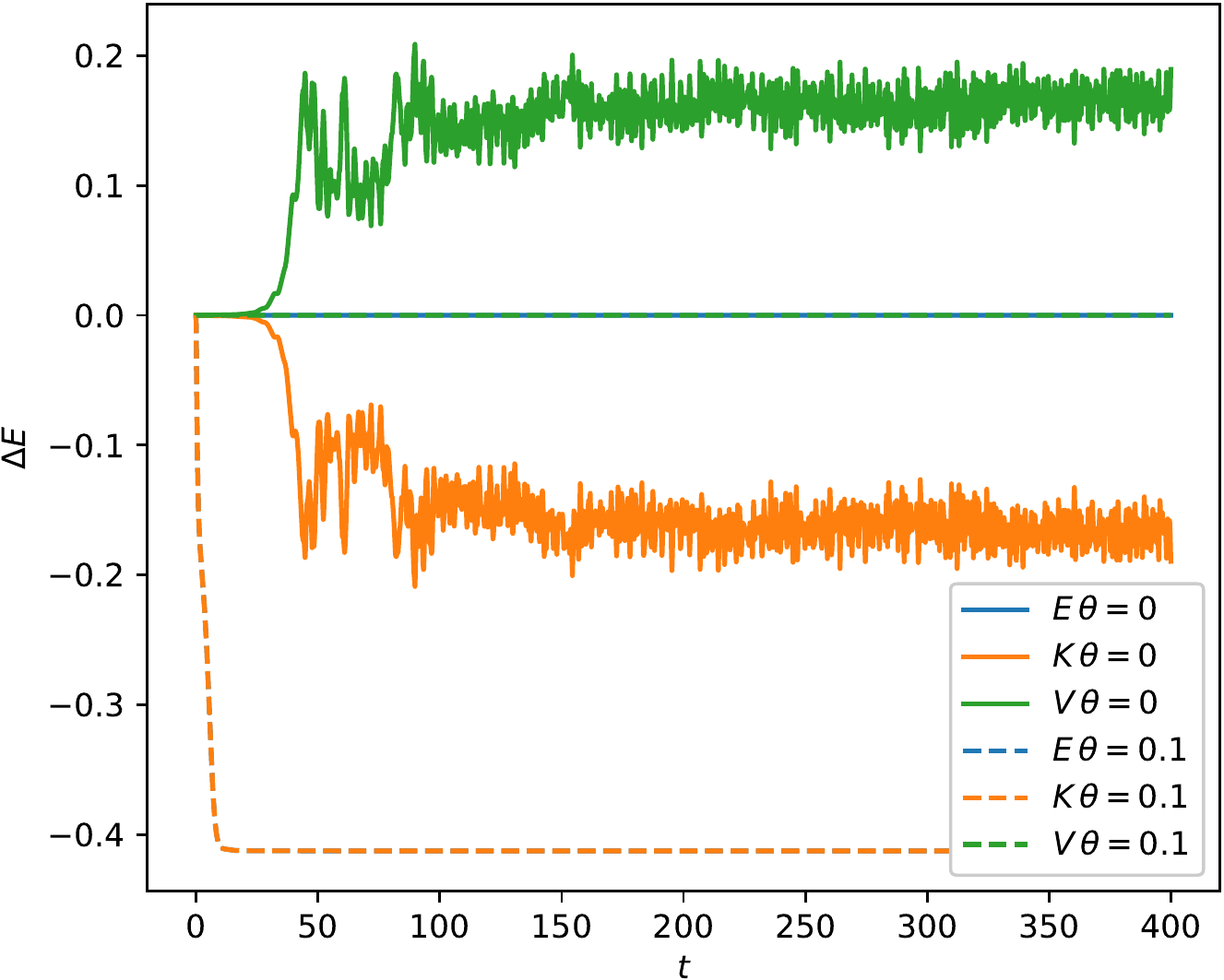}\label{e_111}}
  \caption{This figure shows the time evolution of the averged spins (\ref{g_010}, \ref{g_100}, \ref{g_111}) and energy (\ref{e_010}, \ref{e_100}, \ref{e_111}) in a $20\times 20$ heavy top lattice, starting from different initial conditions, with dissipation (dashed lines) or without dissipation (bold lines). 
The blue, orange and green lines in the left panel correspond to the $\Gamma_1$, $\Gamma_2$ and $\Gamma_3$-components of the spins averaged over the lattice respectively and the same colors in the right panel correspond to the total, kinetic and potential energy.
  We observe metastability of the $\Gamma_1$ and $\Gamma_2$ axes in panels \ref{g_100} and \ref{g_010} in the dissipative case, and a partial synchronisation around the $\Gamma_3$-axis in the non-dissipative case in \ref{g_010}, \ref{g_100} and \ref{g_111}. 
  }

  \label{fig:HT-stab-diss}
\end{figure}

The second observation that we made is the {\em partial synchronisation} phenomenon seen in the non-dissipative simulations (see the bold lines in \ref{g_010}, \ref{g_100} and \ref{g_111}).
The corresponding energy (total, kinetic and potential) plots are given by the bold lines in \ref{e_010}, \ref{e_100} and \ref{e_111} to demonstrate the validity of the simulations. 
In all of the cases considered, we observe that after some time, the spins relax to a state where on average, it oscillates closely around the lowest energy configuration (i.e. ferromagnetic state along the $\Gamma_3$-axis), despite the absence of dissipation. We call this phenomenon ``partial synchronisation''.
This result is rather surprising as, by Liouville's theorem, the volume of the phase space is conserved by the dynamics, so by the Poincar\'e recurrence theorem, any states starting near the equilibrium should return sufficiently close to it after a finite time.
However, in our simulations, the trajectories after some time seem to get stuck in another part of the phase space without returning to a region close to the initial conditions. 
This is very counter-intuitive from the original dynamics of the rigid body, where an initial condition near the unstable (middle) axis will eventually come back near it, after traversing a long trajectory on the momentum sphere. 
One possible avenue of investigation of this phenomenon is to study the local dynamics of the lattice where almost periodic motions were observed (see the videos in \url{http://wwwf.imperial.ac.uk/~aa10213/}). This phenomenon may also be tied to the complex interaction between the $\mathbf \Gamma$ and $\mathbf \Pi$ variables and the non-compactness of the phase space.
However, we will not study this further in the present work and instead leave it for future investigations. 

%%%%%%%%%%%%%%%%%%%%%%%%%%%%%%%%%%%%%%%%%%%%
\section{Temperature phase transitions}\label{PT-section}

In this last section, we will study phase transitions in the two examples that we constructed above, namely, the rigid body network and the heavy top network.
There exist various types of phase transitions, but here we will focus on second-order phase transitions that arise from varying the temperature of the system. This is characterized by a transition from an orderly state of the spins, measured by how much they are aligned with each another, to a completely disordered state.
In order to detect a phase transition, one can either apply a mean field approximation of the model and try to solve it analytically or perform direct numerical simulations of the underlying dynamics of the lattice. 
Many other methods are available but are out of the scope of this first investigation. 

\subsection{Mean field approximation}
The mean field approximation relies on the assumptions that (1) the microscopic system is identical at each node and (2) the spins (momentum or position) of the rigid bodies in statistical equilibrium are close to its mean. 
This approximation is increasingly accurate if each site has more neighbours, which is the case for high dimensional lattices. 
In two dimensions, the approximation fails to properly assess the critical temperature and the corresponding critical exponents, but still, give a good indication of the presence of a phase transition. 

\subsubsection{Mean field approximation of the rigid body network}
We assume that the system is identitcal at each node, i.e. $\mathbb I_i = \mathbb I$ for all $i=1, \ldots, N$, and that the interactions between the neighbours are identical, that is, $\mathbb J_{ij} = \mathbb J$ for all $i,j = 1, \ldots, N$ for the mean field approximation to be valid. Now, define the {\em averaged momentum}
\begin{align}
\langle \boldsymbol \Pi \rangle := \frac1N \sum_{i=1}^N \int_{\boldsymbol{\mathcal O}} \mathbf \Pi_i \, \mathbb P_{\infty} (\overline {\mathbf \Pi}) d \overline {\mathbf \Pi}, \quad \mathbb P_{\infty} (\overline {\mathbf \Pi}) = Z_{RB}^{-1} e^{-\beta h(\overline {\mathbf \Pi})},
\end{align}
where $\beta = \frac{2 \theta}{\sigma^2} = T^{-1}$ is the inverse temperature, $\mathbb P_{\infty}(\cdot)\,d\overline {\mathbf \Pi}$ is the Gibbs measure \eqref{P_infty} and $\boldsymbol{\mathcal O} = \mathcal O_1 \times \cdots \times \mathcal O_N$ is the total coadjoint orbit. Since the system is assumed to be identical at each node, we have $\mathcal O_i = \mathcal O$ for all $i=1, \ldots, N$.

Since we assume that the spins $\mathbf \Pi_i = \Braket{\boldsymbol \pi} + \delta \mathbf \Pi_i$ are close to the mean $\langle \boldsymbol \pi \rangle$, we linearise the interaction Hamiltonian around $\Braket{\boldsymbol \Pi}$ to get 
\begin{align}
      h^{\text{int}}(\overline {\mathbf \Pi})&=  -\sum_{i} \sum_{j \sim i} \frac{1}{2d} ( \boldsymbol \Pi_i - \Braket{\boldsymbol \Pi} + \Braket{\boldsymbol \Pi}) \cdot \mathbb J (\boldsymbol \Pi_j-\Braket{\boldsymbol \Pi}+\Braket{\boldsymbol \Pi}) \nonumber\\
      &=  -\frac{1}{2d} \sum_{i} \sum_{j \sim i} \left(\delta \mathbf \Pi_i \cdot \mathbb J \, \delta \mathbf \Pi - \frac{1}{2d} \left ( \Braket{\boldsymbol \Pi } \cdot \mathbb J \boldsymbol\Pi_j +\boldsymbol\Pi_i \cdot \mathbb J \Braket{\boldsymbol\Pi}\right) \right) - \frac12 \sum_i \Braket{\boldsymbol\Pi}\cdot \mathbb J \Braket{\boldsymbol\Pi}\, \nonumber\\
&\approx -\sum_{i} \Braket{\boldsymbol \Pi } \cdot \mathbb J \boldsymbol\Pi_i, \nonumber
\end{align}
where the first term in the second line is neglected since it is quadratic in $\delta \mathbf \Pi_i$ and is therefore very small, and the last term is also neglected since it is constant and does not contribute to the dynamics.
We therefore obtain the complete mean field Hamiltonian
\begin{align}
  h_{\rm mf} =\sum_{i} \left( \frac12  \boldsymbol \Pi_i  \cdot \mathbb I^{-1} \boldsymbol\Pi_i  - \Braket{\boldsymbol \Pi } \cdot \mathbb J \boldsymbol\Pi_i \right) \, . 
\end{align}
Notice that the kinetic energy is still exact. 
From this Hamiltonian, the Gibbs distribution and the partition function take a simpler form,
\begin{align}
\mathbb P_{\infty}^{\text{mf}} (\overline {\mathbf \Pi}) = \frac{1}{Z_{RB}^{\text{mf}}} e^{-\beta h_{\text{mf}}(\overline {\mathbf \Pi})}, \quad 
  Z^{\rm mf}_{RB}= \left (\int_{\mathcal O} e^{-\beta \left ( \frac12 \boldsymbol \Pi\cdot \mathbb I^{-1} \boldsymbol \Pi - \boldsymbol \Pi\cdot  \mathbb J\Braket{\boldsymbol \Pi }\right )}d\boldsymbol \Pi\right)^N\,,
\end{align}
and one can check that the average momentum $\Braket{\boldsymbol \Pi}$ simplifies to
\begin{align}
  \Braket{\boldsymbol \Pi} = \frac{\int_{\mathcal O} \mathbf \Pi \, e^{-\beta \left ( \frac12 \boldsymbol \Pi\cdot \mathbb I^{-1} \boldsymbol \Pi - \boldsymbol \Pi\cdot  \mathbb J\Braket{\boldsymbol \Pi }\right )}d\boldsymbol \Pi}{\int_{\mathcal O} e^{-\beta \left ( \frac12 \boldsymbol \Pi\cdot \mathbb I^{-1} \boldsymbol \Pi - \boldsymbol \Pi\cdot  \mathbb J\Braket{\boldsymbol \Pi }\right )}d\boldsymbol \Pi}, 
\label{mean-momentum}
\end{align}
which is now an implicit equation for the order parameter $\Braket{\boldsymbol \Pi}$. 
This equation is difficult to solve analytically, as it involves integrals over the momentum sphere but can be numerically estimated using Monte-Carlo integration. 
We will display the numerical approximation of $\Braket{\boldsymbol \Pi}$ in the next section, compared to the full simulations.

\subsubsection{Mean field approximation of the heavy top network}
In a similar fashion, we can derive the mean field approximation of the heavy top network. In this case, we take our order parameter to be the {\em averaged position}, defined by
\begin{align}
\langle \boldsymbol \Gamma \rangle := \frac1N \sum_{i=1}^N \int_{\boldsymbol{\mathcal O}_1} \int_{\boldsymbol{\mathcal O}_2} \mathbf \Gamma_i \, \mathbb P_{\infty} (\overline {\mathbf \Pi}, \overline {\mathbf \Gamma}) d\overline {\mathbf \Pi} d \overline {\mathbf \Gamma}, \quad \mathbb P_{\infty} (\overline {\mathbf \Pi}, \overline {\mathbf \Gamma}) = Z_{HT}^{-1} e^{-\beta h(\overline {\mathbf \Pi}, \overline {\mathbf \Gamma})},
\end{align}
where $\boldsymbol{\mathcal O}_1 = \mathcal O_{1,1} \times \cdots \times \mathcal O_{N,1}$ and $\boldsymbol{\mathcal O}_2 = \mathcal O_{1,2} \times \cdots \times \mathcal O_{N,2}$ are coadjoint orbits corresponding to level sets of the Casimirs $C_{i,1}$ and $C_{i,2}$ respectively. Again, since we assume the system to be identical at each node, we take $\mathcal O_{i,1} = \mathcal O_1$ and $\mathcal O_{i,2} = \mathcal O_2$ for all $i=1, \ldots, N$.

The partition function in the mean field approximation is found to be
\begin{align}
  Z_{HT}^{\rm mf} = \left( \int_{\mathcal O_1 \times \mathcal O_2} e^{-\beta \left(\frac12 \mathbf \Pi \cdot \mathbf \Omega - \mathbf \Gamma \cdot (\mathbb J \, \Braket{ \boldsymbol \gamma} )\right)} d \mathbf \Pi \, d \mathbf \Gamma \right)^N\, , 
\end{align}
and the equation for $\Braket{\boldsymbol \Gamma}$ simplifies to
\begin{align} \label{mean-position}
\langle \boldsymbol \Gamma \rangle &=
\frac{\int_{\mathcal O_2} \mathbf \Gamma \,e^{\beta \boldsymbol  \Gamma \cdot (\mathbb J \, \Braket{\boldsymbol \gamma})} \left( \int_{\mathcal O_1} e^{-\beta \frac12 \mathbf \Pi \cdot \mathbf \Omega} \, d \mathbf \Pi  \right) d \mathbf \Gamma
}
{\int_{\mathcal O_2} e^{\beta \mathbf \Gamma \cdot (\mathbb J \, \Braket{\boldsymbol  \gamma})} \left( \int_{\mathcal O_1} e^{-\beta \frac12 \mathbf \Pi \cdot \mathbf \Omega} \, d \mathbf \Pi  \right) d \mathbf \Gamma
}.
\end{align}

\begin{remark}  \label{mf-equivalence}
Notice that if one chooses $\mathbb I = \mathrm{diag}(1,1,1)$ so that $\boldsymbol \Omega = \boldsymbol \Pi$, the integral over the $\mathbf \Pi$-variable cancels out so equation \eqref{mean-position} becomes equivalent to \eqref{mean-momentum}.  
\end{remark}

\subsection{Numerical simulations}

The simplest (but computationally expensive) way to detect phase transitions in lattices is by a direct simulation of the full stochastic equation, sampling from the Gibbs measure. 
This is equivalent to a classical Monte-Carlo simulation of the Ising model for example but in the more general setting of continuous spins on coadjoint orbits. 
The phase transitions we observe are all second-order and is detected as we increase the temperature $T = \frac{\sigma^2}{2 \theta}$.
The order parameter for the rigid body and the heavy top network are given by the averaged momentum and position respectively.

\begin{figure}[htpb]
  \centering
  \subfigure[$\mathbb I=\mathrm{diag}(1,1,1)$ and $\mathbb J=\mathrm{diag}(1,2,3)$]{\includegraphics[scale=0.65]{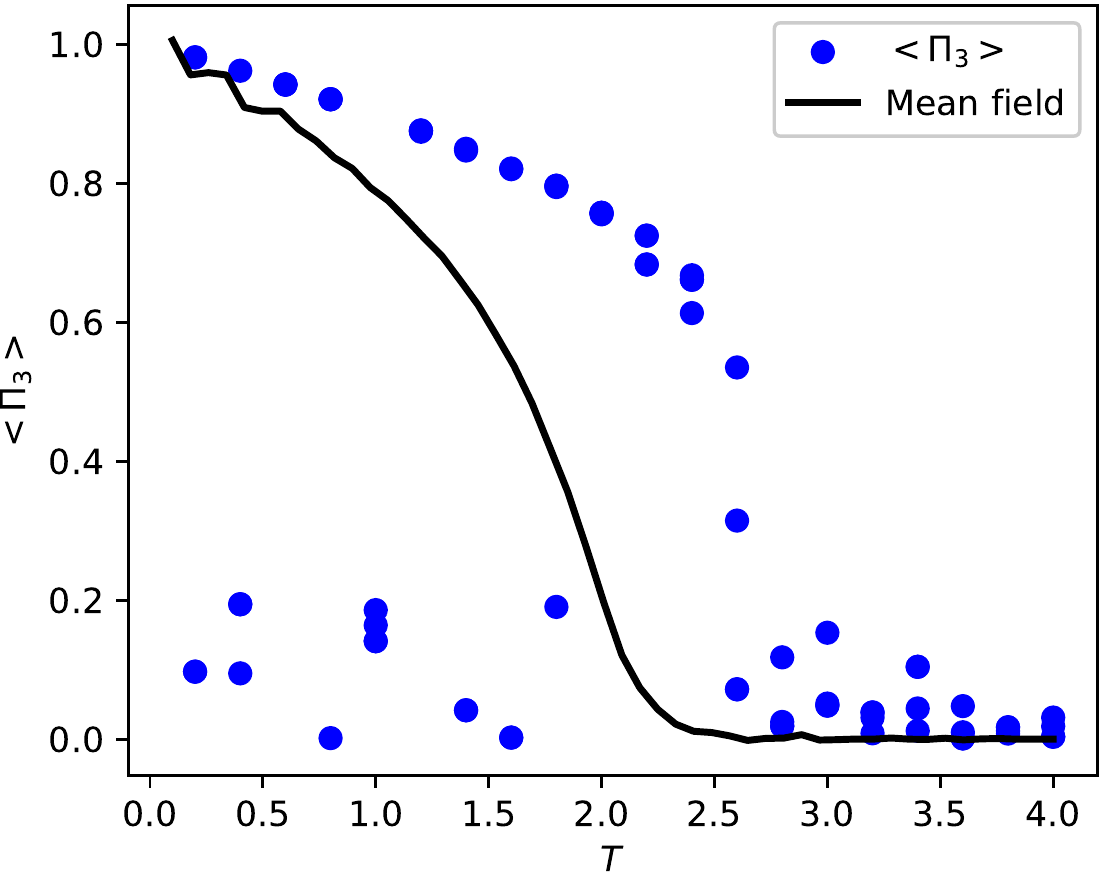}\label{PT-RB1}}
  \subfigure[$\mathbb I=\mathrm{diag}(1,2,3)$ and $\mathbb J=\mathrm{diag}(1,1,1)$]{\includegraphics[scale=0.65]{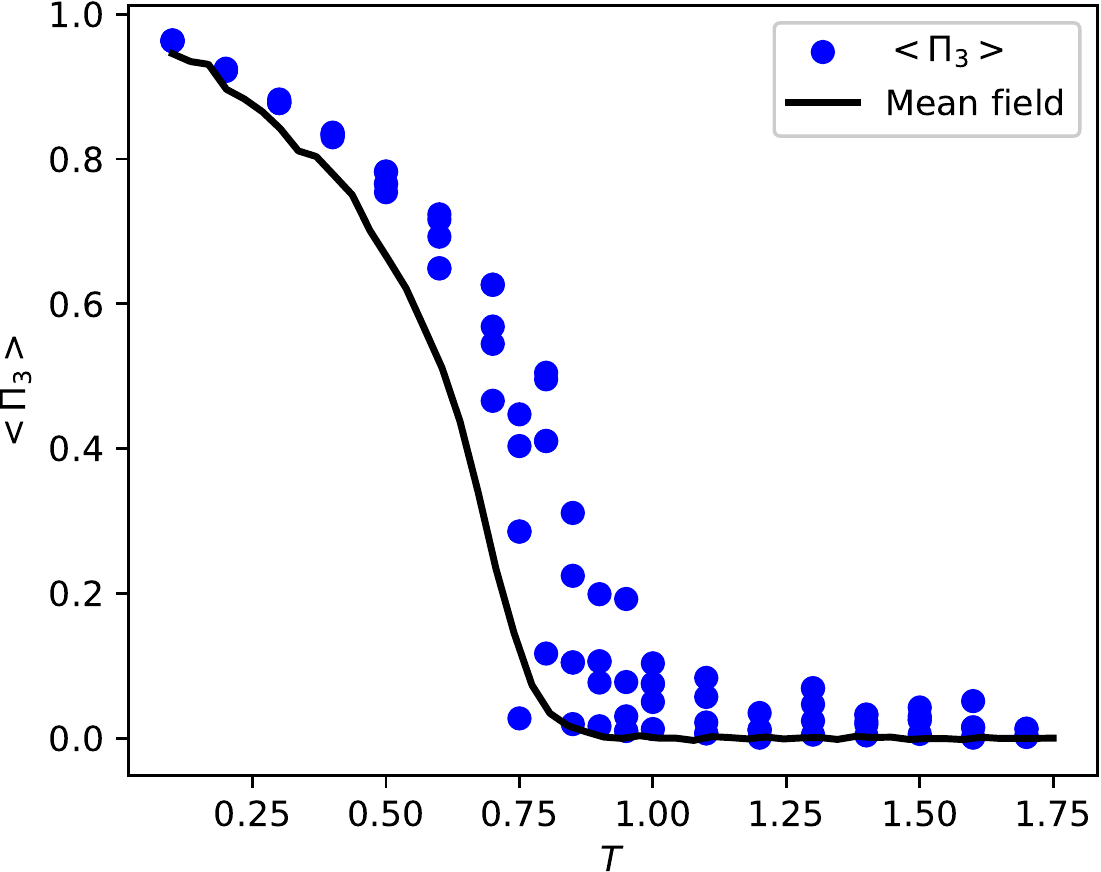}\label{PT-RB2}}
  \caption{The two panels in this figure show the temperature phase transition of the rigid body lattice for two different cases.  The blue dots correspond to averaged magnetisation computed via direct simulation of the full stochastic equation \eqref{SEP-Diss-N} and the black line corresponds to the mean field approximation \eqref{mean-momentum}.
  Both cases exhibit a second order phase transition, characterised by a sudden loss of magnetisation (magnitude of the order parameter) at the critical temperature. }
  \label{fig:RB1}
\end{figure}

We summarise here the phase transition behaviours of the four systems we simulated, displayed in figures \ref{fig:RB1} and \ref{fig:HT-PT}. 
\begin{enumerate}
  \item{\it Rigid body with $\mathbb I=\mathrm{diag}(1,1,1)$ and $\mathbb J=\mathrm{diag}(1,2,3)$ (Figure \ref{PT-RB1}) }
    This case corresponds to the classical Heisenberg model with anisotropic interactions (also known as $XYZ$-model). 
    Notice that some simulations have anomalous low magnetisation at low temperatures. 
    This can be explained by the fact that in some simulations, the system got stuck in a state consisting of large regions of opposite spins.
A longer simulation would be necessary to see these two domains merge into one to reach the minimum energy state.  
  \item {\it  Rigid body with $\mathbb I=\mathrm{diag}(1,2,3)$ and $\mathbb J=\mathrm{diag}(1,1,1)$.(Figure \ref{PT-RB2}) } 
    This case corresponds to a massive isotropic Heisenberg model with non-uniform mass. 
    Our simulations did not get stuck in the state consisting of opposite spins, as seen in the previous case, and the mean field approximation is closer to the direct simulations. 
    The reason that the mean field approximation is more precise in this case can be explained by the fact the interaction term, which is approximated, is isotropic but the kinetic term, which is exact, is anisotropic. 
  \item {\it  Heavy top with $\mathbb I=\mathrm{diag}(1,1,1)$ and $\mathbb J=\mathrm{diag}(1,2,3)$. (Figure \ref{PT-HT1}) } 
    This case also corresponds to the classical Heisenberg model, at least from the mean field point of view (by remark \ref{mf-equivalence}), even if the full dynamics is different. Again, we observe solutions getting stuck in the state consisting of opposite spins and hence the existence of anomalies. 
  \item {\it  Heavy top with $\mathbb I=\mathrm{diag}(1,2,3)$ and $\mathbb J=\mathrm{diag}(1,1,1)$. (Figures \ref{PT-HT2},\ref{PT-HT3} and \ref{PT-HT4})} 
    This last case is the most interesting as it shows a more complex phase transition, which we will describe in details below. 
\end{enumerate}

\subsubsection{Triple-humped phase transition in the heavy top network}
We now discuss the last case in more details. 
From the direct numerical simulations, we observed a phase transition from a strongly magnetised state (large value of $\Braket{\boldsymbol \Gamma}$) along the $\Gamma_3$-axis to a non-magnetized state (that is, $\Braket{\boldsymbol \Gamma} = 0$) as we increased the temperature. But between these two states, we also observed {\em two intermediate phase transitions}, where  magnetisation along the other two axes also occur before becoming completely disordered. We call this a `trimple-humped' phase transition.
These intermediate phase transitions indicate that these unstable ferromagnetic equilibria along the $\Gamma_1$ and $\Gamma_2$ axes can still support magnetisation, which, in statistical physics is generically called a {\em meta-stable state}. We note that this phenomenon is not captured in our mean-field simulations.

From our linear stability analysis in section \ref{section-HT}, we observed that for small values of $\lambda_1$, equivalent to a small ratio $c_1/c_2$, the ferromagnetic equilibria along the $\Gamma_1$ and $\Gamma_2$ axes are unstable, but are close to being linearly stable. 
This is compatible with the observation that  these intermediate phase transitions exist only for small values of $c_1/c_2$. 
For example, in panel \ref{PT-HT4} we took $c_1/c_2 = 1.15$ and saw that the third phase transition is almost negligible. 
Increasing the value of this ratio further will also remove the other intermediate phase transition. On the other hand, decreasing $c_1/c_2$, the intermediate phase transitions persist and furthermore, will shift their respective critical temperature towards zero. 
In figure \ref{PT-HT3}, where we took $c_1/c_2 = 0.8$, we see that both critical temperatures $T_1$ and $T_2$ are smaller compared to those in figure \ref{PT-HT2}, where we took $c_1/c_2 = 1$. 

We will not investigate this phase transition further here, and leave its mathematical understanding as a challenging open problem. 

\begin{figure}[htpb]
  \centering
  \subfigure[$\mathbb I=\mathrm{diag}(1,1,1)$, $\mathbb J=\mathrm{diag}(1,2,3)$, $\frac{c_1}{c_2}=1$]{\includegraphics[scale=0.67]{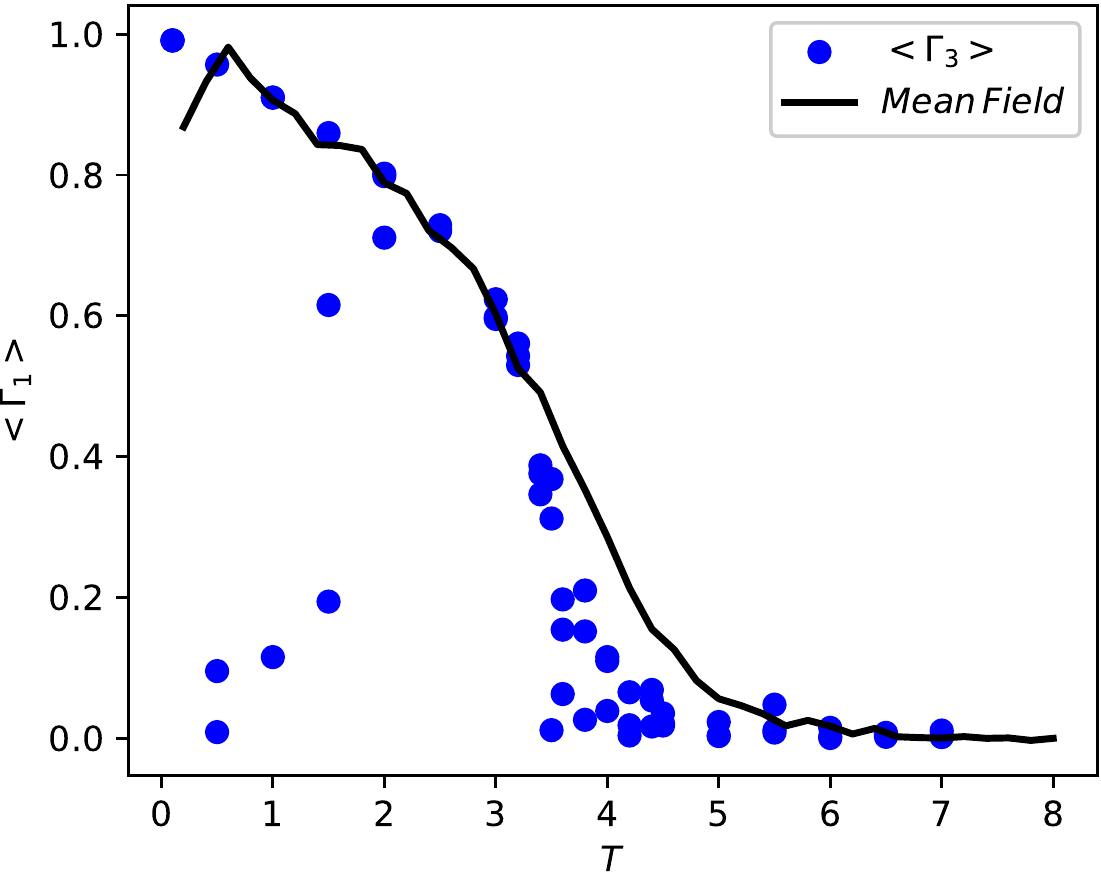}\label{PT-HT1}}
  \subfigure[$\mathbb I=\mathrm{diag}(1,2,3)$, $\mathbb J=\mathrm{diag}(1,1,1)$, $\frac{c_1}{c_2}=1$]{\includegraphics[scale=0.67]{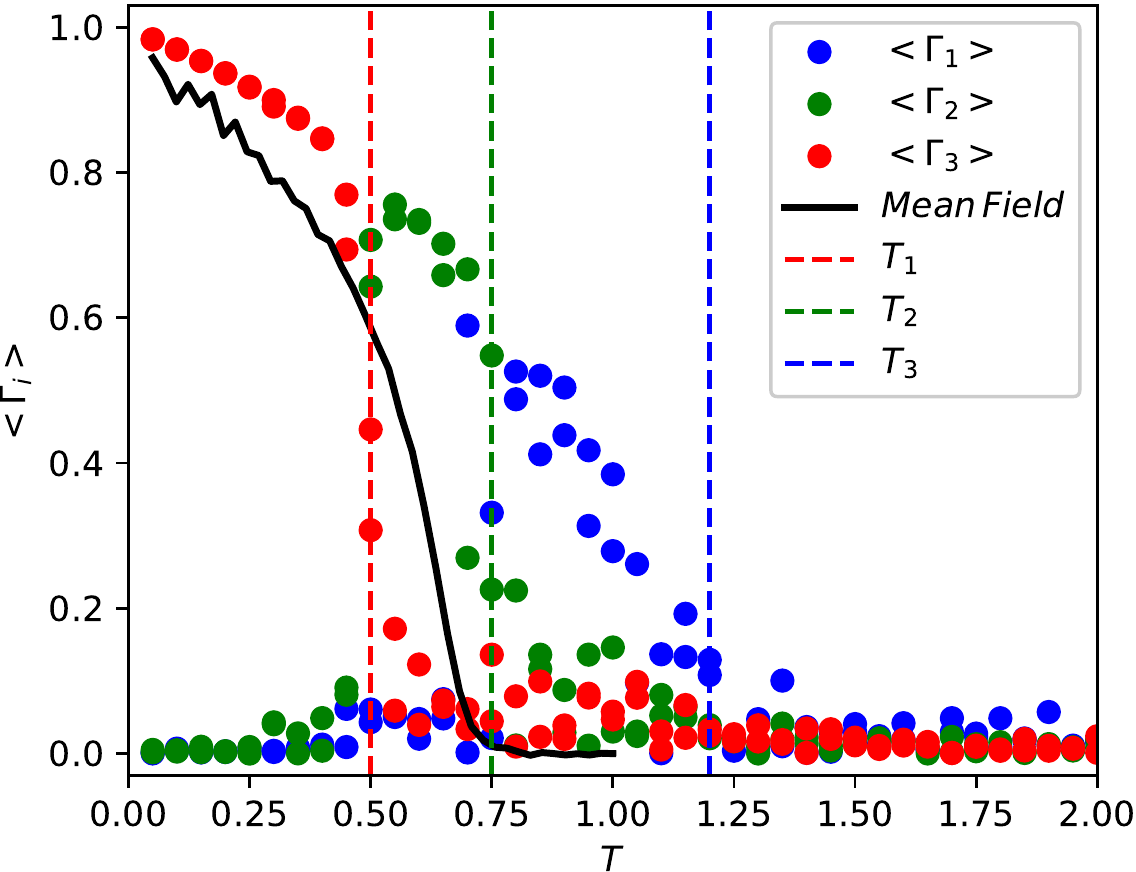}\label{PT-HT2}}
  \subfigure[$\mathbb I=\mathrm{diag}(1,2,3)$, $\mathbb J=\mathrm{diag}(1,1,1)$,$\frac{c_1}{c_2}=0.8$]{\includegraphics[scale=0.67]{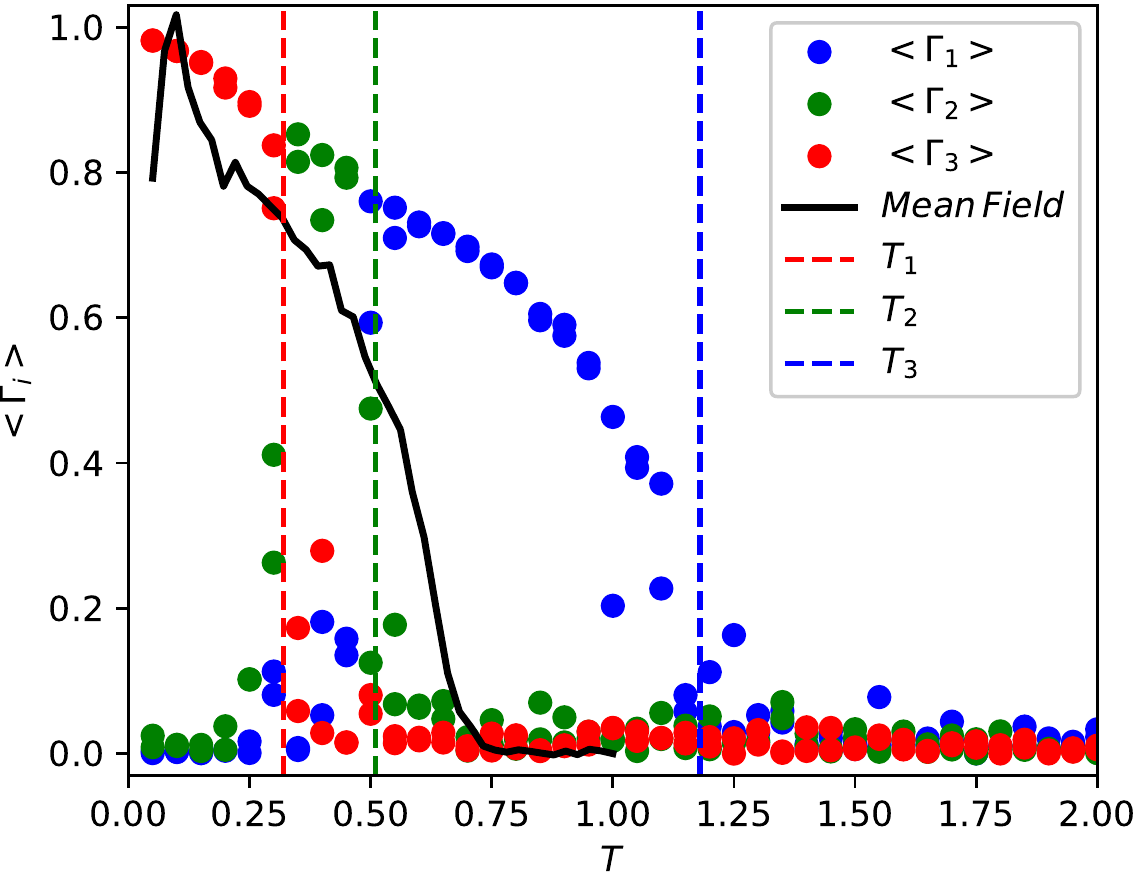}\label{PT-HT3}}
  \subfigure[$\mathbb I=\mathrm{diag}(1,2,3)$, $\mathbb J=\mathrm{diag}(1,1,1)$,$\frac{c_1}{c_2}= 1.15$]{\includegraphics[scale=0.67]{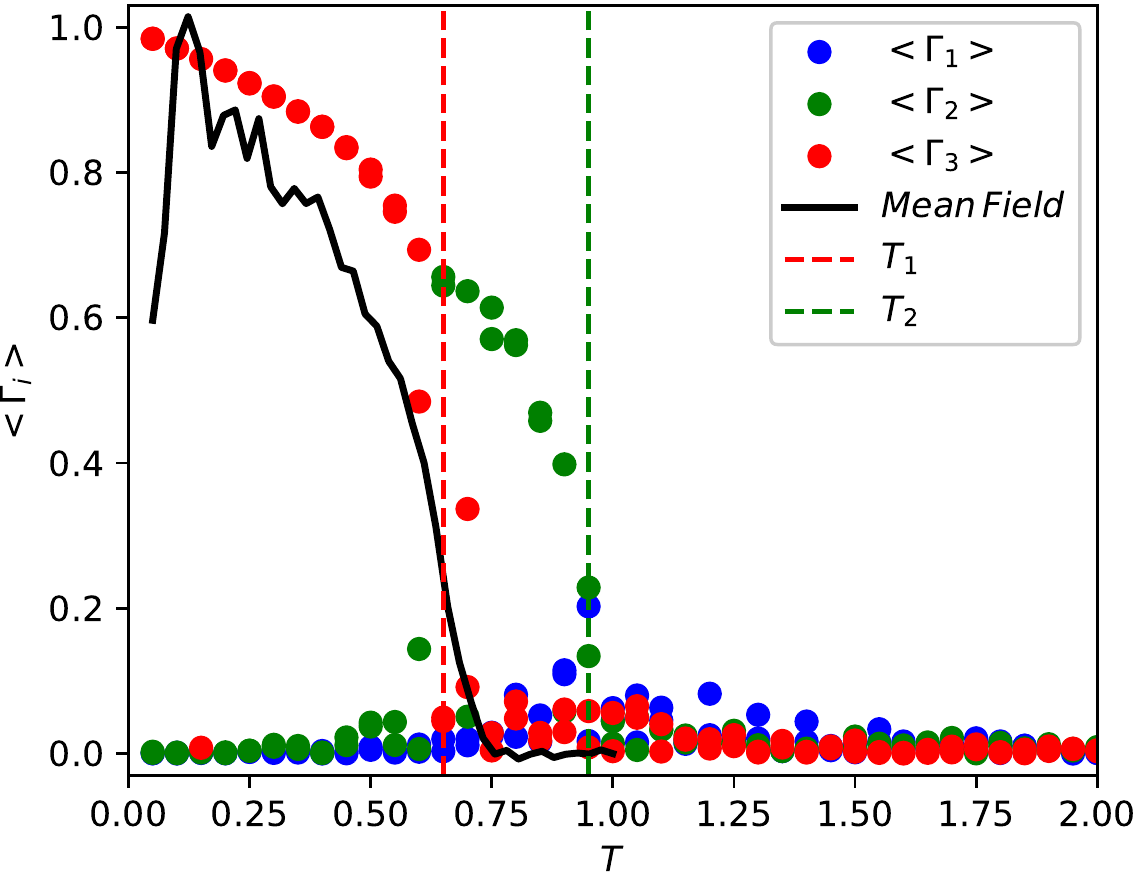}\label{PT-HT4}}
  \caption{The four panels in this figure show phase transitions in the heavy top lattice. In figure \ref{PT-HT1}, the blue dots represent the $\Gamma_3$-components of the averaged position $\Braket{\boldsymbol \Gamma}$ and in \ref{PT-HT2}-\ref{PT-HT4}, the blue, green and red dots represent the $\Gamma_1$, $\Gamma_2$ and $\Gamma_3$-components of the averaged position respectively. The black line in all four figures correspond to the mean field approximation \eqref{mean-position}.
The top left panel \ref{PT-HT1} shows a phase transition similar to the rigid body phase transition, and the other panels \ref{PT-HT2}-\ref{PT-HT4} display a `triple-humped' phase transition. The corresponding critical temperatures are indicated by the vertical dashed lines.
%By increasing the temperature, all the equilibria position become meta-stable to support a magnetisation, which is finally destroyed at large temperature. 
  In \ref{PT-HT2}-\ref{PT-HT4}, we also varied the Casimirs (where $c_1 = \boldsymbol \Pi \cdot \boldsymbol \Gamma$ and $c_1 = \|\boldsymbol \Gamma\|^2$) to show that the three critical temperatures depend on their values, and may even disappear for large enough $c_1/c_2$.  
  Due to the high dimension and non-compactness of the phase space, the mean field approximation requires intensive computation to be precise, explaining the visible errors, especially at low temperatures.
}
  \label{fig:HT-PT}
\end{figure}

%%%%%%%%%%%%%%%%%%%%%%%%%%%%%%%%%%%%%%%%
\section{Conclusion and outlook} \label{conclusion}

In this paper, we established a link between geometric mechanics and statistical mechanics by constructing a network of interacting Lie-Poisson system on $\mathfrak g^*$ with noise and dissipation that preserves the coadjoint orbits, which gave us a canonical ensemble for the system in statistical equilibrium. For the construction of the system, we considered two types of coupling, one where the neighbours are coupled in the reduced space and the other where the neighbours are coupled directly on the configuration group by considering a representation of the group on a given vector space. The first approach yielded a direct generalisation of the classical Heisenberg model to include general symmetry groups with an additional kinetic energy term and the second approach gave a system that is possibly new. In the special case where $\mathfrak g$ is compact and semi-simple with Hamiltonian of the form kinetic + potential energy, we were able to find the equilibrium solutions of the purely deterministic system as the eigenvectors of the underlying extended graph Laplacian and found that (1) for the momentum-coupled case, the equilibrium solution corresponding to the lowest and highest eigenvalue of the graph Laplacian are nonlinear stable and (2) for the position-coupled case, the equilibrium solution corresponding to the lowest eigenvalue of the graph Laplacian is nonlinearly stable. Furthermore, we showed that in both cases, these equilibrium solutions can be classified into ferromagnetic and anti-ferromagnetic states. In our numerical simulation of the rigid body lattice and the heavy top lattice, which are the simplest examples of momentum-coupled and position-coupled systems respectively, we observed a second order phase transition, similar to that in the Ising model or in the Heisenberg model. However, in the heavy top network, we also observed a `triple-humped' phase transition, in which the system underwent two intermediate phase transitions before settling down to the lowest energy configuration as we decreased the temperature, which is unusual for simple lattice models.

In future work, we would like to investigate further this new type of phase transition behaviour that we observed for the heavy top network, which we believe is related to the metastability of the intermediate ferromagnetic states. However, even without noise, we numerically observed unusual behaviour of the heavy top network, such as when the spins start close to an arbitrary ferromagnetic state, there is an exchange between the kinetic and potential energy that causes the spins to relax to a state that oscillates closely around the stable ferromagnetic state with lowest potential energy, despite the absence of dissipation. 
So studying the deterministic heavy top network further could also be interesting and may help us understand this phase transition behaviour better. 
This partial synchronisation result deserves a more detailed study, in particular for more general networks, where synchronisation of oscillators are an active subject of research (see for example \cite{barahona2002synchronization} and the many subsequent works). 
Other interesting phenomena can be observed for Heisenberg models on certain types of networks, such as supra-oscillations (see for example\cite{expert2017graph} and references therein).  

Regarding phase transitions, one could also compute the critical exponents and the corresponding universality class of the phase transition seen here, or even doing a more thorough analysis using Landau theory to better understand the dynamics near the critical temperature. We can also investigate different types of phase transitions, for instance by varying the external magnetic field instead of temperature, or even extending our domain from a simple lattice to a general network. 
One may also guess that certain topological phase transitions could even be observed in these systems, such as the Kosterlitz-Thouless transition \cite{kosterlitz1973ordering}. 

As we can see, there are many interesting questions that are open for further investigation which we hope to address in future works.

\subsection*{Acknowledgements}
{\small
AA is grateful to John Gibbon for suggesting him to look further into this research direction. 
AA acknowledges EPSRC funding through award EP/N014529/1 via the EPSRC Centre for Mathematics of Precision Healthcare. ST acknowledges funding through Schr\"odinger scholarship scheme.
}

\bibliographystyle{alpha}
\bibliography{biblio}

\end{document}